\begin{document}
 
\newtheorem{theorem}{Theorem}
\newtheorem{corollary}[theorem]{Corollary}
\newtheorem{prop}[theorem]{Proposition} 
\newtheorem{problem}[theorem]{Problem}
\newtheorem{lemma}[theorem]{Lemma} 
\newtheorem{remark}[theorem]{Remark}
\newtheorem{observation}[theorem]{Observation}
\newtheorem{defin}{Definition} 
\newtheorem{example}{Example}
\newtheorem{conj}{Conjecture} 
\newenvironment{proof}{{\bf Proof:}}{\hfill$\Box$} 
\newcommand{\PR}{\noindent {\bf Proof:\ }} 
\def\EPR{\hfill $\Box$\linebreak\vskip.5mm} 
 
\def\Pol{{\sf Pol}} 
\def\mPol{{\sf MPol}} 
\def\Polo{{\sf Pol}_1} 
\def\PPol{{\sf pPol\;}} 
\def\Inv{{\sf Inv}}
\def\mInv{{\sf MInv}} 
\def\Clo{{\sf Clo}\;} 
\def\Con{{\sf Con}} 
\def\concom{{\sf Concom}\;} 
\def\End{{\sf End}\;}
\def\Sub{{\sf Sub}} 
\def\Im{{\sf Im}} 
\def\Ker{{\sf Ker}\;} 
\def\H{{\sf H}}
\def\S{{\sf S}} 
\def\D{{\sf P}} 
\def\I{{\sf I}} 
\def\Var{{\sf var}} 
\def\PVar{{\sf pvar}} 
\def\fin#1{{#1}_{\rm fin}}
\def\P{{\sf P}} 
\def\Pfin{{\sf P_{\rm fin}}} 
\def\Id{{\sf Id}}
\def\R{{\rm R}} 
\def\F{{\rm F}} 
\def\Term{{\sf Term}}
\def\var#1{{\sf var}(#1)} 
\def\Sg#1{{\sf Sg}(#1)} 
\def\Sgo#1{{\sf Sg}_{\mathrm{old}}(#1)} 
\def\Sgn#1{{\sf Sg}_{\mathrm{new}}(#1)} 
\def\Sgg#1#2{{\sf Sg}_{#1}(#2)} 
\def\Cg#1{{\sf Cg}(#1)}
\def\Cgg#1#2{{\sf Cg}_{#1}(#2)} 
\def\tol{{\sf tol}} 
\def\rbcomp#1{{\sf rbcomp}(#1)}
  
\let\cd=\cdot 
\let\eq=\equiv 
\let\op=\oplus 
\let\omn=\ominus
\let\meet=\wedge 
\let\join=\vee 
\let\tm=\times
\def\ldiv{\mathbin{\backslash}} 
\def\rdiv{\mathbin/}
  
\def\typ{{\sf typ}} 
\def\zz{{\un 0}} 
\def\zo{{\un 1}}
\def\one{{\bf1}} 
\def\two{{\bf2}} 
\def\three{{\bf3}}
\def\four{{\bf4}} 
\def\five{{\bf5}}
\def\pq#1{(\zz_{#1},\mu_{#1})}
  
\let\wh=\widehat 
\def\ox{\ov x} 
\def\oy{\ov y} 
\def\oz{\ov z}
\def\of{\ov f} 
\def\oa{\ov a} 
\def\ob{\ov b} 
\def\oc{\ov c}
\def\od{\ov d} 
\def\oob{\ov{\ov b}} 
\def\rx{{\rm x}}
\def\rf{{\rm f}} 
\def\rrm{{\rm m}} 
\let\un=\underline
\let\ov=\overline 
\let\cc=\circ 
\let\rb=\diamond 
\def\ta{{\tilde a}} 
\def\tz{{\tilde z}}
  
  
\def\zZ{{\mathbb Z}} 
\def\B{{\mathcal B}} 
\def\P{{\mathcal P}}
\def\zL{{\mathbb L}} 
\def\zD{{\mathbb D}}
 \def\zE{{\mathbb E}}
\def\zG{{\mathbb G}} 
\def\zA{{\mathbb A}} 
\def\zB{{\mathbb B}}
\def\zC{{\mathbb C}} 
\def\zM{{\mathbb M}} 
\def\zR{{\mathbb R}}
\def\zS{{\mathbb S}} 
\def\zT{{\mathbb T}} 
\def\zN{{\mathbb N}}
\def\zQ{{\mathbb Q}} 
\def\zW{{\mathbb W}} 
\def\bK{{\bf K}}
\def\C{{\bf C}} 
\def\M{{\bf M}} 
\def\E{{\bf E}} 
\def\N{{\bf N}}
\def\O{{\bf O}} 
\def\bN{{\bf N}} 
\def\bX{{\bf X}} 
\def\GF{{\rm GF}} 
\def\cC{{\mathcal C}} 
\def\cA{{\mathcal A}}
\def\cB{{\mathcal B}} 
\def\cD{{\mathcal D}} 
\def\cE{{\mathcal E}} 
\def\cF{{\mathcal F}} 
\def\cG{{\mathcal G}} 
\def\cH{{\mathcal H}}
\def\cI{{\mathcal I}} 
\def\cL{{\mathcal L}} 
\def\cP{{\mathcal
P}} \def\cR{{\mathcal R}} 
\def\cRY{{\mathcal RY}}
\def\cS{{\mathcal S}} 
\def\cT{{\mathcal T}} 
\def\oB{{\ov B}}
\def\oC{{\ov C}} 
\def\ooB{{\ov{\ov B}}} 
\def\ozB{{\ov{\zB}}}
\def\ozD{{\ov{\zD}}} 
\def\ozG{{\ov{\zG}}}
\def\tcA{{\widetilde\cA}} 
\def\tcC{{\widetilde\cC}}
\def\tcF{{\widetilde\cF}} 
\def\tcI{{\widetilde\cI}}
\def\tB{{\widetilde B}} 
\def\tC{{\widetilde C}}
\def\tD{{\widetilde D}} 
\def\ttB{{\widetilde{\widetilde B}}}
\def\ttC{{\widetilde{\widetilde C}}}
\def\tba{{\tilde\ba}} 
\def\ttba{{\tilde{\tilde\ba}}}
\def\tbb{{\tilde\bb}} 
\def\ttbb{{\tilde{\tilde\bb}}}
\def\tbc{{\tilde\bc}} 
\def\tbd{{\tilde\bd}}
\def\tbe{{\tilde\be}} 
\def\tbt{{\tilde\bt}}
\def\tbu{{\tilde\bu}} 
\def\tbv{{\tilde\bv}}
\def\tbw{{\tilde\bw}} 
\def\tdl{{\tilde\dl}} 
\def\ocP{{\ov\cP}}
\def\tzA{{\widetilde\zA}} 
\def\tzC{{\widetilde\zC}}
\def\new{{\mbox{\footnotesize new}}}
\def\old{{\mbox{\footnotesize old}}}
\def\prev{{\mbox{\footnotesize prev}}}
\def\oo{{\mbox{\sf\footnotesize o}}}
\def\pp{{\mbox{\sf\footnotesize p}}}
\def\nn{{\mbox{\sf\footnotesize n}}} 
\def\oR{{\ov R}}
\def\bA{\mathbf{R}}
  
  
\def\gA{{\mathfrak A}} 
\def\gV{{\mathfrak V}} 
\def\gS{{\mathfrak S}} 
\def\gK{{\mathfrak K}} 
\def\gH{{\mathfrak H}}
  
\def\ba{{\bf a}} 
\def\bb{{\bf b}} 
\def\bc{{\bf c}} 
\def\bd{{\bf d}} 
\def\be{{\bf e}} 
\def\bbf{{\bf f}} 
\def\bg{{\bf g}}
\def\bh{{\bf h}}
\def\bi{{\bf i}} 
\def\bm{{\bf m}} 
\def\bo{{\bf o}} 
\def\bp{{\bf p}} 
\def\bs{{\bf s}} 
\def\bu{{\bf u}} 
\def\bt{{\bf t}} 
\def\bv{{\bf v}} 
\def\bx{{\bf x}}
\def\by{{\bf y}} 
\def\bw{{\bf w}} 
\def\bz{{\bf z}}
\def\ga{{\mathfrak a}} 
\def\oal{{\ov\al}} 
\def\obeta{{\ov\beta}}
\def\ogm{{\ov\gm}} 
\def\oep{{\ov\varepsilon}}
\def\oeta{{\ov\eta}} 
\def\oth{{\ov\th}} 
\def\ovm{{\ov\mu}}
\def\ozero{{\ov0}}
  
  
\def\CCSP{\hbox{\rm c-CSP}} 
\def\CSP{{\rm CSP}} 
\def\NCSP{{\rm \#CSP}} 
\def\mCSP{{\rm MCSP}} 
\def\FP{{\rm FP}} 
\def\PTIME{{\bf PTIME}} 
\def\GS{\hbox{($*$)}} 
\def\ry{\hbox{\rm r+y}}
\def\rb{\hbox{\rm r+b}} 
\def\Gr#1{{\mathrm{Gr}(#1)}}
\def\Grp#1{{\mathrm{Gr'}(#1)}} 
\def\Grpr#1{{\mathrm{Gr''}(#1)}}
\def\Scc#1{{\mathrm{Scc}(#1)}} 
\def\rel{R} 
\def\relo{Q}
\def\rela{S} 
\def\dep{\mathsf{dep}}
\def\Filt{\mathrm{Ft}}
\def\Filts{\mathrm{Fts}} 
\def\Agr{$\mathbb{A}$}
\def\Al{\mathrm{Alg}}
\def\Sig{\mathrm{Sig}}
\def\strat{\mathsf{strat}}
\def\relmax{\mathsf{relmax}}
\def\srelmax{\mathsf{srelmax}}
\def\Meet{\mathsf{Meet}}
\def\amax{\mathsf{amax}}
\def\as{\mathsf{as}}
\def\star{\hbox{$(*)$}}
\def\bmal{{\mathbf m}}
\def\Af{\mathsf{Af}}
\let\sqq=\sqsubseteq

  
\let\sse=\subseteq 
\def\ang#1{\langle #1 \rangle}
\def\angg#1{\left\langle #1 \right\rangle}
\def\dang#1{\ang{\ang{#1}}} 
\def\vc#1#2{#1 _1\zd #1 _{#2}}
\def\tms#1#2{#1 _1\tm\dots\tm #1 _{#2}}
\def\zd{,\ldots,} 
\let\bks=\backslash 
\def\red#1{\vrule height7pt depth3pt width.4pt
\lower3pt\hbox{$\scriptstyle #1$}}
\def\fac#1{/\lower2pt\hbox{$\scriptstyle #1$}}
\def\me{\stackrel{\mu}{\eq}} 
\def\nme{\stackrel{\mu}{\not\eq}}
\def\eqc#1{\stackrel{#1}{\eq}} 
\def\cl#1#2{\arraycolsep0pt
\left(\begin{array}{c} #1\\ #2 \end{array}\right)}
\def\cll#1#2#3{\arraycolsep0pt \left(\begin{array}{c} #1\\ #2\\
#3 \end{array}\right)} 
\def\clll#1#2#3#4{\arraycolsep0pt
\left(\begin{array}{c} #1\\ #2\\ #3\\ #4 \end{array}\right)}
\def\cllll#1#2#3#4#5#6{ \left(\begin{array}{c} #1\\ #2\\ #3\\
#4\\ #5\\ #6 \end{array}\right)} 
\def\pr{{\rm pr}}
\let\upr=\uparrow 
\def\ua#1{\hskip-1.7mm\uparrow^{#1}}
\def\sua#1{\hskip-0.2mm\scriptsize\uparrow^{#1}} 
\def\lcm{{\rm lcm}} 
\def\perm#1#2#3{\left(\begin{array}{ccc} 1&2&3\\ #1&#2&#3
\end{array}\right)} 
\def\w{$\wedge$} 
\let\ex=\exists
\def\NS{{\sc (No-G-Set)}} 
\def\lev{{\sf lev}}
\let\rle=\sqsubseteq 
\def\ryle{\le_{ry}} 
\def\ryprec{\le_{ry}}
\def\os{\mbox{[}} 
\def\zs{\mbox{]}}
\def\link{{\sf link}}
\def\solv{\stackrel{s}{\sim}} 
\def\mal{\mathbf{m}}
\def\precs{\prec_{as}}

  
\def\lb{$\linebreak$}  
  
\def\ar{\hbox{ar}} 
\def\Im{{\sf Im}\;} 
\def\deg{{\sf deg}}
\def\id{{\rm id}}
  
\let\al=\alpha 
\let\gm=\gamma 
\let\dl=\delta 
\let\ve=\varepsilon
\let\ld=\lambda 
\let\om=\omega 
\let\vf=\varphi 
\let\vr=\varrho
\let\th=\theta 
\let\sg=\sigma 
\let\Gm=\Gamma 
\let\Dl=\Delta
  
  
\font\tengoth=eufm10 scaled 1200 
\font\sixgoth=eufm6
\def\goth{\fam12} 
\textfont12=\tengoth 
\scriptfont12=\sixgoth
\scriptscriptfont12=\sixgoth 
\font\tenbur=msbm10
\font\eightbur=msbm8 
\def\bur{\fam13} 
\textfont11=\tenbur
\scriptfont11=\eightbur 
\scriptscriptfont11=\eightbur
\font\twelvebur=msbm10 scaled 1200 
\textfont13=\twelvebur
\scriptfont13=\tenbur 
\scriptscriptfont13=\eightbur
\mathchardef\nat="0B4E 
\mathchardef\eps="0D3F

\title{Graphs of finite algebras, edges, and connectivity}
\author{Andrei A.\ Bulatov\\ 
} 
\date{} 
\maketitle
\begin{abstract}
We refine and advance the study of the local structure of idempotent finite algebras started
in [A.Bulatov, \emph{The Graph of a Relational Structure and Constraint Satisfaction Problems}, LICS, 2004]. We introduce a 
graph-like structure on an arbitrary finite idempotent algebra omitting type \one. We
show that this graph is connected, its edges can be classified into 3 types corresponding to
the local behavior (semilattice, majority, or affine) of certain term operations, and that
the structure of the algebra can be `improved' without introducing type \one\ by choosing
an appropriate reduct of the original algebra. Then we refine this structure demonstrating 
that the edges of the graph of an algebra can be made `thin', that is, there are term 
operations that behave very similar to semilattice, majority, or affine operations on 
2-element subsets of the algebra. Finally, we prove certain connectivity properties of the refined
structures.

This research is motivated by the study of the Constraint Satisfaction Problem, although
the problem itself does not really show up in this paper.
\end{abstract}

\section{Introduction}\label{sec:introduction}

The study of the Constraint Satisfaction Problem (CSP) and especially the Dichotomy 
Conjecture triggered a wave of research in universal algebra, as it turns out that the
algebraic approach to the CSP developed in \cite{Bulatov05:classifying,Jeavons97:closure}
is the most prolific one in this area. These developments have led to a number of strong results
about the CSP, see, e.g., \cite{Barto11:conservative,Barto14:local,Barto12:near,%
Bulatov06:3-element,Bulatov11:conservative,Bulatov14:conservative,Bulatov06:simple,%
Idziak10:few}. However, successful application of the algebraic approach also requires new results 
about the structure of finite algebras. Two ways to describe this structure have been 
proposed. One is based on absorption properties \cite{Barto15:constraint,Barto12:absorbing}
and has led not only to new results on the CSP, but also to significant developments in
universal algebra itself.

In this paper we refine and advance the alternative approach originally developed in 
\cite{Bulatov04:graph,Bulatov11:conjecture,Bulatov08:recent}, which is based on the 
local structure of finite algebras. This 
approach identifies subalgebras or factors of an algebra having `good' term operations,
that is, operations of one of the three types: semilattice, majority, or affine. It then explores 
the graph or hypergraph formed by such subalgebras, and exploits its connectivity properties.
In a nutshell, this method stems from  the early study of the CSP over so called conservative 
algebras \cite{Bulatov11:conservative}, and has led to a much simpler proof of the dichotomy
conjecture for conservative algebras \cite{Bulatov16:conservative} and to a characterization of 
CSPs solvable by consistency algorithms \cite{Bulatov09:bounded}. In spite of these applications
the original methods suffers from a number of drawbacks that make its use difficult. In the
present paper we refine many of the constructions and fix the deficiencies of the original
method. As in \cite{Bulatov04:graph,Bulatov08:recent} an edge is a pair of elements $a,b$
such that there is a factor algebra of the subalgebra generated by $a,b$ that has an
operation which is semilattice, majority, or affine on the blocks containing $a,b$; this
operation determines the type of edge $ab$. In this paper we allow edges to have more 
than one type if there are several factors witnessing different types. The main difference
from the previous results is the introduction of oriented \emph{thin} majority and afiine edges.
An edge $ab$ is said to thin if there is a term operation that is semilattice on $\{a,b\}$, or 
there is a term operation that satisfies the identities of a majority or affine term (say, in
variables $x,y$) on $\{a,b\}$, but only when $x=a$ and $y=b$. Oriented thin edges
allow us to prove a stronger version of the connectivity of the graph related to an algbera. 
This updated approach makes it possible to give a much simpler proof of the 
result of \cite{Bulatov09:bounded} (see also \cite{Barto14:local}), however, this is a subject
of subsequent papers.

\section{Preliminaries}\label{sec:preliminaries}

In terminology and notation we follow the standard texts on universal algebra
\cite{Burris81:universal,Mckenzie87:algebras}. We also assume familiarity with
the basics of the tame congruence theory \cite{Hobby88:structure}. All algebras in this paper
are assumed to be finite, idempotent, and omitting type \one.

Algebras will be denoted by $\zA,\zB$, etc. The subalgebra of an algebra $\zA$ generated by 
a set $B\sse\zA$ is denoted $\Sgg\zA B$, or if $\zA$ is clear from the context simply by
$\Sg B$. The set of term operations of algebra $\zA$ is denoted by $\Term(\zA)$.
Subalgebras of direct products are often considered as relations. An element
(a tuple) of $\tms\zA n$ is denoted in boldface, say, $\ba$, and its $i$th component
is referred to as $\ba[i]$, that is, $\ba=(\ba[1]\zd\ba[n])$. The set $\{1\zd n\}$ will be denoted 
by $[n]$. For $I\sse[n]$, say, $I=\{\vc ik\}$, $i_1<\dots<i_k$, by $\pr_I\ba$ we denote
the $k$-tuple $(\ba[i_1]\zd\ba[i_k])$, and for $\rel\sse\tms\zA n$ by $\pr_I\rel$
we denote the set $\{\pr_I\ba\mid\ba\in\rel\}$. If $I=\{i\}$ or $I=\{i,j\}$ we write 
$\pr_i,\pr_{ij}$ rather than $\pr_I$. The tuple $\pr_I\ba$ and relation $\pr_I\rel$ are called
the \emph{projections} of $\ba$ and $\rel$ on $I$. A subalgebra (a relation) $\rel$
of $\tms\zA n$ is said to be a \emph{subdirect product} of $\vc\zA n$ if $\pr_i\rel=\zA_i$
for every $i\in[n]$. For a congruence $\al$ of $\zA$ and $a\in\zA$, by $a^\al$ we denote 
the $\al$-block containing $a$, and by $\zA\fac\al$ the factor algebra modulo $\al$.
For $B\sse\zA^2$, the congruence generated by $B$ will be denoted by $\Cgg\zA B$
or just $\Cg B$. By $\zz_\zA,\zo_\zA$ we denote the least (i.e.\ the equality relation),
and the greatest (i.e.\ the total relation) congruence of $\zA$, respectively. Again, we 
often simplify this notation to $\zz,\zo$.

\section{Graph: Thick edges}\label{sec:thick}

\subsection{The three types of edges}\label{sec:three-types}

Let $\zA$ be an algebra with universe $A$.
We introduce graph $\cG(\zA)$  as follows. The vertex set
is the set $A$. A pair $ab$ of vertices is a \emph{edge} if and only if
there exists a congruence $\th$ of $\Sg{a,b}$ and a term operation of $\zA$ such 
that either $f\fac\th$ is an affine 
operation on $\Sg{a,b}\fac\th$, or $f\fac\th$ is a semilattice operation on
$\{a^\th,b^\th\}$, or $f\fac\th$ is a majority operation on
$\{a^\th,b^\th\}$. 

If there exists a congruence and a term operation of $\zA$ such that $f\fac\th$ 
is a semilattice operation on $\{a^\th,b^\th\}$ then $ab$ is said to have the
{\em semilattice type}. An edge $ab$ is of {\em majority type} if there are 
a congruence $\th$ and $f\in\Term(\zA)$ (a term operation of $\zA$, respectively)
 such that $f\fac\th$ is a majority operation on $\{a^\th,b^\th\}$. Finally, $ab$ 
has the {\em affine type} if there are a congruence $\th$ and $f\in\Term(\zA)$ 
(a term operation of $\zA$, respectively) such that $f\fac\th$ is an affine operation 
on $\ang{a,b}\fac\th$.  In all cases we say that congruence $\th$ \emph{witnesses}
the type of edge $ab$.

Note that, for every edge $ab$ of $\cG(\zA)$, there is the associated pair 
$a^\th,b^\th$ from the factor structure. We will need both of these types of pairs 
and will sometimes call $a^\th,b^\th$ a {\em thick} edge (see Fig.~1). The 
smallest congruence certifying the type of an edge $ab$ will be denoted by $\th_{ab}$. 
\begin{figure}[t]
\centerline{\includegraphics{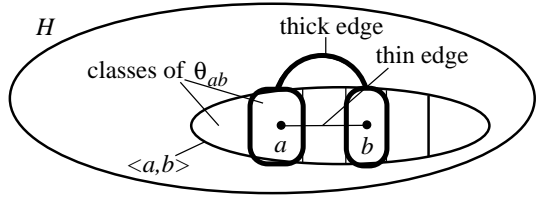}}
\caption{Edges and thick edges}
\end{figure}

Note also that a pair $ab$ may have more than one type witnessed by different 
congruences $\th$. Sometimes we need a stricter version of type. A pair $ab$
is \emph{strictly semilattice} if it is semilattice; $ab$ is said to be \emph{strictly
majority}, if it is majority but not semilattice. Finally, pair $ab$ is said to be 
\emph{strictly affine} if it is affine, but not semilattice or majority.

\subsection{General connectivity}\label{sec:general-connectivity}

\begin{theorem}\label{the:connectedness}
If an idempotent algebra $\zA$ omits type \one, then $\cG(\zB)$ is connected
for every subalgebra of $\zA$.
\end{theorem}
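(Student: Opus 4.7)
The plan is strong induction on $|\zB|$. Given $a,b\in\zB$ to be connected in $\cG(\zB)$, if $\Sgg\zB{a,b}$ is a proper subalgebra the inductive hypothesis applied to $\Sgg\zB{a,b}$ produces a path, and that path lies in $\cG(\zB)$ as well, because any congruence of a generated sub-subalgebra together with a term operation witnessing an edge there witnesses the same edge in $\cG(\zB)$. So I reduce at once to $\zB=\Sg{a,b}$. When $|\zB|=2$, tame congruence theory classifies $\zB$ as one of the types \two, \three, \four, or \five, and the corresponding Maltsev, majority, or semilattice term realizes $ab$ as an edge with witness $\th=\zz$.

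For $|\zB|\ge 3$ I pick a coatom $\th$ of $\Con(\zB)$. Since $\zB$ is idempotent every $\th$-block is a subalgebra; combined with $\zB=\Sg{a,b}$ this forces $a^\th\ne b^\th$, for otherwise $\zB=\Sg{a,b}\sse a^\th$ would give $\th=\zo$. Hence $\bar\zB=\zB\fac\th$ is a simple finite idempotent algebra omitting type \one, generated by the pair $a^\th,b^\th$.

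The core step is to connect $a^\th$ with $b^\th$ in $\cG(\bar\zB)$ and lift the resulting path. Given an edge $cd$ of $\cG(\bar\zB)$ witnessed by $\psi\in\Con(\Sgg{\bar\zB}{c,d})$ and a term operation $f$, I choose representatives $c'\in c$, $d'\in d$ and pull $\psi$ back to the congruence $\hat\psi$ on $\Sgg\zB{c',d'}$ defined by $x\mathbin{\hat\psi}y\iff x^\th\mathbin{\psi}y^\th$; one checks $\Sgg\zB{c',d'}\fac{\hat\psi}\cong\Sgg{\bar\zB}{c,d}\fac\psi$, so the same $f$ witnesses the edge $c'd'$ in $\cG(\zB)$ of the same type. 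Taking $c'=a$ for the first edge, $d'=b$ for the last, and bridging intermediate representatives by the inductive hypothesis inside the strictly smaller $\th$-blocks, I assemble a path from $a$ to $b$ in $\cG(\zB)$. When $\th\ne\zz$ the needed path in $\cG(\bar\zB)$ exists by the inductive hypothesis (applied to $\bar\zB$, which as a quotient is again a finite idempotent algebra omitting type \one). The only remaining case is $\th=\zz$, where $\zB$ itself is simple and two-generated. In type \two\ the algebra is affine and a Maltsev term is affine on all of $\zB$, giving $ab$ as an affine edge with witness $\zz$ outright; in types \three, \four, or \five\ I would invoke the structure of a $\{\zz,\zo\}$-minimal set, promote the polynomial boolean/lattice/semilattice operation on its trace to a term operation of $\zB$ by substituting the polynomial's parameters with generators, and thread the resulting local edge back to $\{a,b\}$ via further induction on two-generated subalgebras.

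I expect the main obstacle to be exactly this last step---converting a polynomial witness on a minimal-set trace into a genuine term witness for a pair that inductively connects to $a$ and $b$. The remaining ingredients (the induction set-up, $\th$-blocks being subalgebras under idempotence, and the pullback of the witnessing congruence through the quotient) are essentially bookkeeping once one verifies that generated subalgebras, congruences, and term operations all behave well under the natural map $\zB\to\zB\fac\th$.
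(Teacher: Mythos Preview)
Your overall skeleton---reduce to $\zB=\Sg{a,b}$, pick a maximal congruence $\th$, lift a path from $\zB\fac\th$ using that $\th$-blocks are proper subalgebras---is exactly the paper's strategy, and the bookkeeping you describe (pulling back the witnessing congruence along the quotient map, bridging representatives inside $\th$-blocks by induction) is correct. One small point: your induction hypothesis is being applied to the \emph{quotient} $\zB\fac\th$, which is not a subalgebra of $\zA$; this is harmless, but it means you are really proving the (equivalent) statement ``$\cG(\zC)$ is connected for every finite idempotent $\zC$ omitting type~\one'' by induction on $|\zC|$, not the statement as literally phrased.

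The real content, as you correctly anticipate, is the simple case, and here your sketch is genuinely incomplete and does not match what the paper does. The paper does \emph{not} promote a trace polynomial to a term and then ``thread back'' to $\{a,b\}$. That route has a gap: the trace $N$ of a $(\zz,\zo)$-minimal set need not contain $a$ or $b$, and once you have a term operation that is semilattice/majority on $N$ there is no smaller $2$-generated subalgebra available to carry you from $N$ to $\{a,b\}$ by induction---$\zB$ is simple and $\Sg{a,b}=\zB$. Instead the paper proves a separate structure result for simple idempotent algebras (its Proposition~\ref{pro:simple}), split by type:
\begin{itemize}
\item Types $\four,\five$: use the $(\zz,\zo)$-quasi-order, which is a compatible partial order; every covering pair is a $2$-element subalgebra by idempotence and monotonicity, hence an edge, and the order is connected.
\item Type $\three$ with a proper tolerance: tolerance classes are proper subalgebras (Lemma~\ref{lem:tol-class}) and the tolerance is connected, so the hypergraph $\cH$ of proper subalgebras is connected.
\item Type $\two$: by Kearnes' classification of simple idempotent algebras the quotient is a module, giving an affine edge with witness $\th$ outright.
\item Type $\three$, tolerance-free: the delicate case. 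One splits on whether an automorphism swaps $a$ and $b$. If not, the link tolerance of the relation $\Sg{(a,b),(b,a)}$ is full, and one extracts either a semilattice term on $\{a,b\}$ or a path through proper subalgebras. If so, one analyses the ternary relation $\Sg{(a,a,b),(a,b,a),(b,a,a)}$, uses that $\zB^2$ has no skew congruence (Kearnes again) to force its link tolerance to be full, and shows $(a,a,a)$ belongs to it---yielding a majority term on $\{a,b\}$.
\end{itemize}
So the missing ingredient is not a generic tame-congruence-theory maneuver on minimal sets but this tailored analysis of simple idempotent algebras via orders, tolerances, link tolerances, and the absence of skew congruences. Everything else in your proposal is sound.
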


Let $\zA=(A;F)$ be an idempotent algebra. Recall that a {\em
tolerance} of $\zA$ is a binary reflexive and symmetric
relation compatible with $\zA$. The transitive closure of a tolerance is a
congruence of $\zA$. In particular, if $\zA$ is simple then the transitive
closure of every its tolerance different from the equality relation is
the total relation. If a tolerance satisfies this condition then we
say that it is {\em connected}. Let $\tau$ be a tolerance. A set
$B\sse A$ maximal with respect of inclusion and such that $B^2\sse
\tau$ is said to be a {\em class} of $\tau$. We will need the following 
simple observation.
\begin{lemma}\label{lem:tol-class}
Every class of a tolerance of an idempotent algebra is a subalgebra.
\end{lemma}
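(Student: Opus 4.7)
The plan is to exploit the maximality of the class $B$ together with idempotence and compatibility. Concretely, to show that $B$ is closed under every basic operation $f$ of $\zA$, I would pick arbitrary $\vc bn\in B$ and set $c=f(\vc bn)$; the goal is then to argue that $c$ already lies in $B$, rather than enlarging $B$.

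The key step will be to verify that the set $B\cup\{c\}$ still satisfies $(B\cup\{c\})^2\sse\tau$, so that maximality of $B$ forces $c\in B$. Pick any $b\in B$. Because $\zA$ is idempotent, $b=f(b\zd b)$. Since $b_i,b\in B$ for each $i\in[n]$ and $B^2\sse\tau$, we have $(b_i,b)\in\tau$, and applying compatibility of $\tau$ with the operation $f$ coordinate-wise gives
\[
(c,b)=(f(\vc bn),f(b\zd b))\in\tau.
\]
Symmetry of $\tau$ then yields $(b,c)\in\tau$ as well, and $(c,c)\in\tau$ by reflexivity. Thus $B\cup\{c\}$ is still a set whose square lies in $\tau$, and by maximality $c\in B$.

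Since $f$ and $\vc bn$ were arbitrary, $B$ is closed under all basic operations, hence a subalgebra. I do not foresee any real obstacle here: the only subtlety is recognizing that idempotence of $\zA$ is essential, because without $f(b\zd b)=b$ one could not compare $c$ to an element of $B$ along the diagonal. Everything else is a direct application of the definitions of tolerance, compatibility, and maximal clique.
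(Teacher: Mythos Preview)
Your argument is correct and is exactly the standard proof: use idempotence to write $b=f(b,\ldots,b)$, apply compatibility of $\tau$ to get $(c,b)\in\tau$ for every $b\in B$, and invoke maximality of the class. The paper itself does not supply a proof of this lemma (it is stated as a ``simple observation''), so your write-up is precisely what one would expect to fill in.
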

%

Let $G=(V,E)$ be a hypergraph. A {\em path} in $G$
is sequence $\vc Hk$ of hyperedges such that $H_i\cap H_{i+1}\ne\eps$, 
for $1\le i< k$. The hypergraph $G$ is said to be {\em connected} if,
for any $a,b\in V$, there is a path $\vc Hk$ such that $a\in H_1$,
$b\in H_k$.

Clearly, the universe of an algebra $\zA$ along with the family of all
its proper subalgebras forms a hypergraph denoted by
$\cH(\zA)$. Lemma~\ref{lem:tol-class} implies that, for a simple
idempotent algebra $\zA$, the hypergraph $\cH(\zA)$ is connected unless 
$\zA$ is tolerance free. In the latter case it can be disconnected.

If $\al$ is a congruence of a finite algebra $\zA$ and $\rel$ is a
compatible binary relation, then the {\em $\al$-closure} of $\rel$ is
defined to be $\al\circ\rel\circ\al$. A relation equal to its
$\al$-closure is said to be {\em $\al$-closed}. If $(\al,\beta)$ is a prime
quotient of $\zA$, then the {\em basic tolerance for} $(\al,\beta)$ (see
\cite{Hobby88:structure}, Chapter~5) is the $\al$-closure of the
relation $\al\cup\bigcup\{N^2\mid N$ is an $(\al,\beta)$-trace$\}$ 
if $\typ(\al,\beta)\in\{\two,\three\}$, and it is the
$\al$-closure of the compatible relation generated by
$\al\cup\bigcup\{N^2\mid N$ is an $(\al,\beta)$-trace$\}$ if
$\typ(\al,\beta)\in\{\four,\five\}$. The basic tolerance is the
smallest $\al$-closed tolerance $\tau$ of $\zA$ such that
$\al\ne\tau\sse\beta$. 

Let $(\al,\beta)$ is a prime quotient of $\zA$. An {\em
$(\al,\beta)$-quasi-order} is a compatible reflexive and
transitive relation $\rel$ such that $\rel\cap\rel^{-1}=\al$, and the
transitive closure of $\rel\cup\rel^{-1}$ is $\beta$. The quotient
$(\al,\beta)$ is said to be {\em orderable} if there exists an
$(\al,\beta)$-quasi-order. By Theorem~5.26 of
\cite{Hobby88:structure}, $(\al,\beta)$ is 
orderable if and only if $\typ(\al,\beta)\in\{\four,\five\}$.

Recall that an element $a$ of an algebra $\zA$
is said to be {\em absorbing} if whenever $t(x,\vc yn)$ is an
$(n+1)$-ary term operation of $\zA$ such that $t$ depends on $x$ and
$(\vc bn)\in A^n$, then $t(a,\vc bn)=a$. A congruence $\th$ of 
$\zA^2$ is said to be {\em skew} if it is the kernel of no projection
mapping of $\zA^2$ onto its factors. If $\zA$ is a simple idempotent
algebra, then the result of \cite{Kearnes96:idempotent} states that
one of the following holds: (a) $\zA$ is term equivalent to a module;
(b) $\zA$ has an absorbing element; or (c) $\zA^2$ has no skew
congruence. 

We also need the following easy observation.
\begin{lemma}\label{lem:binary-tol}
Let $\rel$ be an $n$-ary compatible relation on $\zA$ such that, for
any $i\in\{1\zd n\}$, $\pr_i\rel=\zA$. Then, for any $i\in[n]$,
the relation $\tol_i=\{(a,b)\mid $ there are $a_1\zd a_{i-1},a_{i+1}\zd
a_n\in\zA$ such that $(a_1\zd a_{i-1},a,a_{i+1}\zd a_n),\lb(a_1\zd
a_{i-1},b,a_{i+1}\zd a_n)\in\rel\}$ is a tolerance of $\zA$.
\end{lemma}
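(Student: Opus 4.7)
The plan is to verify the three defining conditions of a tolerance in sequence: reflexivity, symmetry, and compatibility. Symmetry is transparent from the definition of $\tol_i$, since the roles of $a$ and $b$ in the two witnessing tuples are interchangeable. Reflexivity is exactly the place where the hypothesis $\pr_i\rel=\zA$ is needed: given $a\in\zA$, pick any tuple $\ba\in\rel$ with $\ba[i]=a$, which exists by subdirectness on coordinate $i$, and use it as both witnessing tuples to conclude $(a,a)\in\tol_i$.

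The only nontrivial step is compatibility. Given a $k$-ary term operation $f\in\Term(\zA)$ and pairs $(a_1,b_1)\zd(a_k,b_k)\in\tol_i$, I want to show that $(f(\vc ak),f(\vc bk))\in\tol_i$. For each $j\in[k]$ the definition of $\tol_i$ supplies elements $u_1^j\zd u_{i-1}^j,u_{i+1}^j\zd u_n^j\in\zA$ together with the two tuples
\[
\bu_j=(u_1^j\zd u_{i-1}^j,a_j,u_{i+1}^j\zd u_n^j),\qquad
\bv_j=(u_1^j\zd u_{i-1}^j,b_j,u_{i+1}^j\zd u_n^j),
\]
both lying in $\rel$. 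Applying $f$ componentwise to $\bu_1\zd\bu_k$ and to $\bv_1\zd\bv_k$ produces tuples $\bu=f(\vc\bu k)$ and $\bv=f(\vc\bv k)$, which belong to $\rel$ because $\rel$ is compatible with $f$.

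The key observation is that for every coordinate $\ell\ne i$ the entries of $\bu_j$ and $\bv_j$ coincide (they both equal $u_\ell^j$), so $\bu[\ell]=f(u_\ell^1\zd u_\ell^k)=\bv[\ell]$. Consequently $\bu$ and $\bv$ agree on all coordinates outside $i$, while $\bu[i]=f(\vc ak)$ and $\bv[i]=f(\vc bk)$. Taking $u_\ell=\bu[\ell]=\bv[\ell]$ for $\ell\ne i$ as the common witnesses, the definition of $\tol_i$ is satisfied, so $(f(\vc ak),f(\vc bk))\in\tol_i$.

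There is no real obstacle here; the argument is essentially a bookkeeping exercise on indices. The one subtle point worth flagging is that the witnesses for different pairs $(a_j,b_j)$ need not coincide across different $j$, but this causes no trouble precisely because the witnessing tuples within each $j$ do agree off coordinate $i$, and compatibility of $\rel$ is applied before we examine the resulting tuples.
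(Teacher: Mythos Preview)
Your proof is correct. The paper itself does not supply a proof of this lemma: it is introduced as ``the following easy observation'' and stated without justification, so there is nothing to compare against beyond noting that your argument spells out precisely the routine verification the paper leaves implicit.
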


Tolerance of the form $\tol_i$ will be called \emph{link tolerance}, or
\emph{$i$th link tolerance}
%
\begin{prop}\label{pro:simple}
Let $\zA$ be a simple idempotent algebra.\\[2mm]
(1) If $\typ(\zA)\in\{\four,\five\}$ then $\cH(\zA)$ is connected.\\[2mm]
(2) If $\typ(\zA)=\three$ and $\zA$ has a proper tolerance,
then $\cH(\zA)$ is connected.\\[2mm]
(3) If $\typ(\zA)=\two$ then $\zA$ is term equivalent to a module.\\[2mm]
(4) If $\zA=\Sg{a,b}$, $\typ(\zA)=\three$ and $\zA$ is tolerance free,
then either $a,b$ are connected in $\cH(\zA)$, or there is a binary
term operation $f$ or a ternary term operation $g$ such that $f$ 
is a semilattice operation on $\{a,b\}$, or $g$ is a majority operation
on $\{a,b\}$.
\end{prop}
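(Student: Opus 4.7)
The plan is to split by TCT type and invoke the trichotomy of \cite{Kearnes96:idempotent} for simple idempotent algebras where needed, translating the structural alternatives into either a suitable tolerance (for the connectivity claims) or into the required term operation (for part~(4)).

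Part~(3) is the quickest. If $\typ(\zA)=\two$ then $\zA$ cannot have an absorbing element, because a nontrivial module does not admit one and idempotent type~\two\ algebras are module-like in the TCT sense; moreover $\zA^2$ does possess skew congruences, coming from the module's additive structure. Hence alternatives~(b) and~(c) of Kearnes both fail and~(a) is forced.

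Parts~(1) and~(2) share the same strategy: produce a proper tolerance $\tau\ne\zz_\zA$ of $\zA$, and observe that by Lemma~\ref{lem:tol-class} its classes are proper subalgebras, while the fact that the transitive closure of $\tau$ is $\zo_\zA$ (because $\zA$ is simple and $\tau\ne\zz_\zA$) ensures the classes overlap enough to form a connected sub-hypergraph covering $\zA$. For~(2) such a tolerance is given by hypothesis. For~(1), orderability of the prime quotient in types~\four\ and~\five\ supplies a compatible quasi-order $R$ with $R\cap R^{-1}=\zz_\zA$; its symmetric closure is already a non-trivial tolerance, and Lemma~\ref{lem:binary-tol} applied to $R$ viewed as a binary subalgebra of $\zA^2$ produces link tolerances whose analysis, together with $R\cup R^{-1}$, delivers a proper tolerance in every sub-case.

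For~(4), tolerance-freeness excludes Kearnes alternative~(a), so either~(b) $\zA$ has an absorbing element $e$, or~(c) $\zA^2$ has no skew congruence. In case~(b), if $e\in\{a,b\}$ then idempotence and absorption force $\zA=\{a,b\}$ (since every term depending on more than one variable returns $e$ whenever $e$ is plugged in), and the unique non-projection binary term acts as the constant map to $e$ on $\{a,b\}$, i.e.\ a semilattice operation. If $e\notin\{a,b\}$, then $\Sg{a,e}$ and $\Sg{b,e}$ are proper subalgebras of $\zA=\Sg{a,b}$ meeting at $e$ and thus connecting $a$ and $b$ in $\cH(\zA)$. In case~(c) I would work on a $(\zz_\zA,\zo_\zA)$-minimal set and its Boolean trace, available because $\typ(\zA)=\three$, where the polynomial clone carries a majority operation, and then lift it to a ternary term operation witnessing majority on $\{a,b\}$, using the rigidity provided by the absence of skew congruences in $\zA^2$ combined with tolerance-freeness. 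The main obstacle is precisely this last step: translating a polynomial majority on a trace into a genuine term of $\zA$ that acts as majority on the prescribed pair $\{a,b\}$, which need not lie inside any single trace; the argument will rely on an interpolation-style use of the generation hypothesis $\zA=\Sg{a,b}$ to promote the required polynomial to a term.
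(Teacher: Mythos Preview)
Part~(2) matches the paper. The other parts have real gaps.

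For~(1), $R\cup R^{-1}$ is not in general compatible (the union of two compatible relations need not be), so it is not a tolerance; the fallback to link tolerances is too vague to rescue this---if the partial order has a maximum, $\tol_1$ is the full relation and produces no proper subalgebras. The paper does not use tolerances for~(1) at all: from the compatible partial order $\le$ it observes directly that every covering pair $\{a,b\}$ is a subalgebra (by idempotence, $a=f(a,\dots,a)\le f(\vc an)\le f(b,\dots,b)=b$ for any term $f$ and $a_i\in\{a,b\}$), and connectedness of $\le$ (Lemma~5.24(3) and Theorem~5.26(2) of \cite{Hobby88:structure}) then gives a chain of $2$-element subalgebras in $\cH(\zA)$. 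For~(3) your argument is circular: you exclude Kearnes alternative~(c) by asserting that $\zA^2$ has skew congruences ``coming from the module's additive structure'', which already assumes the conclusion.

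The serious gap is~(4). Your route through the Kearnes trichotomy and a trace-level majority is not the paper's, and the obstacle you yourself flag---promoting a polynomial majority on a trace to a \emph{term} majority on the specific pair $\{a,b\}$, which need not lie in any trace---is genuine and nothing in your sketch addresses it. The paper instead makes a direct relational construction, splitting on whether some automorphism of $\zA$ swaps $a$ and $b$. If not, the subalgebra $\rel\le\zA^2$ generated by $(a,b),(b,a)$ is not the graph of a bijection; its link tolerances are nontrivial, hence full by tolerance-freeness, so there is $c$ with $(a,c),(b,c)\in\rel$, and either $\Sg{a,c},\Sg{b,c}$ are both proper (a path in $\cH(\zA)$) or one equals $\zA$, forcing $(b,b)\in\rel$ and hence a binary term that is semilattice on $\{a,b\}$. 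If such an automorphism $\vf$ exists, the paper takes the ternary $\rel$ generated by $(a,a,b),(a,b,a),(b,a,a)$ and shows $(a,a,a)\in\rel$: first $\pr_{1,2}\rel=\zA^2$ (since in a tolerance-free type~\three\ simple algebra every pair of elements is a trace); then $\tol_3$ is full (otherwise the induced equivalence on $\zA^2$ would be a skew congruence, excluded here by Kearnes); finally a computation using $\vf$ places $a$ in the third coordinate over any $(c',d')\in\pr_{1,2}\rel$. The term witnessing $(a,a,a)\in\rel$ is then majority on $\{a,b\}$ because $\vf$ swaps $a$ and $b$. Traces and absorbing elements play no organising role in the argument.
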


\begin{proof}
(1) By Theorem~5.26 of \cite{Hobby88:structure}, there exists
$(\zz,\zo)$-quasi-order $\le$ on $\zA$, which is, clearly, just a
compatible partial order. Let $a\le b\in A$ be such that $a\le c\le
b$ implies $c=a$ or $c=b$. We claim that $\{a,b\}$ is a subalgebra of
$\zA$. Indeed, for any term operation $f(\vc xn)$ of $\zA$ and any
$\vc an\in\{a,b\}$, we have $a=f(a\zd a)\le f(\vc an)\le f(b\zd
b)=b$. Finally, it follows from Lemma~5.24(3) and Theorem~5.26(2) that
$\le$ is connected.\\[2mm]
(2) Follows straightforwardly from Lemma~\ref{lem:tol-class} and the
fact that the transitive closure of any proper tolerance of $\zA$ is
the total relation.\\[2mm]
(3) Follows from the results of \cite{Kearnes96:idempotent}.\\[2mm]
(4) We consider two cases.
\medskip

\noindent
{\sc Case 1.} There is no automorphism $\vf$ of $\zA$ such that
$\vf(a)=b$ and $\vf(b)=a$.\\[2mm]
Consider the relation $\rel$ generated by $(a,b),(b,a)$. By the
assumption made, $\rel$ is not the graph of a bijective mapping. By
Lemma~\ref{lem:binary-tol}, $\tol_1,\tol_2$ are tolerances of $\zA$
different from the equality relation. Thus, they are the total
relation. Therefore, there is $c\in\zA$ such that
$(a,c),(b,c)\in\rel$. If both $\Sg{a,c},\Sg{b,c}$ are proper
subalgebras of $\zA$, then $a,b$ are connected in
$\cH(\zA)$. Otherwise, let, say, $\Sg{a,c}=\zA$. Since
$(b,a),(b,c)\in\rel$ and $\zA$ is idempotent, $(b,d)\in\rel$ for any
$d\in \zA$. In particular, $(b,b)\in\rel$. This means that there is a
binary term operation $f$ such that $f(a,b)=f(b,a)=b$, as required.
\medskip

\noindent
{\sc Case 2.} There is an automorphism $\vf$ of $\zA$ such that
$\vf(a)=b$ and $\vf(b)=a$.\\[2mm]
Consider the ternary relation $\rel$ generated by
$(a,a,b),(a,b,a),(b,a,a)$. As in the previous case, if we show that
$(a,a,a)\in\rel$, then the result follows. Let also
$\rela=\{(c,\vf(c))\mid c\in \zA\}$ denotes the graph of an automorphism
$\vf$ with $\vf(a)=b$ and $\vf(b)=a$. 
\medskip

\noindent
{\sc Claim 1.} $\pr_{1,2}\rel=\zA\tm\zA$.\\[2mm]
Let $\relo=\pr_{1,2}\rel$ and $\relo'=\{(c,\vf(d))\mid
(c,d)\in\relo\}$. Since $\relo'(x,y)=\ex z (\relo(x,z)\meet\rela(z,y))$, this
relation is compatible. Clearly, $\relo=\zA\tm\zA$ if and only if
$\relo'=\zA\tm\zA$. Notice that $(a,a),(b,b),(a,b)\in\relo'$. Since
$\typ(\zA)=\three$ and $\zA$ is tolerance free, every pair $c,d\in\zA$
is a trace. Therefore, there is a polynomial operation $g(x)$ with
$g(a)=c, g(b)=d$ and, hence, there is a term operation $f(x,y,z)$ such
that $f(a,b,x)=g(x)$. For this operation we have
$$
f\left(\cl aa,\cl bb,\cl ab\right)=\cl cd\in\relo'.
$$
\medskip

Next we show that $\tol_3$ cannot be the equality relation. Suppose
for contradiction that it is. Then the relation
$\th=\{((c_1,d_1),(c_2,d_2))\mid $ there is $e\in\zA$ such that
$(c_1,d_1,e),(c_2,d_2,e)\in\rel\}$ is a congruence of $\relo=\zA^2$. It
cannot be a skew congruence, hence, it is kernel of the projection of
$\zA^2$ onto one of its factors. Without loss of generality let
$\th=\{((c_1,d_1),(c_2,d_2))\mid c_1=c_2\}$. This means that, for any
$e\in\zA$ and any $(c_1,d_1,e),(c_2,d_2,e)\in\rel$, we have
$c_1=c_2$. However, $(a,b,a),(b,a,a)\in\rel$, a contradiction. The same
argument applies when $\th=\{((c_1,d_1),(c_2,d_2))\mid d_1=d_2\}$.

Thus, $\tol_3$ is the total relation, and there is $(c,d)\in\relo$ such
that $(c,d,a),(c,d,b)\in\rel$ which implies $\{(c,d)\}\tm\zA\sse\rel$. 
\medskip

\noindent
{\sc Claim 2.} For any $(c',d')\in\pr_{1,2}\rel$, the tuple
$(c',d',a)\in\rel$.\\[2mm]
Take a term operation $g(x,y,z)$ such that $g(a,b,c)=c'$ and
$g(a,b,\vf^{-1}(d))=\vf^{-1}(d')$. Such an operation exists whenever 
$c\ne\vf^{-1}(d)$, because every pair of elements of $\zA$ is a trace.
Then
\begin{eqnarray*}
\lefteqn{g\left(\cll aba,\cll baa,\cll cda\right)=\cll{g(a,b,c)}%
{g(b,a,d)}a}\\
&=&\cll{c'}{\vf(g(\vf^{-1}(b),\vf^{-1}(a),\vf^{-1}(d)}a\\
&=&\cll{c'}{\vf(g(a,b,\vf^{-1}(d)}a=
\cll{c'}{\vf(\vf^{-1}(d'))}a=\cll{c'}{d'}a\in\rel. 
\end{eqnarray*}
What is left is to show that there are $c,d$ such that $(c,d,a)\in\rel$ and  $c\ne\vf^{-1}(d)$. 
Suppose $c=\vf^{-1}(d)$. If $\Sg{a,c},\Sg{c,b}\ne\zA$, the $a,b$ are connected in
$\cH(\zA)$.  Let $\Sg{a,c}=\zA$, and $h$ such that $h(a,c)=b$. Since $\rel$ is 
symmetric with respect to any permutation of coordinates, 
$\{c\}\tm\zA\tm\{d\}\sse\rel$. In particular, $(c,c,d)\in\rel$. Then
$$
h\left(\cll aab,\cll ccd\right)=\cll bba,
$$
as $c=\vf^{-1}(d)$ and $a=\vf^{-1}(b)$. The tuple $(b,b,a)$ is as required.

Thus, $(a,a,a)\in\rel$ which means that there is a term operation
$f(x,y,z)$ such that $f(a,a,b)=f(a,b,a)=f(b,a,a)=a$. Since $\vf$ is an 
automorphism, we also get $f(b,b,a)=f(b,a,b)=f(a,b,b)=b$, i.e.\ $f$ is 
a majority operation on $\{a,b\}$.
\end{proof}

\begin{proof}[Theorem~\ref{the:connectedness}]
Suppose for contradiction that $\cG(\zA)$ is disconnected. Let
$\zB$ be a minimal subalgebra of $\zA$ such that $\cG(\zB)$ is
disconnected. Since the graph of every proper subalgebra of $\zB$ is
connected, $\zB$ is 2-generated, say, $\zB=\Sg{a,b}$. 
Let $\th$ be a maximal congruence of $\zB$. 

Clearly, if $\cG(\zB\fac\th)$ is connected then $\cG(\zB)$ is
connected. Therefore, $\zB\fac\th$ is tolerance free and of type \three. Take
$c,d\in\zB$; let $c'=c^\th, d'=d^\th$. If $\Sg{c,d}\ne\zB$ then
$c,d$ are connected by the assumption made. Otherwise
$\Sg{c',d'}=\zB\fac\th$. By Proposition~\ref{pro:simple}, either
$c',d'$ are connected in $\cH(\zB\fac\th)$ and hence in $\cG(\zB\fac\th)$,
or $c'd'$ is an edge in $\cG(\zB\fac\th)$. In the former case $c,d$
are connected because every proper subalgebra of 
$\zB\fac\th$ gives rise to a proper subalgebra of $\zB$. In the latter
case $cd$ is an edge of $\cG(\zB)$. Thus, $\cG(\zB)$ is connected, a
contradiction. 
\end{proof}

\subsection{Adding thick edges}\label{sec:adding-thick}

Generally, an edge, or even a thick edge is not a subalgebra. However, we show 
that every idempotent algebra $\zA$ omitting type \one\ has a reduct $\zA'$ such that
$\zA'$ also omits type one, but every its edge of semilattice or majority type is a 
subalgebra of $\zA'$. Moreover, some type restrictions are also observed.
We say that $\cG(\zA)$ is {\em semilattice} ({\em semilattice/majority})-{\em connected} 
if every two vertices in $\cG(\zA)$ are connected by a path consisting of semilattice
(semilattice and strict majority) edges. For short we will abbreviate it to s-connected and 
sm-connected.
\begin{theorem}\label{the:adding}
Let $\zA$ be an idempotent algebra omitting type \one, 
$ab$ an edge of $\cG(\zA)$ of semilattice or strict majority type, and
$\rel_{ab}=(a^{\th_{ab}}\cup b^{\th_{ab}})$ the thick edge $ab$.
Let also $F_{ab}$ denote set of term operations of $\zA$ preserving $\rel_{ab}$\\[1mm]
(1) $\zA'=(A,F_{ab})$ omits type \one.\\[1mm] 
(2) If $ab$ is semilattice and $\cG(\zA)$ is s-connected, then
$\cG(\zA')$ is s-connected.\\[1mm]
(3) If $ab$ is strict majority and $\cG(\zA)$ is sm-connected, then
$\cG(\zA')$ is sm-connected.
\end{theorem}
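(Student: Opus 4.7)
The plan is to prove all three parts simultaneously via a key lemma stating that every edge of $\cG(\zA)$ survives as an edge of $\cG(\zA')$ of a suitable type. First I would note the basic observation that the witness term for $ab$ itself lies in $F_{ab}$: if $s=s_{ab}$ is a semilattice witness for $ab$, normalized so that $s(a^{\th_{ab}},b^{\th_{ab}})\sse a^{\th_{ab}}$, then each $\th_{ab}$-block is a subalgebra of $\zA$, and on any mixed pair $(x,y)\in a^{\th_{ab}}\tm b^{\th_{ab}}$ one has $s(x,y)\in a^{\th_{ab}}\sse\rel_{ab}$; so $s$ preserves $\rel_{ab}$. An analogous argument handles the ternary majority witness $m=m_{ab}$ in the strict-majority case. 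In particular, $ab$ retains its original type in $\cG(\zA')$.

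Next I would formulate and prove the key lemma: for every edge $cd$ of $\cG(\zA)$ there is a term operation $t\in F_{ab}$ and a congruence $\eta$ of $\Sgg{\zA'}{c,d}$ such that $t/\eta$ witnesses an edge type for $cd$ on $\{c^\eta,d^\eta\}$; moreover, if $cd$ is semilattice then $t$ witnesses the semilattice type, and if $cd$ is strict majority and $ab$ is strict majority then $t$ witnesses either semilattice or strict majority type. The approach is to construct $t$ from the original witness $f_{cd}$ by iterated composition with $s$ (or $m$). Using the finiteness of $\zA$, such an iteration stabilizes to an operation in $F_{ab}$: the absorption of $s$ on the thick edge $\{a^{\th_{ab}},b^{\th_{ab}}\}$ is exploited to pull escaping values back into $\rel_{ab}$, while the idempotence of $f_{cd}$ on $\{c^\eta,d^\eta\}$ preserves the semilattice or majority character on that trace.

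Granted the key lemma, parts (2) and (3) follow by lifting any s-connected (respectively, sm-connected) path from $\cG(\zA)$ to $\cG(\zA')$ edge by edge, with the type-preservation clauses ensuring the lifted paths consist of edges of the required types. For part (1), any putative type-\one\ prime quotient in a subalgebra $\zC$ of $\zA'$ would contradict the tame-congruence characterization of type \one, since the witnesses delivered by the lemma provide a non-trivial polynomial behavior (semilattice, majority, or affine) on some 2-element subset of every $(\al,\beta)$-trace inside $\zC$.

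The main obstacle is the construction in the key lemma, especially when $f_{cd}$ does not preserve $\rel_{ab}$. Naive compositions such as $s(f_{cd}(x,y),f_{cd}(y,x))$ or $f_{cd}(s(x,y),s(y,x))$ fail: the first applies $s$ to values outside $\rel_{ab}$, where $s$ carries no absorption guarantee, and the second applies $f_{cd}$ to mixed values in $\rel_{ab}$, where the image may escape. A correct construction will require a finer interleaving and must exploit both the finiteness of $\zA$ and the specific block structure of $\th_{ab}$; I expect case analysis on whether $\{c,d\}$ meets $\rel_{ab}$ and on how $\th_{cd}$ sits relative to $\th_{ab}$, with the strict-majority case of (3) being the most delicate because a semilattice term $s$ is not available for absorption and one must work with $m$ alone.
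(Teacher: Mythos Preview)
Your proposal has a genuine gap at the central step. The ``key lemma'' you aim for --- that every edge $cd$ of $\cG(\zA)$ survives as an edge of $\cG(\zA')$ of the same type --- is stronger than what the paper proves and is likely false as stated. The paper does \emph{not} show that individual edges are preserved; it shows that for every $c,d$ the elements $c,d$ are \emph{connected} in $\cG(\Sgn{c,d})$, possibly through intermediate vertices. For instance, when $\Sgn{c,d}$ has a zero-multiplication operation (Case~1.D / 2.D in the paper), $c$ and $d$ are connected via the zero element $0$, not by a direct edge. Your absorption/iteration scheme offers no mechanism to produce such detours.

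More seriously, you yourself observe that the naive compositions $s(f_{cd}(x,y),f_{cd}(y,x))$ and $f_{cd}(s(x,y),s(y,x))$ fail, and you do not supply a working replacement --- only the expectation that ``a finer interleaving'' and ``case analysis'' will succeed. The paper's actual argument is structurally quite different: it proceeds by induction on order ideals of $\Sub(\zA')$, and the base case (when $\Sgn{c,d}$ is strictly simple) rests on Szendrei's classification of finite idempotent strictly simple algebras, handling each of the five structural types separately with tailored term constructions. The inductive step then splits on whether a certain set of pairs generates the total congruence, and in several places uses Proposition~\ref{pro:simple} (on simple algebras of types $\three,\four,\five$) rather than any iterated-composition trick. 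None of this machinery appears in your plan, and there is no evident way to replace it by a uniform stabilization argument. Finally, your route to part~(1) via ``non-trivial polynomial behavior on traces'' is not how the paper proceeds; omission of type~\one\ is obtained as a byproduct of proving connectivity of $\cG(\Sgn{c,d})$ for all $c,d$, which in turn relies on the full case analysis just described.
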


We prove Theorem~\ref{the:adding} by induction on the `structure' of the algebra.
The base case of this induction is given by strictly simple algebras.
Recall that a simple algebra whose proper subalgebras are all
1-element is said to be {\em strictly simple}. We need the description
of finite idempotent strictly simple algebras given in
\cite{Szendrei90:surjective}. 

Let $G$ be a permutation group acting on a set $A$. By $\R(G)$ we
denote the set of operations on $A$ preserving each
relation of the form $\{(a,g(a))\mid a\in A\}$ where
$g\in G$, and $\F(G)$ denotes the set of idempotent
members of $\R(G)$.

Let ${}_K\ov A=(A;+,K)$ be a finite dimensional
vector space over a finite field $K$, $T(\ov A)$  the group of
translations 
$\{x+a\mid a\in A\}$, and $\End_K\ov A$ the endomorphism ring
of ${}_K\ov A$. Then one can consider $\ov A$ as a module over
$\End_K\ov A$. This module is denoted by ${}_{(\End_K\ov A)}\ov A$.

Finally, let $\F^0_k$ denote the set of all
operations preserving the relation
\begin{eqnarray*}
&& X^0_k=\{(a_1\zd a_k)\in A^k\mid a_i=0\hbox{ for at least one
$i$,}\\
&&\qquad 1\le i\le k\}
\end{eqnarray*}
where $0$ is some fixed element of $A$,
and let $\F^0_\omega=\bigcap_{k=2}^\infty{\F^0_k}$.
\begin{theorem}[\cite{Szendrei90:surjective}] \label{str}
A finite strictly simple idempotent algebra $\zA$ is 
term equivalent to one of the following algebras:

(a) $(A,\F(G))$ for a permutation group $G$ on $A$
such that every nonidentity member of $G$ has at most one
fixed point;

(b) $(A,\Term_\id({}_{(\End_K\ov A)}\ov A))$ for
some vector space ${}_K\ov A$ over a finite field $K$;

(c) $(A,\F(G)\cap\F^0_k)$ for some $k$ ($2\le k\le\omega$),
some element $0\in A$ and some permutation group $G$ acting on $A$
such that $0$ is the unique fixed point of every nonidentity
member of~$G$;

(d) $(A,F)$ where $|A|=2$ and $F$ contains a semilattice operation;

(e) a two-element algebra with an empty set of basic operations.
\end{theorem}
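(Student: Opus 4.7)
My plan is to classify $\zA$ by a case analysis on its tame-congruence-theory type $i=\typ(\zz_\zA,\zo_\zA)\in\{1,2,3,4,5\}$. Tame congruence theory supplies a $\ang{\zz,\zo}$-minimal set $U\sse A$ and a 2-element trace $N=\{a,b\}\sse U$ such that the polynomial clone restricted to $N$ realises one of five canonical minimal algebras (unary, $\GF(p)$-vector space, Boolean, lattice, or semilattice). The broad strategy is that the strict simplicity hypothesis collapses three of the five types to the 2-element case and pins down the clone tightly in the remaining two.

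For the non-unary types $i\in\{3,4,5\}$, the induced operation on $N$ is a majority, lattice, or semilattice operation. Combining idempotency with a tolerance-class argument in the spirit of Lemma~\ref{lem:tol-class} and the generated-relation trick of Proposition~\ref{pro:simple}(4), I would show that $N$ itself is a subalgebra of $\zA$. Strict simplicity then forces $N=A$, reducing us to a 2-element algebra. Inspection of the 2-element idempotent clones then reads off: a semilattice term yields (d); an empty basic set yields (e); a Mal'tsev/affine term is case (b) with $K=\GF(2)$; and a pure majority clone embeds into (a) with $G=\ang{(ab)}$, whose unique nonidentity element has no fixed point.

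For $i=2$ I would invoke the Kearnes--Szendrei structure theorem \cite{Kearnes96:idempotent} (cited in the paragraph preceding Lemma~\ref{lem:binary-tol}), giving that $\zA$ is term equivalent to a module over some ring $R$. Let $\ov A$ denote that module. Strict simplicity of $\zA$ translates to $\ov A$ having no nontrivial proper submodule (since submodules are subalgebras), hence $\ov A$ is a simple $R$-module; by Schur's lemma together with Wedderburn, $\mathrm{End}_R\ov A$ is a finite field $K$, making $\ov A$ a finite-dimensional $K$-vector space. Restricting to the idempotent terms of the module yields precisely case~(b).

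The hard case is $i=1$. Here no semilattice/majority/affine operation is available on $N$, and the polynomial clone on $N$ is essentially a $G$-set. The plan is to lift this structure to all of $A$: setting $G:=\mathrm{Aut}(\zA)$, I would first show that every term operation of $\zA$ preserves each graph $\{(x,g(x)):x\in A\}$, giving $\Term(\zA)\sse\F(G)$; then, using strict simplicity together with Post-style completeness arguments internal to the unary-type minimal algebras, establish the reverse inclusion modulo the star-relations $X^0_k$. The final dichotomy (a) versus (c) is then governed by whether the $G$-action admits an element $0$ fixed by every nonidentity $g\in G$: if so, the invariants $X^0_k$ must be adjoined, yielding (c); otherwise strict simplicity forces every nonidentity $g\in G$ to have at most one fixed point, yielding (a). This type-$\one$ case, and specifically the exact description of the idempotent clone, is the principal obstacle and is where the bulk of Szendrei's original argument in \cite{Szendrei90:surjective} resides.
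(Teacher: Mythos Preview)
The paper does not prove Theorem~\ref{str}; it is quoted verbatim from Szendrei's paper \cite{Szendrei90:surjective} and used as a black box in the proof of Theorem~\ref{the:adding}. There is therefore no in-paper argument to compare your proposal against.

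As a freestanding sketch, your outline is plausible in spirit but has a real gap in the $i\in\{3,4,5\}$ step. You assert that the trace $N$ is a subalgebra of $\zA$, citing Lemma~\ref{lem:tol-class} and the generated-relation trick from Proposition~\ref{pro:simple}(4). Neither of these gives that: a trace is not in general a tolerance class, and Proposition~\ref{pro:simple}(4) does not show a given 2-element subset is closed under all term operations---it produces a term with prescribed behaviour on a pair $a,b$ under the extra hypothesis that $\zA$ is tolerance free and $\Sg{a,b}=\zA$. What you actually need for the collapse to $|A|=2$ is a separate argument (e.g.\ via the orderability of types~\four,\five\ as in Proposition~\ref{pro:simple}(1), and a tolerance-free/absorbing-element dichotomy for type~\three), not the bare claim ``$N$ is a subalgebra''. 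Your type~\two\ reduction via \cite{Kearnes96:idempotent} is fine, and you correctly identify that the type~\one\ case is where the genuine clone-theoretic work lives; but since you ultimately defer that case to \cite{Szendrei90:surjective} anyway, the proposal is closer to a reduction-to-Szendrei than an independent proof.
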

It can be easily shown (see e.g.\ \cite{Bulatov05:classifying}) that
in case (c) $\zA$ has a term {\em zero-multiplication} operation, that
a binary operation $h$ such that $h(x,y)=0$ whenever $x\ne y$.

\begin{proof}[of Theorem~\ref{the:adding}.]
Let $ab$ be an edge of semilattice type and $f$ is a term operation
such that $f\fac{\th_{ab}}$ is a semilattice operation on
$B'=\{a^{\th_{ab}},b^{\th_{ab}}\}$. We will omit index
$\th_{ab}$ everywhere it does not lead to a confusion. Let
$\zA'=(A;F')$ where $F'$ is the set of binary term operations $g$ of
$\zA$ such that $g\fac{\th_{ab}}$ on $B'$ is either a projection or
equals $f\fac{\th_{ab}}$. The subalgebra of $\zA$ generated by a set
$B\sse A$ will be denoted by $\Sgo B$, while the subalgebra of
$\zA'$ generated by the same set will be denote by $\Sgn B$. In
general, $\Sgn B\sse\Sgo B$.
\medskip

\noindent
{\em Claim 1.} $f$ can be chosen to satisfy the identity
$f(x,f(x,y))=f(x,y)$.\\[2mm]
For every $x\in A$, we consider the unary operation
$g_x(y)=f(x,y)$. There is  
a natural number $n_x$ such that $g_x^{n_x}$ is an idempotent transformation
of $A$. Let $n$ be the least common multiple of the $n_x$, $x\in A$ and 
$$
h(x,y)=f(\underbrace{x,f(x,\ldots f(x}_{\mbox{\footnotesize $n$
times}},y)\ldots)).
$$
Since $g^n_x(y)$ is an idempotent for any $x\in A$, we have
$h(x,h(x,y))=g^n_x(g^n_x(y))=g^n_x(y)=h(x,y)$. Finally, as is easily
seen $h$ equals $f$ on $\{a^{\th_{ab}},b^{\th_{ab}}\}$.
\medskip

We prove that, for any $c,d\in A$, the graph $\cG(\Sgn{c,d})$ is 
connected. Moreover, if for every subalgebra $\zB$ of $\zA$,
$\cG(\zB)$ is s-connected, then this holds also for every subalgebra
of $\zA'$. We proceed by induction on order ideals of $\Sub(\zA')$. To
prove the base case for induction, suppose that for $c,d\in\zA'$, the
algebra $\Sgn{c,d}$ is strictly simple. By Theorem~\ref{str}, we
have to consider five cases.
\medskip

\noindent
{\sc Case 1.A.} $\Sgn{c,d}$ is a set.\\[2mm]
In this case, $\Sg{c,d}=\{c,d\}$ and $f\red{\{c,d\}}(x,y)=x$. If 
$\Sgo{c,d}\ne\{c,d\}$ then there exists a term operation $g$ of
$\zA'$ such that $g(c,d)\not\in\{c,d\}$. As is easily seen, the
operation $g'(x,y)=g(f(x,y),f(y,x))$ equals $f$ on
$B'$; hence, it belongs to
$F'$. However, $g'(c,d)=g(c,d)\not\in\{c,d\}$, a contradiction with
the assumption made. Thus, $\Sgo{c,d}=\{c,d\}$. 

Then there is a term operation $g$ of $\zA$ which is either an affine 
or majority or semilattice operation on $\{c,d\}$. The operation
$$
g'(x,y,z)=g(f(x,f(y,z)),f(y,f(z,x)),f(z,f(x,y)))
$$ 
in the first two cases or
$g'(x,y)=g(f(x,y),f(y,x)$ in the latter case belong to $F'$ and is an
affine or majority or semilattice operation on $\{c,d\}$ respectively.
\medskip

\noindent
{\sc Case 1.B.} $\Sgn{c,d}=\{c,d\}$ is a 2-element
semilattice.\\[2mm] 
There is nothing to prove in this case.
\medskip

\noindent
{\sc Case 1.C.} $\Sgn{c,d}$ is a module.\\[2mm] 
The operation $f$ on $\Sgn{c,d}$ has the form $f(x,y)=px+(1-p)y$
and either $p$ or $1-p$ is invertible. Suppose that $p$ is invertible
and $p^n=1$ for a certain $n$. Then set 
$$
f'(x,y)=\underbrace{f(f(\ldots f}_{\mbox{\footnotesize $n$ times}}(x,y) 
\ldots,y),y).
$$
Since $f$ and $f'$ are idempotent, $f'(x,y)=x$ on $\Sgn{c,d}$ and
$f'(x,y)=f(x,y)$ on $B'$. 

Then, as in Case~1.A we show that
$\Sgn{c,d}=\Sgo{c,d}$. Therefore, $\zB=\Sgo{c,d}$ is a
strictly simple algebra. If $\zB$ is 2-element then we get one of the
previous cases. Otherwise, $\zB$ either has a zero-multiplication
operation $h$ or it is of the form $(B;F(H))$ for a certain
permutation group $H$. In the former case, $h(f(x,y),f(y,x))$ belongs
to $F'$ and is a zero-multiplication operation on $\zB$. In the latter 
case, $\zB$ has an operation which is either a semilattice or majority 
operation on $\{c,d\}$. Arguing as above we get an operation of
$\zA'$ which is semilattice or majority on $\{c,d\}$ respectively.
\medskip

\noindent
{\sc Case 1.D.} $\Sgn{c,d}$ has a zero-multiplication
operation $h$.\\[2mm] 
Let $0$ be the zero-element. Then $c,d$ are connected by edges $c0$
and $d0$.
\medskip

\noindent
{\sc Case 1.E.} $\Sgn{c,d}$ is of the form $(B;F(H))$ for a
certain permutation group $H$. (Note that this algebra has the Boolean type.)\\[2mm] 
If there is no automorphism $\vf$ in $H$ such that $\vf(c)=d$ and
$\vf(d)=c$, then, by Proposition~\ref{pro:simple}, $\zA'$ has a term
operation $g$ which is a semilattice operation on $\{c,d\}$. So, let
us suppose that there is an automorphism swapping $c$ and $d$. 

If $\Sgo{c,d}$ has no
operation which is semilattice on $\{c,d\}$ then we are
done. Otherwise, let $g$ be a term operation of $\zA$ semilattice on
$\{c,d\}$ and $h$ a term operation of $\zA'$ majority on
$\{c,d\}$. If one of $h(x,x,y),h(x,y,x),h(y,x,x)$ is a semilattice
operation on $B'$ then we proceed as before. Otherwise, $h$ is a
projection on $B'$; without loss of generality let it be the first
projection. Then $h'(x,y)=f(h(x,y,y),h(y,y,x))$ equals $f$ on 
$B'$ and is a projection on $\{c,d\}$. We complete the proof as
before. 
\medskip

Now, suppose that the claim proved for all proper subalgebras of
$\zC=\Sgn{c,d}$. We consider two cases.
\medskip

\noindent
{\sc Case 1.1.} There is a maximal congruence $\th$ of
$\zC$ such that $f\fac\th$ is commutative on $\zC\fac\th$.\\[2mm]
By Claim 1, $f$ is a semilattice operation on
$\{c^\th,f(c^\th,d^\th)\}$ and
$\{f(c^\th,d^\th),d^\th\}$.
\medskip

\noindent
{\sc Case 1.2.} The set $D=\{(f(c',d'),f(d',c'))\mid c',d'\in\zC\}$ 
generates the total congruence of $\zC$.\\[2mm]
If, for every pair $(g(c,d,c''),g(c,d,d''))$, where $(c'',d'')\in D$, and $g$ is a term operation of $\zC$,
the subalgebra $\Sgo{g(c,d,c''),g(c,d,d'')}$ of $\zA'$ is a
proper subalgebra of 
$\Sgo{c,d}$, then $\cH(\zC)$ is connected and, therefore $c,d$ are connected 
by induction hypothesis. Indeed, if
$\Sgn{g(c,d,c''),g(c,d,d'')}=\zC$, then
$c,d\in\Sgo{g(c,d,c''),g(c,d,d'')}$. Therefore,
$\Sgo{g(c,d,c''),g(c,d,d'')}=\Sgo{c,d}$.

Suppose that, for a certain $(c',d')\in D$ and a ternary term
operation $g$ of $\zC$, we have
$\Sgo{g(c,d,c'),g(c,d,d')}=\Sgo{c,d}$. Then, for any
$e\in\Sgo{c,d}$, there is a term operation $h$ of $\zA$ such that
$h(g(c,d,c'),g(c,d,d'))=e$. Consider
$h'(x,y,z,t)=h(g(x,y,f(z,t)),g(x,y,f(t,z)))$. We have
$h'\red{\{a,b\}}(x,y,z,t)=g(x,y,f(z,t))$, hence, $h'\in F'$. On the
other hand, $h'(c,d,c'',d'')=e$, where $f(c',d')=c''$,
$f(d',c')=d''$. Thus, $\Sgn{c,d}=\Sgo{c,d}$.

The elements $c$ and $d$ are connected by a path $c=e_1,e_2\zd e_k=d$
in $G$. Thus, it is enough to show that if $\Sgo{c,d}$ is
connected by edges of semilattice, majority or affine type in $G$,
then so is $\Sgn{c,d}$. We may assume $cd$ is an edge. Let $\th$ be a 
maximal congruence of
$\Sgo{c,d}$ witnessing that it is an edge and $\th'$ a maximal
congruence of $\Sgn{c,d}$ containing $\th$. 

If $\zB=\Sgn{c,d}\fac{\th'}$ is affine or 2-element we proceed in
the same way as in the base case of induction. If
$\typ(\zB)\in\{\four,\five\}$ then
$c^{\th'},d^{\th'}$ are connected by a chain of 2-element
subalgebras, and the result follows from induction hypothesis.

So, suppose that $\typ(\zB)=\three$. If
$\cH(\zB)$ is connected then we are done by
induction hypothesis. Otherwise we use
Proposition~\ref{pro:simple}. If there is no automorphism $\vf$ of $\zB$
such that $\vf(c^{\th'})=d^{\th'}$ and 
$\vf(d^{\th'})=c^{\th'}$, then, by Proposition~\ref{pro:simple},
$\zA'$ has a term operation $g$ which is a semilattice operation on
$\{c^{\th'},d^{\th'}\}$. So, let us suppose that there is an
automorphism swapping $c^{\th'}$ and $d^{\th'}$.  

If $\Sgo{c,d}$ has no operation which is semilattice on
$\{c^{\th'},d^{\th'}\}$ then we are 
done. Otherwise, let $g$ be a term operation of $\zA$ semilattice on
$\{c^{\th'},d^{\th'}\}$ and $h$ a term operation of $\zA'$
majority on 
$\{c^{\th'},d^{\th'}\}$. If one of $h(x,x,y),h(x,y,x),h(y,x,x)$
is a semilattice operation on $B'$ then we
proceed as before. Otherwise, $h$ is a projection on
$B'$; without loss of
generality let it be the first projection. Then
$h'(x,y)=f(h(x,y,y),h(y,y,x))$ equals $f$ on
$B'$ and is a projection on
$\{c^{\th'},d^{\th'}\}$. We complete the proof as before.
\medskip

Now let $ab$ be of majority type and $m$ a term operation
such that $m\fac{\th_{ab}}$ is a majority operation on
$B'=\{a^{\th_{ab}},b^{\th_{ab}}\}$. Let
$\zA'=(A;F')$ where $F'$ is the set of binary and ternary term
operations $g$ of $\zA$ such that $g\fac{\th_{ab}}$ on $B'$ is either a
projection or equals $m\fac{\th_{ab}}$. As before, the subalgebra of
$\zA$ generated by a set 
$B\sse A$ will be denoted by $\Sgo B$, while the subalgebra of
$\zA'$ generated by the same set will be denote by $\Sgn B$. In
general, $\Sgn B\sse\Sgo B$.
\medskip

\noindent
{\em Claim 2.} $m$ can be chosen to satisfy the identity
$m(x,m(x,y,z),m(x,y,z))=m(x,y,z)$.\\[2mm]
For every $x\in A$, we consider the unary operation $g_x(y)=m(x,y,y)$. There is 
a natural number $n_x$ such that $g_x^{n_x}$ is an idempotent transformation
of $A$. Let $n$ be the least common multiple of the $n_x$, $x\in A$, and 
\begin{eqnarray*}
h(x,y,z)&=&m(\underbrace{x,m(x,\ldots m(x}_{\mbox{\footnotesize $n$
times}},y,z),m(x,y,z)\ldots)),\\
& & m(x,\ldots,y,z),m(x,y,z)\ldots))).
\end{eqnarray*}
Since $g^n_x(y)$ is idempotent for any $x\in A$, we have
$h(x,h(x,y,z),h(x,y,z))=g^n_x(g^n_x(m(x,y,z)))=g^n_x(m(x,y,z))=h(x,y,z)$.
Finally, as is easily seen $h$ is a majority operation on
$B'$.
\medskip

We proceed by induction on order ideals of $\Sub(\zA')$. To prove the
base case for induction, suppose that for $c,d\in\zA'$, the algebra
$\Sgn{c,d}$ is strictly simple. By Theorem~\ref{str}, we have to
consider five cases.
\medskip

\noindent
{\sc Case 2.A.} $\Sgn{c,d}$ is a set.\\[2mm]
In this case, $\Sgn{c,d}=\{c,d\}$ and $m\red{\{c,d\}}(x,y,z)=x$. If 
$\Sgo{c,d}\ne\{c,d\}$ then there exists a term operation $g$ of
$\zA$ such that $g(c,d)\not\in\{c,d\}$. As is easily seen, the
operation $g'(x,y,z)=g(m(x,y,z),m(y,z,x))$ equals $m$ on
$B'$; hence, it belongs to
$F'$. However, $g'(c,d,d)=g(c,d)\not\in\{c,d\}$, a contradiction with
the assumption made. Thus, $\Sgo{c,d}=\{c,d\}$. 

Then there is a term operation $g$ of $\zA$ which is either an affine 
or majority or semilattice operation on $\{c,d\}$. The operation
$$
g'(x,y,z)=g(m(x,y,z),m(y,z,x),m(z,x,y))
$$ 
in the first two cases or
$g'(x,y)=g(m(x,y,z),m(y,z,x)$ in the latter case belong to $F'$ and is an
affine or majority or semilattice operation on $\{c,d\}$ respectively.
\medskip

\noindent
{\sc Case 2.B.} $\Sgn{c,d}=\{c,d\}$ is a 2-element
semilattice.\\[2mm] 
There is nothing to prove in this case.
\medskip

\noindent
{\sc Case 2.C.} $\Sgn{c,d}$ is a module over a ring $K$.\\[2mm] 
The operation $m$ on $\Sgn{c,d}$ has the form
$m(x,y,z)=px+qy+(1-p-q)z$ and either $p$ or $q$ or $1-p-q$ is
invertible. Suppose that $p$ is invertible and $p^n=1$ for a certain
$n$. Then set 
$$
m'(x,y,z,t)=\underbrace{m(m(\ldots m}_{\mbox{\footnotesize $n$
times}}(x,y,y)\ldots,y,y),z,t).
$$
We have, $m'(x,y,z,t)=x+(p-1)y+qz+(1-p-q)t$ on $\Sgn{c,d}$ and
$m'(x,y,z,t)=m(y,z,t)$ on $B'$. Let $k$ be the
characteristics of the ring $K$. We set
$$
m''(x,y,z,t)=\underbrace{m'(m'(\ldots m'}_{\mbox{\footnotesize $k$
times}}(x,y,z,t)\ldots,y,z,t),y,z,t).
$$
For the operation $m''$ we have
$m''(x,y,z,t)=x+k(p-1)y+kqz+k(1-p-q)t=x$ on $\Sgn{c,d}$ and 
$m''(x,y,z,t)=m(y,z,t)$ on $B'$.

Then, as in Case~2.A we show that
$\Sgn{c,d}=\Sgo{c,d}$ (by substituting
$g(m''(x,x,y,z),m''(y,x,y,z))$). Therefore, $\zB=\Sgo{c,d}$ is a 
strictly simple algebra. If $\zB$ is 2-element then we get one of the
previous cases. Otherwise, $\zB$ either has a zero-multiplication
operation $h$ or it is of the form $(B;\cR(H))$ for a certain
permutation group $H$. In the former case,
$h(m''(x,x,y,z),m''(y,x,y,z))$ belongs 
to $F'$ and is a zero-multiplication operation on $\zB$. In the latter 
case, $\zB$ has an operation which is either a semilattice or majority 
operation on $\{c,d\}$. Arguing as above we get an operation of
$\zA'$ which is semilattice or majority on $\{c,d\}$ respectively.
\medskip

\noindent
{\sc Case 2.D.} $\Sgn{c,d}$ has a zero-multiplication
operation $h$.\\[2mm] 
Let $0$ be the zero-element. Then $c,d$ are connected by edges $c0$
and $d0$.
\medskip

\noindent
{\sc Case 2.E.} $\Sgn{c,d}$ is of the form $(B;\cR(H))$ for a
certain permutation group~$H$.\\[2mm] 
If there is no automorphism $\vf$ in $H$ such that $\vf(c)=d$ and
$\vf(d)=c$, then, by Proposition~\ref{pro:simple}, $\zA'$ has a term
operation $g$ which is a semilattice operation on $\{c,d\}$. So, let
us suppose that there is an automorphism swapping $c$ and $d$. 

If $\Sgo{c,d}$ has no
operation which is semilattice on $\{c,d\}$ then we are
done. Otherwise, let $g$ be a term operation of $\zA$ semilattice on
$\{c,d\}$ and $h$ a term operation of $\zA'$ majority on
$\{c,d\}$. If $h$ can be chosen such that it is a majority
operation on $B'$ then we
proceed as before. Otherwise, $h$ is either a projection or minority
or 2/3-minority operation on
$B'$. In the two latter case
one of $h(x,y,y),h(y,y,x),h(y,x,y)$ is the first projection on
$B'$ and the second
projection on $\{c,d\}$; let it be $h(x,y,y)$. 
Then $h'(x,y,z)=m(h(x,y,y),h(y,y,y),h(z,y,y))$ equals $m$ on
$B'$ and is a projection on
$\{c,d\}$. We complete the proof as before.
\medskip

Now, suppose that the claim proved for all proper subalgebras of
$\zC=\Sgn{c,d}$. We consider two cases.
\medskip

\noindent
{\sc Case 2.1.} There is a maximal congruence $\th$ of
$\zC$ such that $m(x,y,y)=m(y,x,y)=m(y,y,x)$ in $\zC\fac\th$.\\[2mm]
We consider the algebra $\zB=\Sgn{c,d}\fac\th$. By the results of
\cite{Kearnes96:idempotent}, $\zB$ is either a set or term equivalent
to a module or has an absorbing element or $\zB^2$ has no skew
congruence. The algebra $\zB$ cannot be a set, because $m$ is not a
projection on $\zB$. If it has an absorbing element, $\cH(\zB)$ is
connected via the absorbing element. In the last case, if $\cH(\zB)$
is connected then the result follows from induction hypothesis. If
$\cH(\zB)$ is disconnected then, by Proposition~\ref{pro:simple},
there is a term operation $f$ of $\zB$ which is either semilattice or
majority on $\{c^\th,d^\th\}$.

Finally, suppose that $\zB$ is term equivalent to module. Then $m$ on
$\zB$ is of the form $m(x,y,z)=px+qy+rz$. Since
$m(x,y,y)=m(y,x,y)=m(y,y,x)$, we have
\begin{eqnarray*}
&& px+(q+r)y=qx+(p+r)y,\\
&& px+(q+r)y=rx+(p+q)y,\\
&& qx+(p+r)y=rx+(p+q)y.
\end{eqnarray*}
Hence,
$$
(p-q)x=(p-q)y,\ (p-r)x=(p-r)y,\ (q-r)x=(q-r)y.
$$
Therefore, $p=q=r$, $m(x,y,z)=px+py+pz$, $3p=1$ and thus $p$ is
invertible. Since $m$ satisfies the identity
$m(x,m(x,y,y),m(x,y,y))=m(x,y,y)$, we have 
\begin{eqnarray*}
m(x,m(x,y,y),m(x,y,y))&=&m(x,y,y)\\
(p+2p^2)x+4p^2y &=& px+2py\\
2p^2x+(4p^2-2p)y &=& 0.
\end{eqnarray*}
This implies $2p^2=(4p^2-2p)=0$ and hence $2p=0$. Comparing this with
the equality $3p=1$ we conclude that the Abelian group of $\zB$ has period 2, and
$m(x,y,z)$ is the minority operation $x+y+z$.

As is easily seen, $h(x,y)=m(x,y,y)$ is the second projection on
$B'$ and the first projection
on $\zB$. Moreover, $h$ can be chosen such that
$h(h(x,y),y)=h(x,y)$. By the induction hypothesis, $c$ and $h(c,d)$
are connected 
by edges (and these edges are of the semilattice and majority types if they are 
such in $\zA$). Let us suppose first that
$\Sgn{d,h(c,d)}\ne\Sgo{d,h(c,d)}$. Then there is a term
operation $g(x,y)$ of $\zA$ such that
$g(h(c,d),d)\not\in\Sgn{d,h(c,d)}$. The 
operation $g(h(x,y),y)=y$ on $B'$ and
$$
g(h(h(c,d),d),d)=g(h(c,d),d)\in\Sgn{d,h(c,d)},
$$
a contradiction.

Since $\th\red{\Sgn{c,d}}$ is a maximal congruence of
$\Sgn{c,d}$, we may assume that
$\Sgn{c,d}=\Sgo{c,d}$. The proof in this case can be
completed in the same way as in Case~2.2.
\medskip

\noindent
{\sc Case 2.2.} The set 
\begin{eqnarray*}
D&=&\{(m(c',d',d'),m(d',c',d')),
(m(c',d',d'),m(d',d',c')),(m(d',d',c'),\\
& & \ \ \ m(d',c',d'))\mid c',d'\in\zC\}
\end{eqnarray*}
generates the total congruence of $\zC$.\\[2mm]
If, for every pair $(g(c,d,c''),g(c,d,d''))$, where $(c'',d'')\in D$ and $g$ is a term 
operation of $\zA'$,
the subalgebra $\Sgo{g(c,d,c''),g(c,d,d'')}$ of $\zA'$ is a
proper subalgebra of $\Sgo{c,d}$, then $\cH(\zC)$ is connected
and, therefore $c,d$ are connected  by induction hypothesis. Indeed, if
$\Sgn{g(c,d,c''),g(c,d,d'')}=\zC$, then
$c,d\in\Sgo{g(c,d,c''),g(c,d,d'')}$. Therefore,
$\Sgo{g(c,d,c''),g(c,d,d'')}=\Sgo{c,d}$.

Suppose that, for a certain $(c',d')\in D$ and a ternary term
operation $g$ of $\zC$, we have
$\Sgo{g(c,d,c'),g(c,d,d')}=\Sgo{c,d}$. Then, for any
$e\in\Sgo{c,d}$, there is a term operation of $\zA$ such that
$h(g(c,d,c'),g(c,d,d'))=e$. Without loss of generality we may assume
that $c'=m(c'',d'',d''), d'=m(d'',c'',d'')$ for certain
$c'',d''\in\zC$. Consider
$h'(x,y,z,t)=h(g(x,y,m(z,t,t)),g(x,y,m(t,z,t)))$. We have
$h'\red{B'}(x,y,z,t)=g(x,y,m(z,t,t))$, hence, $h'\in F'$. On the
other hand, $h'(c,d,c'',d'')=e$. Thus, $\Sgn{c,d}=\Sgo{c,d}$.

The elements $c$ and $d$ are connected by a path $c=e_1,e_2\zd e_k=d$
in $\cG(\Sgo{c,d})$. Thus, it is enough to show that if
$\Sgo{c,d}$ is 
connected by edges of semilattice, majority or affine type in $G$,
then so is $\Sgn{c,d}$. Assume $cd$ is an edge. Let $\th$ be a maximal congruence of
$\Sgo{c,d}$ witnessing that it is an edge and $\th'$ a maximal
congruence of $\Sgn{c,d}$ containing $\th$. 

If $\zB=\Sgn{c,d}\fac{\th'}$ is affine or 2-element we proceed in
the same way as in the base case of induction. If
$\typ(\zB)\in\{\four,\five\}$ then
$c^{\th'},d^{\th'}$ are connected by a chain of 2-element
subalgebras, and the result follows from induction hypothesis.

So, suppose that $\typ(\zB)=\three$. If $\cH(\zB)$ is connected then
we are done by induction hypothesis. Otherwise we use
Proposition~\ref{pro:simple}. 
\end{proof}

\subsection{Unified operations}\label{sec:unified}

To conclude this section we prove that the polymorphisms (or term operations) 
certifying the strict type of edges can be significantly unifying (cf.\ Proposition~2
from \cite{Bulatov03:conservative}).
\begin{theorem}\label{the:uniform}
Let $\zA$ be an idempotent algebra. There are term operations $f,g,h$
of $\zA$ such that 
\begin{description}
\item
$f\red{\{a^{\th_{ab}},b^{\th_{ab}}\}}$ is a semilattice
operation if $ab$ is a strict semilattice edge, it is the first projection if $ab$ is a 
strict majority or affine edge;
\item
$g\red{\{a^{\th_{ab}},b^{\th_{ab}}\}}$ is a majority
operation if $ab$ is a strict majority edge, it is the first
projection if $ab$ is a strict affine edge, and
$g\red{\{a^{\th_{ab}},b^{\th_{ab}}\}}(x,y,z)=
f\red{\{a^{\th_{ab}},b^{\th_{ab}}\}}(x,
f\red{\{a^{\th_{ab}},b^{\th_{ab}}\}}(y,z))$ if $ab$ is 
strict semilattice;
\item
$h\red{\Sg{ab}\fac{\th_{ab}}}$ is an affine operation
operation if $ab$ is a strict affine edge, it is the first
projection if $ab$ is a strict majority edge, and
$h\red{\{a^{\th_{ab}},b^{\th_{ab}}\}}(x,y,z)=
f\red{\{a^{\th_{ab}},b^{\th_{ab}}\}}(x,
f\red{\{a^{\th_{ab}},b^{\th_{ab}}\}}(y,z))$ if $ab$ is strict semilattice.
\end{description}
\end{theorem}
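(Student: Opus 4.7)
The plan is to build each of $f$, $g$, $h$ by an iterative composition procedure that combines witnessing operations for edges of the relevant strict type, while using $f$ (once built) to force first-projection behaviour on edges of other strict types. The key elementary observation is that on the thick edge of any strict majority or strict affine edge $ab$ the two-element set $\{a^{\th_{ab}},b^{\th_{ab}}\}$ admits no semilattice term operation (by definition), so every idempotent binary term of $\zA$ restricts there to one of the two projections. A larger but still finite catalogue of behaviours controls ternary terms on two-element thick edges. I will carry out $f$ in some detail and sketch $g$ and $h$, which proceed along the same lines. Throughout, one may assume that Theorem~\ref{the:adding} has been applied iteratively so that the strict semilattice and strict majority thick edges are closed under the relevant term operations, making the restrictions appearing in the statement well-defined.

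To construct $f$, enumerate the strict semilattice edges as $e_1,\ldots,e_m$ and, for each $i$, let $f_i$ be a binary term of $\zA$ whose reduction modulo $\th_{a_ib_i}$ is semilattice on $\{a_i^{\th_{a_ib_i}},b_i^{\th_{a_ib_i}}\}$. Replace $f_i$ by $f_i'(x,y)=f_i(f_i(x,y),x)$: a direct check shows $f_i'$ is still semilattice with the same sink on $e_i$, and on any strict majority or affine thick edge (where $f_i$ restricts to a projection) $f_i'$ restricts to the first projection, since $f_i(f_i(x,y),x)$ evaluates to $x$ whether $f_i$ is first or second projection there. Then set $f^{(0)}(x,y)=x$ and iteratively define
\[
f^{(i)}(x,y)=f^{(i-1)}\bigl(f_i'(x,y),\,f_i'(y,x)\bigr),\qquad f=f^{(m)}.
\]
An induction on $i$ proves the three required properties. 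On any strict majority/affine thick edge both $f^{(i-1)}$ and $f_i'$ restrict to the first projection, so does $f^{(i)}$. On $e_i$, both $f_i'(a_i,b_i)$ and $f_i'(b_i,a_i)$ equal the sink $s$ of the semilattice, whence $f^{(i)}(a_i,b_i)=f^{(i-1)}(s,s)=s$ by idempotence, and combined with idempotence this gives semilattice with sink $s$. On $e_j$ with $j<i$, the operation $f^{(i-1)}$ is commutative on $\{a_j^{\th_{a_jb_j}},b_j^{\th_{a_jb_j}}\}$ by the induction hypothesis, which forces $f^{(i)}(x,y)=f^{(i)}(y,x)$ and hence semilattice behaviour by idempotence.

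The construction of $g$ begins with $g^{(0)}(x,y,z)=f(x,f(y,z))$. This automatically matches the prescription on every strict semilattice edge and reduces to the first projection $x$ on every strict majority or affine thick edge, since $f$ is first projection there. For each strict majority edge $e_i$ fix a ternary term $g_i$ witnessing majority on $e_i$ and, using identifications such as $g_i(x,y,y)$, $g_i(x,y,x)$, $g_i(x,x,y)$ (which on a two-element majority factor reduce to the second, first and first projections respectively) together with pre- and post-composition by $f$ (first projection on every non-semilattice edge), modify $g_i$ to a term $g_i'$ that is still majority on $e_i$ and is the first projection on every strict affine thick edge and on every other strict majority thick edge. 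Then iterate $g^{(i)}(x,y,z)=g^{(i-1)}\bigl(g_i'(x,y,z),\,g_i'(y,z,x),\,g_i'(z,x,y)\bigr)$ to propagate majority on processed edges without disturbing first-projection behaviour elsewhere. The construction of $h$ is analogous, with ternary affine witnesses and the Mal'tsev identity replacing the majority identities.

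The main obstacle is the modification step producing $g_i'$ and $h_i'$. Ternary idempotent operations on a two-element set admit substantially more behaviours than binary ones (the two projections, majority, minority and several $2/3$-minority operations), and the substitution-and-composition recipe must be chosen so that each possible behaviour on the off-target thick edges collapses to the first projection while the target type on the designated edge is preserved. The requisite bookkeeping closely parallels the case analyses of Cases~1.E and 2.E in the proof of Theorem~\ref{the:adding}, and that is where the bulk of the technical work lies.
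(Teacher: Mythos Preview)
Your overall strategy is the same as the paper's --- build $f$, $g$, $h$ by iterated composition of edge-by-edge witnesses and then wrap with $f$ to normalize the semilattice case --- and your inductive argument for $f$ on semilattice and strict-majority edges is fine. The genuine gap is your ``key elementary observation'' applied to strict \emph{affine} edges. For a strict affine edge $ab$ the quotient $\Sg{a,b}/\th_{ab}$ is a module, which need not have only two elements, and the two-element set $\{a^{\th_{ab}},b^{\th_{ab}}\}$ is \emph{not} a subalgebra (Theorem~\ref{the:adding} only closes up semilattice and majority thick edges, as you yourself note). On that module every binary idempotent term has the form $px+(1-p)y$, so $f_i(a^{\th_{ab}},b^{\th_{ab}})$ need not lie in $\{a^{\th_{ab}},b^{\th_{ab}}\}$ at all; in particular your preprocessing $f_i'(x,y)=f_i(f_i(x,y),x)$ gives $(p^2+1-p)x+p(1-p)y$, which is the first projection only when $p\in\{0,1\}$. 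Thus the induction step ``on any strict majority/affine thick edge both $f^{(i-1)}$ and $f_i'$ restrict to the first projection'' fails for affine edges, and the final $f$ you produce has no guaranteed behaviour there.

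The paper meets this by first building $f^n$ semilattice on all semilattice edges, then observing that on each affine factor $f^n$ is of the form $px+(1-p)y$ with one of $p,1-p$ invertible, and using a separate iterate (powers of $f^n$ in one variable, exactly as in the module subcases of Theorem~\ref{the:adding}) to collapse $f^n$ to a projection on each affine factor in turn; only then does the $f^n(f^n(x,y),x)$ trick force the first projection. You need the analogous step. A similar caution applies to your sketch for $g$ and $h$: on affine factors the binary identifications $g_i(x,y,y)$ etc.\ are again linear forms rather than projections, and the paper's ``selector'' construction $g^j(x,y,z)=p(g_j(x,y,z),g^{j-1}(x,y,z))$ with $p(x,y)=g^{j-1}(x,y,y)$ is designed precisely to navigate the full catalogue (projection, majority, minority, $2/3$-minority, linear) rather than assuming everything is already a projection off the target edge.
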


\begin{proof}
Show first that there is an operation $f$ that is semilattice on
each semilattice edge. Let $\vc Bn$ be the list of all semilattice edges in the graph 
$\cG(\zA)$. To avoid clumsy notation we shall denote the operation
$(f\fac{\th_{ab}})\red B_j$, $B_j=\Sg{a,b}$ simply by $f\red{B_j}$. Let also
$\vc fn$ be the list of term operations of the algebra such that
$f_i\red{B_i}$ is a semilattice operation. Notice that every binary
idempotent operation on a 2-element set is either a projection or a
semilattice operation, and every binary operation of a module can be
represented in the form $px+(1-p)y$. Since each $f$ is idempotent,
for any $i,j$, $f_i\red{B_j}$ is either a projection, or a semilattice
operation. We prove by induction, that the operation $f^i$ constructed
via the following rules is a semilattice operation on $\vc Bi$:
\begin{itemize}
\item
$f^1=f_1$;
\item
$f^i(x,y)=f_i(f^{i-1}(x,y),f^{i-1}(y,x))$.
\end{itemize}

The base case of induction, $i=1$ holds by the choice of
$f_1$. Suppose that $f^{i-1}$ satisfies the required conditions. If
$f^{i-1}\red{B_i}$ is a projection, say, $f^{i-1}\red{B_i}(x,y)=x$,
then 
$$
f^i(x,y)=f_i(f^{i-1}(x,y),f^{i-1}(y,x))=f_i(x,y),
$$
that is a semilattice operation on $B_i$. Let $B_i=\{a,b\}$, and $f^{i-1}$ a
semilattice operation such that $f^{i-1}(a,b)=f^{i-1}(b,a)=a$. Then
\begin{eqnarray*}
f^i(a,b) &=& f_i(f^{i-1}(a,b),f^{i-1}(b,a))=f_i(a,a)=a,\\
f^i(b,a) &=& f_i(f^{i-1}(b,a),f^{i-1}(a,b))=f_i(a,a)=a,
\end{eqnarray*}
hence, $f^i$ is again a semilattice operation. 

Thus, for each edge $B$, $f^n\red B$ is a semilattice operation if $B$
is red and either a semilattice operation or a projection or $px+(1-p)y$
otherwise. However, if $B$ is not red, then the subalgebra with the
universe $B$ has no semilattice operation, therefore, $f^n\red B$ is
a projection or $px+(1-p)y$ whenever $B$ is yellow or blue. Arguing
as in the previous section, one can transform $f^n$ such that it
become a projection on blue edges. Finally, 
it is easy to check that $f(x,y)=f^n(f^n(x,y),x)$ satisfies the conditions of the
proposition.

Now let $\vc Bk$, $\vc Cl$ be the lists of all yellow and all blue
    edges respectively, and $\vc gk$, $\vc hl$ the lists of 
    term operations of the algebra such that $g_i\red{B_i}$ is an
    affine operation, and $h_i\red{C_i}$ is the minority
    operation. Notice first, that since neither
$\zB_i=(B_i;F\red{B_i})$ nor $\zC_i=(C_i;F\red{C_i})$ has a 
    term semilattice operation, every their binary term operation is 
    either a projection or, for blue edges an operation of the form $px+(1-p)y$. 
Therefore, for any $i,j$,
    $g_i\red{B_j}(x,y,y), g_i\red{B_j}(y,x,y), g_i\red{B_j}(y,y,x)$,
    $h_i\red{B_j}(x,y,y), h_i\red{B_j}(y,x,y),\lb%
    h_i\red{B_j}(y,y,x)\in\{x,y\}$, and 
    $g_i\red{C_j}(x,y,y), g_i\red{C_j}(y,x,y), g_i\red{C_j}(y,y,x)$,%
    $h_i\red{C_j}(x,y,y)$, $h_i\red{C_j}(y,x,y), h_i\red{C_j}(y,y,x)
    \in\{x,y, px+(1-p)y\}$. This means that the operations $g_i\red{B_j},
    h_i\red{B_j}$ are of one of the
    following types: a projection, the minority operation, the
    majority operation, a 2/3-{\em minority operation}, that is an
    operation satisfying the equalities $m(x,y,y)=y$,
    $m(y,x,y)=m(y,y,x)=x$ or similar. 

First we prove by induction that for every $1\le j\le k$ there is an
operation $g^j(x,y,z)$ which is majority on $B_i$ for $i\le j$. The
operation $g^1=g_1$ gives the base case of induction. Let us assume
that $g^{j-1}$ is already found. If $g^{j-1}\red{B_j}$ is the majority
operation, set $g^j=g^{j-1}$. Otherwise, it is either a projection, or
a 2/3-minority operation, or the minority operation. In all these case
its variables can be permuted such that
$g^{j-1}\red{B_j}(x,y,y)=x$. Then the operation
$p(x,y)=g^{j-1}(x,y,y)$ satisfies the conditions $p\red{B_j}(x,y)=x$,
and $p\red{B_i}(x,y)=y$ for all $i<j$. It is not hard to see that the
operation
$$
g^j(x,y,z)=p(g_j(x,y,z),g^{j-1}(x,y,z))
$$
satisfies the required conditions. 

Further, consider the operation $g^k$. Its restriction $g^k\red{C_j}$,
$1\le j\le l$, is either a projection, or the minority operation. If
$g^k\red{C_j}$ is an operation $px+qy+(1-p-q)z$, then using the
methods of the previous section we can derive
an operation $p(x,y)$ such that $p\red{B_i}(x,y)=y$ for all $1\le i\le
k$, and $p\red{C_j}(x,y)=x$. The operation $g'(x,y,z)=p(x,g^k(x,y,z))$
is majority on $B_i$, $1\le i\le k$, a projection on $C_j$. Therefore,
$g^k\red{C_i}$ can be assumed to be a projection for all $1\le i\le
l$. Then for the operation  
$$
g''(x,y,z)=g^k(x,g^k(y,x,y),g^k(z,z,x))
$$
we have
\begin{eqnarray*}
g''\red{B_i}(x,y,z) &=& g^k\red{B_i}(x,g^k\red{B_i}(y,x,y),
g^k\red{B_i}(z,z,x))\\
&=& g^k\red{B_i}(x,y,z), \quad \hbox{for any $1\le i\le
k$};\\ 
g''\red{C_i}(x,y,z) &=& g^k\red{C_i}(x,g^k\red{C_i}(y,x,y),
g^k\red{C_i}(z,z,x))= x,\\
& & \ \hbox{for any $1\le i\le l$ such that}\\
& & g^k\red{C_i}(x,y,z)=x;\\ 
g''\red{C_i}(x,y,z) &=&
g^k\red{C_i}(x,g^k\red{C_i}(y,x,y),g^k\red{C_i}(z,z,x))\\
&=& g^k\red{C_i}(y,x,y)=x,\quad \hbox{for any $1\le i\le l$}\\
& & \quad \hbox{such that $g^k\red{C_i}(x,y,z)=y$};\\ 
g''\red{C_i}(x,y,z) &=&
g^k\red{C_i}(x,g^k\red{C_i}(y,x,y),g^k\red{C_i}(z,z,x))\\
&=& g^k\red{C_i}(z,z,x)=x,\quad \hbox{for any $1\le i\le l$}\\
& & \quad \hbox{such that $g^k\red{C_i}(x,y,z)=z$}.
\end{eqnarray*}
Finally, to make $g''$ acting correctly on red edges we set 
$$
g(x,y,z)=g''(f(x,f(y,z)),f(y,f(z,x)),f(z,f(x,y))).
$$
The operation $g$ is as required. 

Next we show that for any $1\le j\le l$ there is $h^j$ such that
$h_j\red{C_i}$ is an affine operation for $i\le j$. As usual,
$h^1=h_1$ gives the base case of induction. If $h^{j-1}$ is obtained,
then if $h^{j-1}\red{C_j}$ is an affine operation then set
$h^j=h^{j-1}$. Otherwise, $h^{j-1}\red{C_j}=px+qy+(1-p-q)z$. One of
the coefficients is invertible, let $p$ is invertible and
$p^n=1$. Then set 
$$
h'(x,y,z,t)=\underbrace{h^{j-1}(h^{j-1}(\ldots
h^{j-1}}_{\hbox{\footnotesize $n$ times}}(x,t,t),t,t\ldots t,t),y,z);
$$
we have $h'\red{C_j}=x+qy+(1-p-q)z+(p-1)t$ and $h'\red{C_i}=x-y+z$ for
$i<j$. Furthermore, $h''(x,y,z)=h'(x,y,z,z)$: $h''\red{C_j}=x+qy-qz$,
$h'\red{C_i}=x-y+z$ for $i<j$. If $q$ is invertible then repeating the
procedure above we get $h'''(x,y,z)$ which an affine operation on all
the $C_i$, $i\le j$. Otherwise, $1+q$ is invertible, therefore,
applying the same procedure to $h''(x,x,y)$ we get an operation
$h'''(x,y)$ such that $h'''\red{C_j}=x$ and $h'''\red{C_i}=y$ for
$i<j$. Then to obtain the required operation we set
$h^j=h'''(h_j(x,y,z),h^{j-1}(x,y,z))$. 

Finally, set $p(x,y)=g(x,y,y)$,
$$
\ov h(x,y,z)=p(h^l(x,y,z),x).
$$
and
$$
h(x,y,z)=\ov h(f(x,f(y,z)),f(y,f(z,x)),f(z,f(x,y))).
$$
As is easily seen $h$ satisfies the conditions required.
\end{proof}

\section{Thin edges}\label{sec:thin}

We start with an observation that operations $f,g,h$ identified in 
Theorem~\ref{the:uniform} can be assumed to satisfy certain identities.

\begin{lemma}\label{lem:fgh-identities}
Operations $f,g,h$ found in Theorem~\ref{the:uniform} can be chosen such that
\begin{itemize}
\item[1.]
$f(x,f(x,y))=f(x,y)$ for all $x,y\in\zA$;
\item[2.]
$g(x,g(x,y,y),g(x,y,y))=g(x,y,y)$ for all $x,y\in\zA$;
\item[3.]
$h(h(x,y,y),y,y)=h(x,y,y)$ for all $x,y\in\zA$.
\end{itemize}
\end{lemma}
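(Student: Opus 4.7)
The plan is to adapt the iteration trick used in Claim~1 of the proof of Theorem~\ref{the:adding} (for the binary case) and Claim~2 (for the ternary case) to the unified operations $f,g,h$ produced by Theorem~\ref{the:uniform}. The key observation is that each of the three identities asserts idempotence of a certain unary polynomial derived from the operation, and on a finite algebra some power of every unary map is an idempotent transformation; so the identity can be arranged by passing to a suitable iterate, provided the iteration does not destroy the edge-wise behavior guaranteed by Theorem~\ref{the:uniform}.

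For (1) we apply the construction verbatim from Claim~1: put $g_x(y)=f(x,y)$, choose $n$ so that $g_x^n$ is idempotent for every $x\in\zA$, and replace $f$ by $f'(x,y)=g_x^n(y)$. For (2) put $p_x(w)=g(x,w,w)$, choose $n$ so that $p_x^n$ is idempotent on $\zA$ for every $x$, and replace $g$ by $g'(x,y,z)=p_x^{n-1}(g(x,y,z))$; a short calculation gives
$$
g'(x,g'(x,y,y),g'(x,y,y))\ =\ p_x^{2n}(y)\ =\ p_x^n(y)\ =\ g'(x,y,y).
$$
For (3) put $q_y(x)=h(x,y,y)$, choose $n$ so that $q_y^n$ is idempotent for every $y$, and replace $h$ by $h'(x,y,z)=h(q_y^{n-1}(x),y,z)$; then $h'(x,y,y)=q_y^n(x)$ and
$$
h'(h'(x,y,y),y,y)\ =\ q_y^n(q_y^n(x))\ =\ q_y^n(x)\ =\ h'(x,y,y).
$$

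The step that needs real verification is that these iterations leave intact the behavior demanded by Theorem~\ref{the:uniform} on every edge $ab$ of $\cG(\zA)$. This amounts to checking, for each of the three edge types, that modulo $\th_{ab}$ the unary polynomial in question is either the identity or a constant map, so that any power of it is a no-op up to $\th_{ab}$. For strict majority and strict affine edges $f$ is the first projection modulo $\th_{ab}$, hence $g_x(y)\equiv x$ and $p_x(w)\equiv x\pmod{\th_{ab}}$; also $h(x,y,y)\equiv x$ in either case (first projection, or an affine term evaluated at $z=y$), so $q_y(x)\equiv x$. For strict semilattice edges $f$ restricts to a semilattice operation on $\{a^{\th_{ab}},b^{\th_{ab}}\}$, so for fixed $x$ the map $g_x\fac{\th_{ab}}$ is either the constant onto the top element or the identity. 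Since $g$ and $h$ restricted to $\{a^{\th_{ab}},b^{\th_{ab}}\}$ both equal $f(x,f(y,z))$ by Theorem~\ref{the:uniform}, the induced maps $p_x$ and $q_y$ behave likewise and hence are themselves idempotent modulo $\th_{ab}$.

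I expect the main obstacle to be purely bookkeeping: one must check in each of the three cases that the globally chosen iterate~$n$ does not alter the restriction of the operation to any edge, which is precisely the verification described in the previous paragraph. Once that is done, the three identities follow at once from the standard fact that some power of any unary polynomial on a finite set is idempotent.
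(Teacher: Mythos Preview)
Your approach is essentially the same as the paper's: iterate so that the associated unary polynomial becomes idempotent, and then check that the edge-wise behaviour from Theorem~\ref{the:uniform} survives the iteration. The paper's recursion for $g$ is written slightly differently (it substitutes $g(x,y,y)$ and $g(x,z,z)$ into the second and third slots of $g_i$, yielding $g_i(x,y,z)=g(x,p_x^i(y),p_x^i(z))$ rather than your $p_x^{n-1}(g(x,y,z))$), but the effect on the diagonal $y=z$ is the same and both variants preserve the required restrictions.

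One slip to fix: you write that on strict majority and strict affine edges ``$p_x(w)\equiv x\pmod{\th_{ab}}$'', deducing this from $f$ being a first projection. But $p_x(w)=g(x,w,w)$ involves $g$, not $f$. On a strict \emph{affine} edge $g$ is a first projection, so indeed $p_x(w)\equiv x$; on a strict \emph{majority} edge, however, $g$ is majority, so $p_x(w)=g(x,w,w)\equiv w$, i.e.\ $p_x$ is the identity there, not a constant. Your framing in the preceding sentence (``either the identity or a constant map'') is exactly right and covers this case; just correct the specific claim so the bookkeeping matches it.
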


\begin{proof}
1. Let $f_a(x)=f(a,x)$ for $a\in \zA$. We need to show that $f$ can be chosen such 
that $f_a(f_a(x))=f_a(x)$. Clearly, this can be done by substituting $f(x,f(x,y))$ 
$|\zA|!$ times. It remains to show that every function $f_i(x,y)$ obtained inductively 
from $f_0(x,y)=f(x,y)$ and $f_{i+1}(x,y)=f_i(x,f(x,y))$ is a replacement for $f$. 
That is, for any semilattice edge $ab$, where $\th\in\Con(\Sg{a,b})$ witnesses that $ab$ 
is a semilattice edge,
$$
f_{i+1}(a,b)\eqc\th f_{i+1}(b,a)\eqc\th b.
$$
By induction we have 
\begin{eqnarray*}
f_{i+1}(a,b) &=& f_i(a,f(a,b))\eqc\th f_i(a,b)\eqc\th b,\\
f_{i+1}(b,a) &=& f_i(b,f(b,a))\eqc\th f_i(b,b)\eqc\th b.
\end{eqnarray*}

2. Let $g$ be the operation that is majority on all strict majority edges, and $g_a(x)=g(a,x,x)$. 
We need to show that $g$ can be chosen such that $g_a(g_a(x))=g_a(x)$. Clearly, this 
can be done by substituting $g(x,g(x,y,y),g(x,z,z))$ $|\zA|!$ times. It remains to show 
that every function $g_i(x,y,z)$ obtained inductively from $g_0(x,y,z)=g(x,y,z)$ and 
$g'_{i+1}(x,y,z)=g_i(x,g(x,y,y),g(x,z,z))$ is a replacement for $g$. That is, for any 
strict majority edge $ab$, where $\th\in\Con(\Sg{a,b})$ witnesses that $ab$ is a 
majority edge,
$$
g_{i+1}(a,b,b)\eqc\th g_{i+1}(b,a,b)\eqc\th g_{i+1}(b,b,a)\eqc\th b.
$$
By induction we have 
\begin{eqnarray*}
g_{i+1}(a,b,b) &=& g_i(a,g(a,b,b),g(a,b,b))\eqc\th g_i(a,b,b)\eqc\th b,\\
g_{i+1}(b,a,b) &=& g_i(b,m(b,a,a),g(b,b,b)\eqc\th g_i(b,a,b)\eqc\th b,\\
g_{i+1}(b,b,a) &=& g_i(b,g(b,b,b),g(b,a,a)) \eqc\th g_i(b,b,a)\eqc\th b.
\end{eqnarray*}

3. Let $h_b(x)=h(x,b,b)$ for $b\in\zA$. The goal is to find $m$ such that 
$h_b(h_b(x))=m_b(x)$ for 
all $b$ and all $x$. Clearly, this can be done by substituting $h(h(x,y,y),y,z)$ 
$|\zA|!$ times. It remains to show that every function $h_i(x,y,z)$ obtained 
inductively from $h_0(x,y,z)=h(x,y,z)$ and $h'_{i+1}(x,y,z)=h_i(h(x,y,y),y,z)$ 
is a replacement for $h$. That is, for any affine edge $ab$, 
where $\th\in\Con(\Sg{a,b})$ witnesses that $ab$ is an affine edge,
$$
h_{i+1}(a,b,b)\eqc\th h_{i+1}(b,b,a)\eqc\th a.
$$
By induction we have 
\begin{eqnarray*}
h_{i+1}(a,b,b) &=& h_i(h(a,b,b),b,b)\eqc\th h_i(a,b,b)\eqc\th a,\\
h_{i+1}(b,b,a) &=& h_i(h(b,b,b),b,a) = h_i(b,b,a)\eqc\th a.
\end{eqnarray*}
\end{proof}

\subsection{Semilattice edges}\label{sec:thin-semilattice}

In this section we focus on (strict) semilattice edges of the graph $\cG(\zA)$. Note
first that if one fix a term operation $f$ such that $f$ is a semilattice
operation on every thick semilattice edge of $\cG(\zA)$, then one can define
an orientation of every semilattice edge. A semilattice edge $ab$ is oriented from $a$ 
to $b$ if $f(a^{\th_{ab}},b^{\th_{ab}})=f(b^{\th_{ab}},
a^{\th_{ab}})=b^{\th_{ab}}$. 
Clearly, the orientation strongly depends on 
the choice of the term operation $f$. The graph $\cG(\zA)$ oriented
according to a term operation $f$ will be denoted by $\cG_f(\zA)$. We
then can define {\em semilattice-connected} and {\em strongly semilattice-connected}
components of $\cG_f(\zA)$. We will also use the natural order on the
set of strongly semilattice-connected components of $\cG_f(\zA)$: for
components $A,B$, $A\le B$ if there is a directed path in $\cG_f(\zA)$
consisting of semilattice edges and connecting a vertex from $A$ with a vertex
from $B$. 

We shall now improve the choice of operation $f$ and restrict the kind of
semilattice edges we will use later.
First we show that those semilattice edges $ab$ for which $\th_{ab}$ is not
the equality relation can be thrown out of the graph $\cG(\zA)$ such that
the graph remains connected. Therefore, we can assume that every semilattice
edge $ab$ is such that $f$ is a semilattice operation on $\{a,b\}$. 
 
\begin{prop}\label{pro:thin-thick}
Let $\zA$ be a finite algebra omitting type \one, $f$ a binary term operation 
semilattice on every (thick) semilattice edge and such that $f(x,f(x,y))=f(x,y)$, 
and $\cG'(\zA)$ the subgraph of $\cG(\zA)$ obtained by omitting semilattice 
edges $ab$ such that $\th_{ab}$ is not the equality relation. Then $\cG'(\zA)$ 
is connected. Moreover, if $\cG(\zA)$ is s-connected then $\cG'(\zA)$ 
is semilattice-connected. If $\cG(\zA)$ is sm-connected then $\cG'(\zA)$ is
sm-connected. 
\end{prop}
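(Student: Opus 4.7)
The plan is to reduce the proposition to the following inductive claim on $|\Sg{a,b}|$: \emph{every thick semilattice edge $ab$ of $\cG(\zA)$ can be replaced by a path of $\cG'(\zA)$ joining $a$ and $b$; the path consists of semilattice edges if $\cG(\zA)$ is s-connected, and of semilattice and strict-majority edges if $\cG(\zA)$ is sm-connected}. Granted this claim, connectivity of $\cG'(\zA)$ is immediate from Theorem~\ref{the:connectedness} by splicing out each thick semilattice edge of any path of $\cG(\zA)$, and the moreover clauses follow the same way starting from an s-path or sm-path.

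The base case $|\Sg{a,b}|=2$ is vacuous since a two-element algebra forces $\th_{ab}=\zz$. For the induction step set $\th=\th_{ab}$ (nontrivial) and $c=f(a,b)$; since $f\fac\th$ is semilattice on $\{a^\th,b^\th\}$ one has $c\in b^\th$. The key algebraic trick combines the identity $f(a,c)=c$, a special case of $f(x,f(x,y))=f(x,y)$, with the fact that any two term operations $f,t$ of $\zA$ commute as $f(t(u,v),t(u',v'))=t(f(u,u'),f(v,v'))$. Assuming $\Sg{a,c}=\Sg{a,b}$ one may write $b=t(a,c)$, and then
\[
 f(a,b)=f(t(a,a),t(a,c))=t(f(a,a),f(a,c))=t(a,c)=b,
\]
forcing $c=b$; hence whenever $c\ne b$ we must have $\Sg{a,c}\subsetneq\Sg{a,b}$. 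In the alternative case $c=b$, put $d=f(b,a)$; then $d\ne b$, for otherwise $f$ would be semilattice on $\{a,b\}$ and $\th_{ab}$ would equal $\zz$. The identity $f(b,d)=d$ now drives the symmetric computation
\[
 b=f(b,b)=f(t(b,b),t(a,d))=t(f(b,a),f(b,d))=t(d,d)=d,
\]
a contradiction to $\Sg{a,d}=\Sg{a,b}$, so $\Sg{a,d}\subsetneq\Sg{a,b}$.

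Writing $z$ for whichever of $c,d$ was produced, $z\in b^\th$ and $az$ is a semilattice edge of $\cG(\zA)$ (witnessed by the restriction of $\th$ to $\Sg{a,z}$) sitting inside a strictly smaller subalgebra, so the induction hypothesis supplies a (semilattice) path from $a$ to $z$ in $\cG'(\zA)$. For the second leg, $z$ and $b$ both lie in the proper subalgebra $b^\th$ of $\Sg{a,b}$ (proper since $a\notin b^\th$), so Theorem~\ref{the:connectedness} provides a connecting path inside $\cG(b^\th)$, and each thick semilattice edge along it is replaced by the induction hypothesis applied in the smaller algebra $b^\th$. Concatenating the two legs gives the desired replacement.

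The main obstacle is propagating the restricted edge types through the second leg. For the s- and sm-connected clauses the path from $z$ to $b$ inside $b^\th$ must use only semilattice (respectively semilattice and strict-majority) edges, which requires the induction hypothesis to apply inside an ambient graph that is itself s- or sm-connected, namely $\cG(b^\th)$. I plan to establish this by showing separately that s- and sm-connectedness of $\cG(\zA)$ descend to $\cG(b^\th)$ via projection through the idempotent retract $\lambda_b(x)=f(b,x)$, which maps $\zA$ into $b^\th$ by the quotient-semilattice property of $f$ and preserves each edge type because it is a polymorphism of the witnessing relations. The delicate point, and where I expect the most care to be needed, is arranging the endpoints of a projected path to be the prescribed $z$ and $b$---likely by composing $\lambda_b$ with a further $f$-based retract, or by pre-sliding $z$ and $b$ along already available semilattice edges inside $b^\th$.
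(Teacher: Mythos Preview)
Your argument has a fatal error in the ``key algebraic trick.'' The identity
\[
f(t(u,v),t(u',v')) = t(f(u,u'),f(v,v'))
\]
does \emph{not} hold in a general algebra; this is the medial (entropic) law, and demanding it for all pairs of term operations forces the algebra into a very restricted class. In particular, the displayed computation
\[
f(a,b)=f(t(a,a),t(a,c))=t(f(a,a),f(a,c))=t(a,c)=b
\]
is invalid, and there is no reason that $\Sg{a,c}=\Sg{a,b}$ should imply $c=b$. The symmetric computation with $d=f(b,a)$ fails for the same reason. Since this is precisely the step that is supposed to produce a strictly smaller subalgebra on the first leg, the induction collapses.

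The paper's proof avoids this trap by a different maneuver. It does not try to \emph{deduce} $f(a,b)=b$; instead it first connects $b'=f(a,b)$ to $b$ inside the proper subalgebra $b^\th$ via the induction hypothesis, thereby \emph{reducing} to the case $f(a,b)=b$. It then sets $d=f(b,a)\in b^\th$ and splits on whether $\Sg{a,d}$ is proper. When $\Sg{a,d}=\Sg{a,b}$, it picks $t$ with $t(a,d)=b$ and forms the new term $t'(x,y)=t(x,f(y,x))$, checking by direct substitution (using only idempotence of $t$ and the reduction $f(a,b)=b$) that $t'(a,b)=t'(b,a)=b$; so $ab$ is already a thin semilattice edge and nothing more is needed. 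No commutation identity enters. Your retract plan for the second leg is also more elaborate than necessary: the paper runs the entire induction over subalgebras of $\zA$, so that connectivity (respectively s- or sm-connectivity) of $\cG'(b^\th)$ is part of the induction hypothesis rather than something to be derived from a projection.
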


\begin{proof}
Firstly, by Theorem~\ref{the:adding} we may assume that every thick
semilattice edge of $\zA$ is a subalgebra.
It suffices to show that for any semilattice edge $ab$ (in $\cG_f(\zA)$),
the veritces $a,b$ are connected (s-connected or sm-connected) in $\cG'(\zA)$.
So, assume $\zA=\Sg{a,b}$ and $ab$ is a semilattice edge in $\cG_f(\zA)$.
We proceed by induction on order ideals of the lattice $\Sub(\zA)$ of subalgebras 
of $\zA$. The base case 
of induction, when $\Sg{a,b}$ is strictly simple is obvious,
because $\Sg{a,b}=\{a,b\}$ and there is a semilattice operation on
this algebra.

Let $\th$ be the maximal congruence of $\Sg{a,b}$ witnessing that 
$ab$ is an edge. Let $b'=f(a,b)$, then $f(a,b')=b'\in b^\th$. By the induction
hypothesis $b'$ is connected (s-connected,sm-connected) with $b$ in
$\cG'(b^\th)$. Therefore, we may assume $f(a,b)=b$. If 
$\Sg{a,f(b,a)}\ne\Sg{a,b}$ then we are done, because $b$ and
$f(b,a)$ are connected inside $b^\th$ and $a$ and $f(b,a)$ are
connected inside $\Sg{a,f(b,a)}$. Otherwise there is a term operation 
$t$ such that $t(a,f(b,a))=b$. Then, for the operation
$t'(x,y)=t(x,f(y,x))$, we have
\begin{eqnarray*}
&& t'(a,b)=t(a,f(b,a))=b,\\
&& t'(b,a)=t(b,f(a,b))=t(b,b)=b.
\end{eqnarray*}
Thus, there is a semilattice operation on $\{a,b\}$, hence
$t'\red{\{a,b\}}=f\red{\{a,b\}}$. 
\end{proof}

The graph $\cG'(\zA)$ oriented according to a binary term operation $f$ will be
denoted by $\cG'_f(\zA)$. Semilattice edges $ab$ such that $\th_{ab}$ is
the equality relation will be called \emph{thin semilattice edges}.

Using Proposition~\ref{pro:thin-thick} we are able to impose more
restrictions on the term operation $f$.
\begin{prop}\label{pro:good-operation}
Let $\zA$ be a finite algebra omitting type \one. There is a binary
term operation $f$ of $\zA$ such that $f$ is a semilattice operation on
every thick semilattice edge of $\cG(\zA)$ and, for any $a,b\in\zA$, either
$a=f(a,b)$ or the pair $(a,f(a,b))$ is a semilattice edge of $\cG'_f(\zA)$.
\end{prop}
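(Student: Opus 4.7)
The plan is to upgrade the operation $f$ of Lemma~\ref{lem:fgh-identities}(1) --- already semilattice on every thick semilattice edge and satisfying $f(x,f(x,y))=f(x,y)$ --- so that the dual identity
\[
f(f(x,y),x)=f(x,y)
\]
also holds for all $x,y\in\zA$. If such an $f$ is in hand, then for any $a,b\in\zA$ with $c=f(a,b)\ne a$ we get $f(a,a)=a$, $f(c,c)=c$, $f(a,c)=f(a,f(a,b))=c$ and $f(c,a)=f(f(a,b),a)=c$, so $f\red{\{a,c\}}$ is a semilattice operation, $\th_{ac}$ is the equality relation, and $(a,c)$ is a thin semilattice edge of $\cG'_f(\zA)$ oriented from $a$ to $c$, as required.

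To enforce the dual identity, iterate $f$ in its first argument: set $f_0=f$ and $f_{k+1}(x,y)=f(f_k(x,y),x)$, and let $f^\ast=f_n$ where $n$ is the least common multiple of the orders of the unary maps $g_x(t)=f(t,x)$. Writing $f_k(x,y)=g_x^k(f(x,y))$, the choice of $n$ makes each $g_x^n$ idempotent, which gives $f(f^\ast(x,y),x)=f^\ast(x,y)$ and, substituting once more, $f^\ast(f^\ast(x,y),x)=f^\ast(x,y)$. On any thick semilattice edge $f$ is commutative and absorbs $a\wedge b$ from either side, so all iterates coincide with $a\wedge b$ and $f^\ast$ is still semilattice there. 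Finally, apply the construction of Lemma~\ref{lem:fgh-identities}(1) to $f^\ast$ to restore the left identity $f^\ast(x,f^\ast(x,y))=f^\ast(x,y)$, and alternate the two constructions.

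The main obstacle is showing that the alternation terminates at a common fixed point of both identities rather than entering a nontrivial cycle. The key invariant is that each substitution $h\mapsto h(h(x,y),x)$ or $h\mapsto h(x,h(x,y))$ acts as the identity on any pair $(x,c)$ where $h\red{\{x,c\}}$ is already a semilattice operation with $x\le c$, so pairs once captured by both identities remain captured and the collection of captured pairs grows monotonically; finiteness of $\zA$ then gives termination. A cleaner alternative, bypassing the alternation entirely, is to enumerate the pairs $(a,c)$ with $c\in f(a,\zA)\setminus\{a\}$ and argue that each is already a thin semilattice edge of $\cG(\zA)$: by Theorem~\ref{the:uniform} the operation $f$ is first projection on every two-element subalgebra of non-semilattice type, so $f(a,c)=c$ forces $\{a,c\}$ not to be such a subalgebra; Theorem~\ref{the:connectedness} applied inside $\Sg{a,c}$ combined with Proposition~\ref{pro:simple} then yields a term operation semilattice on $\{a,c\}$, and the combination technique from the first part of the proof of Theorem~\ref{the:uniform} merges all these witnessing operations into a single $f$.
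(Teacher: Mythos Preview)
Your reduction to the pair of identities $f(x,f(x,y))=f(x,y)$ and $f(f(x,y),x)=f(x,y)$ is sound: together they do make $\{a,f(a,b)\}$ a thin semilattice edge. The gap is in producing the second identity. Your iteration $f_{k+1}(x,y)=f(f_k(x,y),x)$ gives $f_k(x,y)=g_x^{\,k}(f(x,y))$ with $g_x(t)=f(t,x)$, and idempotence of $g_x^{\,n}$ means $g_x^{\,2n}=g_x^{\,n}$, \emph{not} $g_x^{\,n+1}=g_x^{\,n}$; the latter would require every $g_x$ to have eventual period $1$, which nothing in the hypotheses guarantees (e.g.\ $g_x$ could act as a $2$-cycle on two elements). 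Hence neither $f(f^\ast(x,y),x)=f^\ast(x,y)$ nor $f^\ast(f^\ast(x,y),x)=f^\ast(x,y)$ follows from your construction. The alternation argument inherits this problem, and the monotonicity of ``captured pairs'' shows at best that the set of pairs on which $h$ is semilattice never shrinks---not that it eventually contains every pair $(a,h(a,b))$.

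The ``cleaner alternative'' also fails: you tacitly assume $\{a,c\}$ (with $c=f(a,b)$) is a two-element subalgebra, but $f(c,a)$ may lie outside $\{a,c\}$, and Theorem~\ref{the:uniform} says nothing about pairs that are not already edges. Neither Theorem~\ref{the:connectedness} nor Proposition~\ref{pro:simple} yields a term semilattice on a specific pair $\{a,c\}$ not known to be an edge.

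The paper's argument bypasses the global dual identity entirely. For each $a,b$ it forms $b_0=f(a,b)$, $b_{i+1}=f(a,f(b_i,a))$ and notes that the subalgebras $\Sg{a,b_i}$ form a decreasing chain, hence stabilise at some $b_k$. Stabilisation means $b_k\in\Sg{a,b_{k+1}}$, so there is a term $t$ with $t(a,b_{k+1})=b_k$; then $s(x,y)=t(x,f(x,f(y,x)))$ satisfies $s(a,b_k)=s(b_k,a)=b_k$ directly (using only $f(a,b_k)=b_k$, which follows from $f(x,f(x,y))=f(x,y)$). A uniform $k$ over all pairs then lets one set $f'=f_k$ with $f_{i+1}(x,y)=f(x,f(f_i(x,y),x))$. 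The missing ingredient in your attempt is precisely this subalgebra-stabilisation step: no iteration of $f$ alone replaces the term $t$ extracted from $\Sg{a,b_{k+1}}=\Sg{a,b_k}$.
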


\begin{proof}
Let $f$ be a binary term operation such that $f$ is semilattice on every 
semilattice edge and $f(x, f(x, y)) = f(x, y)$. Let $a, b\in \zA$ be such that
$f(a, b)\ne a$, and set $b_0 = f(a, b)$ and $b_{i+1} = f(a, f(b_i, a))$
for $i > 0$.

\medskip

{\sc Claim 1.} 
For any $i$, $f(a, b_i) = b_i$.

\smallskip

Indeed, $f(a, b_0) = f(a, f(a, b)) = f(a, b) = b_0$, and for any $i > 0$
$$
f(a, b_i) = f(a, f(a, f(b_{i-1}, a))) = f(a, f(b_{i-1}, a)) = b_i.
$$

Let $\zB_i = \Sg{a, b_i}$. Then $\zB_0\supseteq \zB_1\supseteq\ldots$, and there is $k$
with $\zB_{k+1} = \zB_k$.

\medskip

{\sc Claim 2.}
$f(a, b_k) = f(b_k, a) = b_k$.

\smallskip

Since $b_k\in \zB_{k+1} = \Sg{a, b_{k+1}}$, there is a term operation
$t$ such that $b_k = t(a, b_{k+1})$. Let $s(x, y) =
t(x, f(x, f(y, x)))$. For this operation we have
\begin{eqnarray*}
s(a, b_k) &=& t(a, f(a, f(b_k, a))) = t(a, b_{k+1}) = b_k\\
s(b_k, a) &=& t(b_k, f(b_k, f(a, b_k))) = t(b_k, f(b_k, b_k)) = b_k.
\end{eqnarray*}

This means that $ab$ is a semilattice edge, and the congruence witnessing
it is the equality relation. By the choice of $f$, it is a
semilattice operation on any such pair.

Let $k$ be the maximal among the numbers chosen as before
in Claim 2 for all pairs $a, b$ with $f(a, b)\ne a$. Let $f_0 = f$,
and $f_{i+1}(x, y) = f(x, f(f_i(x, y), x))$ for $i\ge 0$. Let also
$f' = f_k$.

\medskip

{\sc Claim 3.} For any $a, b\in A$, either $f'(a, b) = a$, or the
pair $ac$, where $c = f'(a, b)$ is a semilattice edge witnessed by the
equality relation.

\smallskip

If $f(a, b) = a$ then it is straightforward that $f'(a, b) = a$.
Suppose $f(a, b)\ne a$. We proceed by induction. Since
$\zB_k = \zB_{k+1}$, where the $\zB_i$ are constructed as before,
by Claim 2 $f(a, c) = f(c, a) = c$. This gives the base case
of induction. Suppose $f_i(a, c) = f_i(c, a) = c$. Then
\begin{eqnarray*}
f_{i+1}(a, c) &=& f(a, f(f_i(a, c), a) = f(a, f(c, a)) = c\\
f_{i+1}(c, a) &=& f(c, f(f_i(c, a), c) = f(c, f(c, c)) = c.
\end{eqnarray*}
Claim 3 is proved.

\smallskip

To complete the proof it suffices to check that $f'$ is a
semilattice operation on every (thick) semilattice edge of $\cG(\zA)$.
However, this is straightforward from the construction of
$f'$.
\end{proof}

It will be convenient for us to denote binary operation $f$ that satisfies the 
conditions of Theorem~\ref{the:uniform}, Lemma~\ref{lem:fgh-identities}(1),
and Proposition~\ref{pro:good-operation} by $\cdot$, that is, to write $x\cdot y$
or just $xy$ for $f(x,y)$. The fact that $ab$ is a thin semilattice edge we will
also denote by $a\le b$. In other words, $a\le b$ if and only if $ab=ba=b$.

\begin{lemma}\label{lem:sl-thick-thin}
Let $ab$ be a thick semilattice edge, $\th$ the congruence of $\Sg{a,b}$ that witnesses 
this, and $c\in a^\th$. Then there is $d\in b^\th$ such that $cd$ is a thin semilattice 
edge.
\end{lemma}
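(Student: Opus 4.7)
The plan is to take $d := c \cdot b$, where $\cdot$ is the binary term operation fixed in Proposition~\ref{pro:good-operation} (also satisfying the identities of Theorem~\ref{the:uniform} and Lemma~\ref{lem:fgh-identities}(1)). The first step is to check that $d \in b^\th$. Since $ab$ is a thick semilattice edge oriented $a \to b$ by $\cdot$, the induced operation on $\Sg{a,b}\fac\th$ is a semilattice on $\{a^\th, b^\th\}$ with $a^\th \cdot b^\th = b^\th \cdot a^\th = b^\th$. From $c \in a^\th$ this gives $d^\th = c^\th \cdot b^\th = a^\th \cdot b^\th = b^\th$, so indeed $d \in b^\th$.

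The second step is to show that $cd$ is a thin semilattice edge. For this I would apply Proposition~\ref{pro:good-operation} directly to the pair $(c, b)$: it yields that either $c = c \cdot b$, or the pair $(c, c \cdot b)$ is a semilattice edge of $\cG'_f(\zA)$, that is, a thin semilattice edge. The first alternative would give $c = d$, which is impossible because $c \in a^\th$, $d \in b^\th$, and $a^\th \ne b^\th$ (otherwise $\{a^\th, b^\th\}$ would be a singleton, contradicting that $ab$ is a thick semilattice edge). Hence the second alternative holds, so $(c, d)$ is the thin semilattice edge we sought.

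I do not anticipate a serious obstacle: the whole argument is essentially two lines, powered by Proposition~\ref{pro:good-operation}. The only subtleties are the orientation convention (ensuring $a^\th \cdot b^\th = b^\th$, so that $d = c \cdot b$ actually lands in $b^\th$) and the trivial check $c \ne d$. As a sanity check, one can verify the semilattice equations on $\{c,d\}$ directly: $c \cdot d = c \cdot (c \cdot b) = c \cdot b = d$ by the absorption identity of Lemma~\ref{lem:fgh-identities}(1), while $d \cdot c = d$ is the content of the Proposition's conclusion that $(c, d)$ is oriented $c \to d$ in $\cG'_f(\zA)$.
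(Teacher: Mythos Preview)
Your proposal is correct and follows exactly the same approach as the paper's proof: set $d=c\cdot b$, observe $d\in b^\th$ because $\cdot$ is semilattice on $\{a^\th,b^\th\}$ with $a^\th\cdot b^\th=b^\th$, and then invoke Proposition~\ref{pro:good-operation} to conclude that $c\le d$ (the alternative $c=c\cdot b$ being ruled out since $a^\th\ne b^\th$). The paper's version is simply the terse two-line form of your argument.
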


\begin{proof}
By Proposition~\ref{pro:good-operation} $cb=c$ or $c\le cb$. Since $d=cb\in b^\th$ 
the former option is impossible. Therefore $cd$ is a thin semilattice edge.
\end{proof}

\subsection{Thin majority edges}\label{sec:thin-majority}

Here we introduce thin majority edges in a way similar to thin semilattice edges,
although in a weaker sense.

\begin{lemma}\label{lem:thin-majority}
Let $\zA$ be an algebra, $ab$ a majority edge in it, and $\th$ the congruence of 
$\Sg{a,b}$ witnessing that. Then there is $b'\in b^\th$ and a ternary term operation 
$g'$ of $\zA$ such that $g'(a,b',b')=g'(b',a,b')=g'(b',b',a)=b'$.
\end{lemma}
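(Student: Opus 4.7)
The proof will follow the template of Proposition~\ref{pro:good-operation}: set up a non-increasing chain of subalgebras that must stabilize in the finite lattice $\Sub(\zA)$, and extract the required identity from the stabilization.

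Start with the ternary term $g$ supplied by Theorem~\ref{the:uniform} and Lemma~\ref{lem:fgh-identities}(2), so that $g/\th$ is a majority operation on $\{a^\th, b^\th\}$ and the unary polynomial $\vf(y) := g(a, y, y)$ is idempotent. Setting $b_0 = \vf(b) = g(a, b, b)$ immediately yields $b_0 \in b^\th$ and $g(a, b_0, b_0) = b_0$, so one of the three target identities is already in place with $b' = b_0$. The remaining task is to simultaneously arrange $g'(b', a, b') = g'(b', b', a) = b'$ for some $b' \in b^\th$ and some (possibly different) term $g'$.

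The plan is to iterate. Define $b_{i+1}$ as an element of $\Sg{a, b_i}$ built by a formula in $a$ and $b_i$ that (i) lies in $b^\th$, (ii) remains a fixed point of $\vf$ (to preserve the first identity), and (iii) incorporates both $\psi(b_i) := g(b_i, a, b_i)$ and $\chi(b_i) := g(b_i, b_i, a)$, so that a fixed point of the iteration would be a common fixed point of $\vf$, $\psi$, and $\chi$. A natural candidate is $b_{i+1} = \vf\bigl(g(b_i, \psi(b_i), \chi(b_i))\bigr)$ or a symmetrized variant; the outer $\vf$ guarantees (ii) via Lemma~\ref{lem:fgh-identities}(2), while (i) follows because $g/\th$ is a majority. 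The chain $\Sg{a, b_0} \supseteq \Sg{a, b_1} \supseteq \dots$ must stabilize at some $k$, whence $b_k \in \Sg{a, b_{k+1}}$ and there is a binary term $t$ of $\zA$ with $b_k = t(a, b_{k+1})$.

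To finish, set $b' = b_k$ and convert the identity $b_k = t\bigl(a, \vf(g(b_k, \psi(b_k), \chi(b_k)))\bigr)$ into the required $g'$ by replacing the distinguished constant $a$ in each of the three inner positions with an appropriate variable, obtaining for instance $g'(x, y, z) = t\bigl(x, g(x, g(y, x, z), g(y, z, x))\bigr)$; the evaluations on $(a, b', b')$, $(b', a, b')$ and $(b', b', a)$ should then each collapse to $b'$ through the identity of Lemma~\ref{lem:fgh-identities}(2) together with the stabilization identity. The main obstacle is exactly this engineering step: unlike the semilattice case of Proposition~\ref{pro:good-operation}, where only one identity was required, the majority case demands a symmetric treatment of three arrangements of $(a, b')$, and the iteration formula, the extraction of $t$, and the construction of $g'$ from $t$ must be coordinated so that the single stabilization equation implies all three target identities at once.
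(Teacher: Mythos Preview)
Your proposal has a genuine gap precisely where you flag it: the extraction of $g'$ from the stabilization equation. You obtain a single binary term $t$ with $b_k = t(a, b_{k+1})$, and you then need a ternary $g'$ satisfying three identities simultaneously. The formula you suggest, $g'(x,y,z) = t\bigl(x, g(x, g(y,x,z), g(y,z,x))\bigr)$, does not obviously collapse correctly on $(b',a,b')$ or $(b',b',a)$: for instance, $g'(b',a,b') = t\bigl(b', g(b', g(a,b',b'), g(a,b',b'))\bigr) = t(b', g(b',b',b')) = t(b',b')$, and nothing forces $t(b',b') = b'$ unless $t$ is idempotent in a stronger sense than you have arranged. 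More fundamentally, your iteration produces control over \emph{one} composite expression in $a$ and $b_k$, whereas you need control over three, and there is no mechanism in the proposal that links them.

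The paper avoids this difficulty entirely by working relationally rather than term-by-term. After first passing to a $b$ with $\Sg{a,b}$ minimal among $\Sg{a,b'}$ for $b'\in b^\th$ and $g(a,b,b)=b$ (this replaces your iteration and gives the stronger property that $b\in\Sg{a,c}$ for \emph{every} $c\in b^\th\cap\Sg{a,b}$, not just for $c=b_{k+1}$), it considers the subalgebra $\rel\le\zA^3$ generated by $(a,b,b),(b,a,b),(b,b,a)$. Every element of $\rel$ is of the form $\bigl(g'(a,b,b),g'(b,a,b),g'(b,b,a)\bigr)$ for some term $g'$, so the three identities are packaged into the single statement $(b,b,b)\in\rel$. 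This is then established in three short steps: apply $g$ to the generators to get $(b,b',b'')\in\rel$ with $b',b''\in b^\th$; use minimality to find $t$ with $t(a,b')=b$ and apply it to $(b,a,b)$ and $(b,b',b'')$ to get $(b,b,b''')\in\rel$; repeat once more in the third coordinate. The relational viewpoint is exactly what lets a sequence of \emph{different} binary terms cooperate to produce a single ternary $g'$, which is the coordination your approach is missing.
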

\marginpar{Check the `strict' business}

\begin{proof}
Suppose that $b$ is such that $\Sg{a,b}$ is minimal among all subalgebras $\Sg{a,b'}$ 
for $b'\in b^\th$, and such that $g(a,b,b)=b$. Such an element exists by 
Lemma~\ref{lem:fgh-identities}. Consider the ternary relation $\rel$ 
generated by $(a,b,b),(b,a,b),(b,b,a)$. Applying $g$ to these tuples we get 
$(b,b',b'')\in\rel$ for some $b',b''\in b^\th$. Since $b\in\Sg{a,b'}$, say, $t(a,b')=b$, 
$$
\cll bb{b'''}=t\left(\cll bab,\cll b{b'}{b''}\right)\in\rel.
$$
Again, as $b\in\Sg{a,b'''}$, using $(b,b,a),(b,b,b''')\in\rel$ we get $(b,b,b)\in\rel$.
\end{proof}

A majority edge satisying the conditions of Lemma~\ref{lem:thin-majority} will be
called a \emph{thin majority edge}. More precisely, a pair $ab$ is called a thin 
majority edge if (a) it is a majority edge, (b) for any $c\in b^{\th_{ab}}$, 
$b\in\Sg{a,c}$, (c) $g(a,b,b)=b$, and (d) there exists a ternary term operation 
$g'$ such that $g'(a,b,b)=g'(b,a,b)=g'(b,b,a)=b$.  The operation $g$ from 
Theorem~\ref{the:uniform} does not have to satisfy any specific conditions on 
the set $\{a,b\}$, except what follows from its definition. Also, thin majority edges
are directed, since $a,b$ in Lemma~\ref{lem:thin-majority} occur asymmetrically.

\begin{corollary}\label{cor:thin-majority}
For any strict majority edge $ab$, where $\th$ is a witnessing congruence, there is 
$b'\in b^\th$ such that $ab'$ is a thin majority edge.
\end{corollary}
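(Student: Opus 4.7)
My plan is to derive this directly from Lemma~\ref{lem:thin-majority} by being careful about the choice of $b'$. First I would apply the lemma to the given strict majority edge $ab$, with the refinement that $b$ (call it $b'$ in the lemma's conclusion) is chosen so that $\Sg{a,b'}$ is minimal among the subalgebras $\Sg{a,c}$ for $c\in b^\th$ satisfying $g(a,c,c)=c$ (the lemma's proof already starts with such a minimal choice). The lemma then delivers a $b'\in b^\th$ and a ternary term operation $g'$ with
$$
g'(a,b',b')=g'(b',a,b')=g'(b',b',a)=b',
$$
which immediately gives conditions (a), (c), and (d) in the definition of a thin majority edge.

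It remains to verify condition (b): for every $c\in {b'}^{\th_{ab'}}$ one has $b'\in\Sg{a,c}$. Here $\th_{ab'}$ is the smallest congruence of $\Sg{a,b'}$ witnessing that $ab'$ is a majority edge. The key observation is that $\th\cap\Sg{a,b'}^2$ is itself a congruence of $\Sg{a,b'}$ that witnesses the majority type of $ab'$ (since the original witness $g$ on $\Sg{a,b}$ restricts to $\Sg{a,b'}$). By minimality of $\th_{ab'}$ we have $\th_{ab'}\subseteq \th\cap\Sg{a,b'}^2$, and consequently
$$
{b'}^{\th_{ab'}}\;\subseteq\;{b'}^{\th}\cap\Sg{a,b'}\;=\;b^\th\cap\Sg{a,b'}.
$$
So every $c\in {b'}^{\th_{ab'}}$ belongs to $b^\th$; by the minimality used when selecting $b'$, we have $\Sg{a,b'}\subseteq\Sg{a,c}$, hence $b'\in\Sg{a,c}$, which is condition (b).

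The only real subtlety is ensuring that the minimality used in the proof of Lemma~\ref{lem:thin-majority} is strong enough to yield (b) for the \emph{smallest} witnessing congruence $\th_{ab'}$ rather than just for (the restriction of) the original $\th$. The argument above resolves this by noting that the smallest witnessing congruence has the smallest block containing $b'$, so its block is contained in $b^\th$, and the minimality of $\Sg{a,b'}$ among $\Sg{a,c}$ for $c\in b^\th$ then transfers automatically. I expect no further obstacles.
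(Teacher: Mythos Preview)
The paper gives no explicit proof of this corollary; it is meant to follow immediately from Lemma~\ref{lem:thin-majority} together with the definition of a thin majority edge, and your argument is precisely the intended unpacking: the element $b'$ produced in the lemma's proof already satisfies (c) by choice and (d) by the lemma's conclusion, and the minimality of $\Sg{a,b'}$ furnishes (b).

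One small wrinkle worth tightening: you phrase the minimality as ``minimal among $\Sg{a,c}$ for $c\in b^\th$ \emph{satisfying} $g(a,c,c)=c$'', whereas the lemma's proof takes $\Sg{a,b'}$ minimal among \emph{all} $\Sg{a,c}$ with $c\in b^\th$, and separately arranges $g(a,b',b')=b'$ (this is consistent because applying $g(a,-,-)$ cannot enlarge $\Sg{a,-}$, by Lemma~\ref{lem:fgh-identities}(2)). With the paper's reading your inference ``$\Sg{a,b'}\subseteq\Sg{a,c}$ for any $c\in b^\th\cap\Sg{a,b'}$'' is immediate, since $\Sg{a,c}\subseteq\Sg{a,b'}$ forces equality by minimality. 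Under your narrower reading you would still need to pass from $c$ to $g(a,c,c)$ first. Either way the argument is sound, and your handling of the $\th_{ab'}$ versus $\th\!\restriction_{\Sg{a,b'}}$ issue is correct given the paper's convention that $\th_{ab'}$ denotes the smallest witnessing congruence.
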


We now consider the interaction of term operations on thin edges in different 
similar algebras.

\begin{lemma}\label{lem:thin-majority-triple}
Let $\zA_1,\zA_2,\zA_3$ be similar idempotent algebras all omitting type \one.
Let $a_1b_1$, $a_2b_2$, and $a_3b_3$ be thin majority edges in $\zA_1,\zA_2,\zA_3$, 
witnessed by congruences $\th_1,\th_2,\th_3$, respectively. Then there is an operation 
$g'$ such that $g'(a_1,b_1,b_1)=b_1$, $g'(b_2,a_2,b_2)=b_2$, $g'(b_3,b_3,a_3)=b_3$.
\end{lemma}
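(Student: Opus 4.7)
The plan is to construct $g'$ explicitly by composing the unified majority operation from Theorem~\ref{the:uniform} with two binary ``correction'' terms obtained from property~(b) of thin majority edges.

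First, I would invoke the ternary term $g$ from Theorem~\ref{the:uniform}, which is majority modulo $\th_i$ on each thick majority edge and which, by property~(c) in the definition of thin majority edge, satisfies $g^{\zA_i}(a_i,b_i,b_i)=b_i$ for each $i$. Evaluating at the three input patterns yields $g^{\zA_1}(a_1,b_1,b_1)=b_1$ exactly, while $c_2:=g^{\zA_2}(b_2,a_2,b_2)$ lies in $b_2^{\th_2}$ and $c_3:=g^{\zA_3}(b_3,b_3,a_3)$ lies in $b_3^{\th_3}$. So after one application of $g$ the first coordinate is exactly right, while the second and third coordinates are only approximately correct, sitting inside the right congruence blocks.

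Next, I would invoke property~(b) for the thin majority edge $a_2 b_2$: since $c_2\in b_2^{\th_2}$, we have $b_2\in\Sg{a_2,c_2}^{\zA_2}$, so there is a binary term $r_2$ with $r_2^{\zA_2}(a_2,c_2)=b_2$. I would form the partial composition $r_2(y,g(x,y,z))$ and check the three patterns: on $(a_1,b_1,b_1)$ both arguments of $r_2$ equal $b_1$, so idempotence gives $b_1$; on $(b_2,a_2,b_2)$ the value is $b_2$ by the choice of $r_2$; on $(b_3,b_3,a_3)$, since $c_3\eqc{\th_3} b_3$ and term operations preserve $\th_3$, the value $f_3:=r_2^{\zA_3}(b_3,c_3)$ satisfies $f_3\eqc{\th_3} r_2^{\zA_3}(b_3,b_3)=b_3$, hence $f_3\in b_3^{\th_3}$.

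Finally, I would apply property~(b) once more, this time to the edge $a_3 b_3$ and the element $f_3$, picking a binary term $r_3$ with $r_3^{\zA_3}(a_3,f_3)=b_3$, and define
\[
g'(x,y,z)\;=\;r_3\bigl(z,\,r_2(y,\,g(x,y,z))\bigr).
\]
The three required identities then follow by direct verification: in $\zA_1$ and $\zA_2$ the outer $r_3$ sees two copies of the already-fixed value ($b_1$ and $b_2$ respectively), so idempotence yields the correct output; in $\zA_3$ its arguments are $a_3$ and $f_3$, so the value is $b_3$ by choice of $r_3$. The only (mild) obstacle is to ensure that each correction step does not undo the previous ones, which is guaranteed by idempotence of term operations in idempotent algebras together with the invariance of $\th_j$-blocks under term operations.
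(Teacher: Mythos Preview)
Your proof is correct and is essentially the same argument as the paper's, just written out as an explicit term composition rather than as a membership argument in the subalgebra of $\zA_1\times\zA_2\times\zA_3$ generated by $(a_1,b_2,b_3),(b_1,a_2,b_3),(b_1,b_2,a_3)$. In both cases one applies the unified $g$ to fix the first coordinate exactly and land in the correct $\th_i$-blocks in the other two, then uses property~(b) twice (your $r_2,r_3$, the paper's $t,s$) together with idempotence to correct the remaining coordinates one at a time.
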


\begin{proof}
Let $\rel$ be the subalgebra of $\zA_1\tm\zA_2\tm\zA_3$ generated by 
$(a_1,b_2,b_3),\lb(b_1,a_2,b_3),(b_1,b_2,a_3)$. Since $a_1b_1$ satisfies condition
(c) of the definition of thin majority edges,
$$
\cll{b_1}{b'_2}{b'_3}=g\left(\cll{a_1}{b_2}{b_3},\cll{b_1}{a_2}{b_3},
\cll{b_1}{b_2}{a_3}\right)\in\rel
$$
and $b'_2\in b_2^{\th_2}, b'_3\in b_3^{\th_3}$. By condition (b)
$b_2\in\Sg{a_2,b'_2}$, in particular, there is a term operation $t$ such that 
$t(a_2,b'_2)=b_2$. Then 
$$
\cll{b_1}{b_2}{b''_3}=t\left(\cll{b_1}{a_2}{b_3},\cll{b_1}{b'_2}{b'_3}\right)\in\rel,
$$
and $b''_3\in b_3^{\th_3}$. Again by condition (b) $b_3\in\Sg{a_3,b''_3}$, in particular, 
there is a term operation $s$ such that $s(a_3,b''_3)=b_3$. Then 
$$
\cll{b_1}{b_2}{b_3}=s\left(\cll{b_1}{a_2}{b_3},\cll{b_1}{b_2}{b''_3}\right)\in\rel.
$$
The result follows.
\end{proof}

\begin{lemma}\label{lem:majority-sl}
Let $\zA_1,\zA_2$ be similar idempotent algebras all omitting type \one.
Let $ab$ be a thin majority edge in $\zA_1$, witnessed by congruences $\th$, and 
$c\le d$ in $\zA_2$. Then there is an binary operation $t$ such that $t(a,b)=b$ and 
$t(d,c)=d$.
\end{lemma}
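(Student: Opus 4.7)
The plan is to mimic the subdirect product argument from Lemma~\ref{lem:thin-majority-triple}. Specifically, I would consider the subalgebra $\rel\sse\zA_1\tm\zA_2$ generated by the two tuples $(a,d)$ and $(b,c)$, with the goal of showing that $(b,d)$ lies in $\rel$ as the image of a \emph{binary} term evaluated on these generators; any such binary term $t(x,y)$ then automatically satisfies $t(a,b)=b$ in $\zA_1$ and $t(d,c)=d$ in $\zA_2$, which is exactly what the lemma asserts.

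The single computation I plan to use is to apply the ternary term $g$ from Theorem~\ref{the:uniform} in the pattern
\[
g\!\left(\cl{a}{d},\ \cl{b}{c},\ \cl{b}{c}\right)=\cl{g(a,b,b)}{g(d,c,c)}.
\]
In the first coordinate, since $ab$ is a thin majority edge in $\zA_1$, condition~(c) in the definition of a thin majority edge gives $g(a,b,b)=b$. In the second coordinate, $cd$ is a thin, hence strictly semilattice, edge in $\zA_2$, so by Theorem~\ref{the:uniform} the operation $g$ on $\{c,d\}$ coincides with $f(x,f(y,z))=x\cdot(y\cdot z)$; hence $g(d,c,c)=d\cdot(c\cdot c)=d\cdot c=d$, using idempotence and $c\le d$. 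Thus the right-hand side is exactly $(b,d)$, and we may take
\[
t(x,y):=g(x,y,y).
\]

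The argument is really a direct amalgamation of two facts that are already packaged for us: the behaviour of $g$ on a thin majority edge (built into the definition) and the formula for $g$ on a strict semilattice edge (from Theorem~\ref{the:uniform}). The only mildly subtle point is the implicit assumption, inherited from the setup of Lemma~\ref{lem:thin-majority-triple}, that a single term $g$ can be chosen to witness both of these properties simultaneously in the similar algebras $\zA_1$ and $\zA_2$; this is handled exactly as in that preceding lemma, and I do not anticipate any genuine obstacle beyond this piece of bookkeeping.
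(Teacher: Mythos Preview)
Your proposal is correct and essentially identical to the paper's own proof: the paper also generates $\rel\le\zA_1\tm\zA_2$ by $(a,d)$ and $(b,c)$, applies $g$ in the pattern $g((a,d),(b,c),(b,c))$, and takes $t(x,y)=g(x,y,y)$. If anything, your computation in the second coordinate is more careful, since you correctly invoke the formula $g(x,y,z)=x\cdot(y\cdot z)$ on semilattice edges from Theorem~\ref{the:uniform}, whereas the paper's phrase ``$g$ is the first projection on semilattice edges'' is a slight slip (though it yields the same value $g(d,c,c)=d$ here because $c\le d$).
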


\begin{proof}
Let $\rel$ be the subalgebra of 
$\zA_1\tm\zA_2$ generated by $(b,c),(a,d)$. Since $ab$ satisfies condition (c) of
the definition of thin majority edges,
$$
\cl bd=g\left(\cl ad,\cl bc,\cl bc\right)\in\rel,
$$
as $g$ is the first projection on semilattice edges. Therefore $t(x,y)=g(x,y,y)$ satisfies 
the conditions.
\end{proof}

\subsection{Thin affine edges}\label{sec:thin-affine}

\begin{lemma}\label{lem:thin-affine}
Let $\zA$ be an algebra, $ab$ an affine edge in it, and $\th$ the congruence of 
$\Sg{a,b}$ witnessing that. Then there is $b'\in b^\th$ and a ternary term operation 
$h'$ of $\zA$ such that  $h'(b',a,a)=h'(a,a,b')=b'$.
\end{lemma}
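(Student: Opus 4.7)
The plan is to mimic the proof of Lemma~\ref{lem:thin-majority} almost verbatim, but working in $\zA^2$ rather than $\zA^3$ and using the ternary operation $h$ supplied by Theorem~\ref{the:uniform} together with the identity of Lemma~\ref{lem:fgh-identities}(3), namely $h(h(x,y,y),y,y)=h(x,y,y)$.

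First, using that identity, the unary map $\phi(x)=h(x,a,a)$ is idempotent on $\zA$. Replacing $b$ by $\phi(b)$ if necessary (this element lies in $b^\th$ since $h$ is Mal'cev modulo $\th$ on $\Sg{a,b}$), and then choosing among such candidates one for which $\Sg{a,b}$ is minimal, we may assume both that $h(b,a,a)=b$ and that $\Sg{a,b}\sse\Sg{a,b''}$ for every $b''\in b^\th$.

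Next, form $\rel=\Sg{(b,a),(a,a),(a,b)}\sse\zA^2$. Applying $h$ coordinate-wise to its three generators yields
\[
h((b,a),(a,a),(a,b))=(h(b,a,a),h(a,a,b))=(b,b_2),
\]
where $b_2:=h(a,a,b)\in b^\th$ by the Mal'cev identity modulo $\th$. By the minimality assumption $\Sg{a,b_2}=\Sg{a,b}$, so there is a binary term $t$ with $t(a,b_2)=b$. Applying $t$ to $(a,a),(b,b_2)\in\rel$ gives $(t(a,b),t(a,b_2))=(b_3,b)\in\rel$, with $b_3:=t(a,b)\in b^\th$ (since $b_2\eqc\th b$ implies $b_3\eqc\th t(a,b_2)=b$). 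Now using minimality once more, there is a binary term $u$ with $u(a,b_3)=b$, and applying $u$ to the generator $(a,b)$ and to $(b_3,b)\in\rel$ yields
\[
u((a,b),(b_3,b))=(u(a,b_3),u(b,b))=(b,b)\in\rel,
\]
where $u(b,b)=b$ by idempotency. Hence there is a ternary term $h'$ of $\zA$ with $h'((b,a),(a,a),(a,b))=(b,b)$, i.e.\ $h'(b,a,a)=h'(a,a,b)=b$. Taking $b'=b$ gives the conclusion.

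The main thing I expected to be an obstacle was the fact that an affine thin edge is specified by only two identities (rather than three as in the majority case) and that $b'$ appears on both sides of each identity, which makes the naive relational setup in $\zA^2$ generated merely by $(b,a),(a,b)$ fail: that relation lies entirely in the ``skew'' module coset $\{(x,y):x+y\equiv a+b\bmod\th\}$ and cannot meet the diagonal. Including the third generator $(a,a)$ breaks this coset restriction and lets the same two-step minimality-plus-substitution trick from the majority argument close up against the \emph{idempotent} tuple $(a,b)$, producing $(b,b)$ directly; no dual identity on the third argument of $h$ is needed.
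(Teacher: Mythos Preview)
Your proof is correct and follows essentially the same approach as the paper: set up $\rel=\Sg{(b,a),(a,a),(a,b)}$, apply $h$ to get $(b,b_2)$, then use the minimality assumption to pull the second coordinate back to $b$. The paper does the last part in a single step rather than two: instead of applying $t$ to $(a,a)$ and $(b,b_2)$ (which yields $(b_3,b)$ and forces a second pass), it applies $t$ to the generator $(b,a)$ and $(b,b_2)$, giving $(t(b,b),t(a,b_2))=(b,b)$ directly by idempotency. One small wording issue: your claim ``$\Sg{a,b}\sse\Sg{a,b''}$ for every $b''\in b^\th$'' is only justified for $b''\in b^\th\cap\Sg{a,b}$, but since $b_2,b_3\in\Sg{a,b}$ that is all you use.
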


\begin{proof}
Suppose $a,b$ satisfy the conditions of the lemma. 
Let $b'=h(b,a,a)\in\Sg{a,b}$, by Lemma~\ref{lem:fgh-identities}(3) $h(b',a,a)=b'$. 
As is easily seen, there is $b'$ that satisfies this condition and such that for any 
$b''\in{b'}^{\th'}$ where $\th'$ is the restriction of $\th$ on $\Sg{a,b'}$, it holds
$b'\in\Sg{a,b''}$.

Consider relation $\rel$ generated by pairs $(b',a),(a,a),(a,b')$. Since $h(b',a,a)=b'$, 
$$
\cl{b'}{b''}=h\left(\cl{b'}a,\cl aa,\cl a{b'}\right)\in\rel
$$
and $b''\in {b'}^{\th'}$. By the assumption $b'\in \Sg{a,b''}$, in particular, $(b',b')\in\rel$. 
The result follows.
\end{proof}

Similar to the majority case, an affine edge satisying the conditions of 
Lemma~\ref{lem:thin-affine} will be called a \emph{thin affine edge}. More precisely, 
a pair $ab$ is called a thin majority edge if (a) it is an affine edge, (b) for any 
$c\in b^{\th_{ab}}$, $b\in\Sg{a,c}$, (c) $h(b,a,a)=b$, and (d) there exists a ternary 
term operation $h'$ such that $h'(b,a,a)=h'(a,a,b)=b$.  The operation $h$ from 
Theorem~\ref{the:uniform} does not have to satisfy any specific conditions on 
the set $\{a,b\}$, except what follows from its definition. Also, thin affine edges
are directed, since $a,b$ in Lemma~\ref{lem:thin-affine} occur asymmetrically.

\begin{corollary}\label{cor:thin-affine}
For any affine edge $ab$, where $\th$ is a witnessing congruence, there is $b'\in b^\th$ 
such that $ab'$ is a thin affine edge.
\end{corollary}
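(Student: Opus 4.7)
My plan is to obtain the corollary as an essentially immediate consequence of Lemma~\ref{lem:thin-affine}. First I would apply that lemma to the given affine edge $ab$ with witnessing congruence $\th$ in order to produce an element $b'\in b^\th$ together with a ternary term operation $h'$ satisfying $h'(b',a,a)=h'(a,a,b')=b'$. This is precisely clause (d) in the definition of a thin affine edge (with $b$ replaced by $b'$). Clause (c), namely $h(b',a,a)=b'$ for the globally fixed operation $h$ of Theorem~\ref{the:uniform}, is already established inside the proof of Lemma~\ref{lem:thin-affine}: one sets $b'=h(b,a,a)$ and appeals to the idempotency identity $h(h(x,y,y),y,y)=h(x,y,y)$ of Lemma~\ref{lem:fgh-identities}(3).

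Next I would check clauses (a) and (b). For (a), since $b'\in b^\th$, the subalgebra $\Sg{a,b'}$ is contained in $\Sg{a,b}$, and the restriction $\th'=\th\red{\Sg{a,b'}}$ is a congruence of $\Sg{a,b'}$ whose quotient $\Sg{a,b'}\fac{\th'}$ embeds into $\Sg{a,b}\fac\th$. Because $h$ is affine on the latter (a module operation) and the class of affine operations on a module is inherited by subalgebras, $h\fac{\th'}$ is affine on $\Sg{a,b'}\fac{\th'}$, so $\th'$ witnesses that $ab'$ is an affine edge. For (b), I would invoke the minimality clause already used in the proof of Lemma~\ref{lem:thin-affine}: among all elements in $b^\th$ satisfying the $h$-identity of (c), one chooses $b'$ so that $\Sg{a,b'}$ is minimal, and this choice ensures that $b'\in\Sg{a,b''}$ for every $b''\in {b'}^{\th'}$.

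The one subtlety is that (b) in the definition refers to $\th_{ab'}$, the smallest congruence witnessing the type of $ab'$, rather than to the particular $\th'$ we have constructed. I would dispose of this by observing $\th_{ab'}\sse\th'$, so ${b'}^{\th_{ab'}}\sse {b'}^{\th'}$, and hence the stronger minimality statement for $\th'$ implies the required statement for $\th_{ab'}$. I do not anticipate any serious obstacle here; the entire content of the corollary is that the existence statement of Lemma~\ref{lem:thin-affine} can be upgraded to the packaged definition of a thin affine edge, and the minimality tweak used inside the lemma is precisely what makes this packaging work.
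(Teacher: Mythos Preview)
Your proposal is correct and matches the paper's approach: the paper states the corollary without proof, intending it to follow immediately from Lemma~\ref{lem:thin-affine} together with the definition of a thin affine edge, and you have simply spelled out those details (including the minimality clause already built into the lemma's proof for condition~(b) and the passage from $\th'$ to $\th_{ab'}$).
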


\begin{lemma}\label{lem:thin-affine-pair}
Let $\zA_1,\zA_2$ be similar idempotent algebras all omitting type \one.
Let $ab$ and $cd$ be thin affine edges in $\zA_1,\zA_2$, witnessed by congruences 
$\th_1,\th_2$, respectively. Then there is an operation $h'$ such that $h'(b,a,a)=b$ 
and $h'(c,c,d)=d$.
\end{lemma}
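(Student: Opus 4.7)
The plan is to proceed in exactly the same spirit as Lemma~\ref{lem:thin-majority-triple} and Lemma~\ref{lem:majority-sl}: reformulate the existence of $h'$ as the membership of a particular pair in an appropriate subalgebra of the product $\zA_1\times\zA_2$, and then produce that pair from the generators by successive term operations. Concretely, let $\rel$ be the subalgebra of $\zA_1\times\zA_2$ generated by $(b,c),(a,c),(a,d)$. Observe that exhibiting a ternary term $h'$ with $h'(b,a,a)=b$ in $\zA_1$ and $h'(c,c,d)=d$ in $\zA_2$ is equivalent to showing $(b,d)\in\rel$, because any term producing $(b,d)$ from these three generators has precisely the required effect coordinatewise.

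The first step is to apply the operation $h$ from Theorem~\ref{the:uniform} (satisfying Lemma~\ref{lem:fgh-identities}(3)) to the three generators. Using condition (c) of the definition of a thin affine edge on the $\zA_1$-coordinate we get $h(b,a,a)=b$; on the $\zA_2$-coordinate, $h$ is an affine operation on $\Sg{c,d}\fac{\th_2}$, so $h(c,c,d)=d'$ for some $d'\in d^{\th_2}$. Thus
\[
\left(\begin{array}{c} b\\ d'\end{array}\right)
= h\!\left(\left(\begin{array}{c} b\\ c\end{array}\right),\left(\begin{array}{c} a\\ c\end{array}\right),\left(\begin{array}{c} a\\ d\end{array}\right)\right)\in\rel.
\]

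The second step removes the discrepancy $d'\ne d$ using condition (b) of the definition of a thin affine edge for $cd$: since $d'\in d^{\th_2}$, we have $d\in\Sg{c,d'}$, so there is a binary term $t$ with $t(c,d')=d$. Applying $t$ to the two tuples $(b,c)$ and $(b,d')$ (the first a generator, the second produced in step one) yields $(t(b,b),t(c,d'))=(b,d)\in\rel$ by idempotence of $t$, and this is precisely what we needed.

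The only thing that could reasonably be called an obstacle is noticing that the analogue of the "clean-up" step used in the majority case (Lemma~\ref{lem:thin-majority-triple}) still works here despite the asymmetry of the affine identities: the uniform operation $h$ of Theorem~\ref{the:uniform} only guarantees $h(b,a,a)=b$ on the nose (condition (c)), while on the second coordinate it may produce some other representative of $d^{\th_2}$, and this is exactly the situation that condition (b) of thin affine edges was designed to handle. Once this is observed, the argument is a two-line generation computation.
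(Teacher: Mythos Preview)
Your proof is correct and essentially identical to the paper's own argument: the paper also takes $\rel=\Sg{(b,c),(a,c),(a,d)}$, applies $h$ to obtain $(b,d')\in\rel$ with $d'\in d^{\th_2}$, and then invokes condition~(b) to conclude $(b,d)\in\rel$. You have simply made the final step (applying a binary term $t$ with $t(c,d')=d$ to the pairs $(b,c)$ and $(b,d')$) explicit where the paper leaves it implicit.
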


\begin{proof}
Let $\rel$ be the subalgebra of $\zA_1\tm\zA_2$ generated by $(b,c),(a,c),(a,d)$. 
By condition (c) of the definition of thin affine edges,
$$
\cl b{d'}=h\left(\cl bc,\cl ac,\cl ad\right)\in\rel
$$
and $d'\in d^{\th_2}$. By condition (b) $\Sg{c,d'}=\Sg{c,d}$, in particular, 
$(b,d)\in\rel$. The result follows.
\end{proof}

\begin{lemma}\label{lem:affine-sl}
Let $\zA_1,\zA_2$ be similar idempotent algebras all omitting type \one.
Let $ab$ be a thin affine edge in $\zA_1$, witnessed by congruences $\th$, and $c\le d$ in $
\zA_2$. Then there is an operation $r$ such that $r(b,a)=b$ and $r(c,d)=d$.
\end{lemma}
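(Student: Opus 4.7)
The plan is to mimic closely the proof of Lemma~\ref{lem:majority-sl}: form the subalgebra $\rel$ of $\zA_1\tm\zA_2$ generated by the two tuples $(b,c)$ and $(a,d)$, produce the tuple $(b,d)$ inside it by applying a single term operation, and then read off the required binary operation $r$ by identifying variables.

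The natural operation to apply is the ternary $h$ from Theorem~\ref{the:uniform} (satisfying the identities of Lemma~\ref{lem:fgh-identities}). On the affine edge $ab$, condition (c) in the definition of a thin affine edge gives $h(b,a,a)=b$ directly. On the pair $c\le d$ in $\zA_2$, the pair $cd$ is a (strict) semilattice edge, and by Theorem~\ref{the:uniform} the restriction of $h$ to a semilattice edge is $h(x,y,z)=f(x,f(y,z))$, which on $c,d,d$ evaluates to $f(c,f(d,d))=f(c,d)=cd=d$ using idempotence of $f$ and the definition $c\le d$. Therefore
$$
h\left(\cl bc,\cl ad,\cl ad\right)=\cl{h(b,a,a)}{h(c,d,d)}=\cl bd\in\rel.
$$

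It now suffices to set $r(x,y)=h(x,y,y)$. Then by construction $r(b,a)=h(b,a,a)=b$ and $r(c,d)=h(c,d,d)=d$, exactly as required. There is no real obstacle here; the only subtlety is to remember that on a semilattice edge the operation $h$ was defined in Theorem~\ref{the:uniform} as $f(x,f(y,z))$ rather than as a projection or affine operation, and to invoke the thin-affine axiom $h(b,a,a)=b$ rather than the weaker symmetric version. The argument is the exact affine analogue of the majority/semilattice interaction established in Lemma~\ref{lem:majority-sl}.
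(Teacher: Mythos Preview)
Your proof is correct and essentially identical to the paper's: both generate $\rel\le\zA_1\tm\zA_2$ by $(b,c),(a,d)$, apply $h$ to $\left(\cl bc,\cl ad,\cl ad\right)$ using condition~(c) of thin affine edges for the first coordinate and the semilattice behaviour $h(x,y,z)=x\cdot(y\cdot z)$ from Theorem~\ref{the:uniform} for the second, and then take $r(x,y)=h(x,y,y)$.
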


\begin{proof}
Let $\rel$ be the subalgebra of $\zA_1\tm\zA_2$ generated by $(b,c),(a,d)$. 
By condition (c) of the definition of thin affine edges,
$$
\cl bd=h\left(\cl bc,\cl ad,\cl ad\right)\in\rel,
$$
as $h(x,y,z)=xyz$ on semilattice edges. The result follows.
\end{proof}

\begin{lemma}\label{lem:affine-maj}
Let $ab$ be a thin affine edge in $\zA_1$, witnessed by congruences $\th$, and $cd$ 
is a thin majority edge in $\zA_2$. Then there is a binary operation $t$ such that 
$t(b,a)=b$ and $t(c,d)=d$.
\end{lemma}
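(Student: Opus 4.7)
The plan is to mimic the strategy used in the preceding lemmas (Lemma~\ref{lem:majority-sl} and Lemma~\ref{lem:affine-sl}): form the subalgebra $\rel$ of $\zA_1\tm\zA_2$ generated by the pairs $(b,c)$ and $(a,d)$, and show that $(b,d)\in\rel$. Since every element of $\rel$ has the form $\bigl(t(b,a),t(c,d)\bigr)$ for some binary term operation $t$ of the common signature, proving $(b,d)\in\rel$ immediately produces the required $t$ with $t(b,a)=b$ and $t(c,d)=d$.

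The first step will be to apply the majority operation $g$ from Theorem~\ref{the:uniform} to the generating tuples. On the thin majority edge $cd$, condition (c) of the definition gives $g(c,d,d)=d$ exactly. On the thin affine edge $ab$, we only know from Theorem~\ref{the:uniform} that $g$ is the first projection modulo $\th_{ab}$, so $g(b,a,a)=b'$ for some $b'\in b^{\th_{ab}}$. Applying $g$ componentwise to $(b,c),(a,d),(a,d)$ therefore yields $(b',d)\in\rel$.

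The second step restores exactness on the first coordinate. By condition (b) for the thin affine edge $ab$, every element of $b^{\th_{ab}}$ generates $b$ together with $a$; in particular $b\in\Sg{a,b'}$, so there is a binary term $s$ with $s(a,b')=b$. Applying $s$ to the tuples $(a,d)$ and $(b',d)$ and using idempotency of $\zA_2$ to get $s(d,d)=d$, we obtain $(b,d)\in\rel$, as desired.

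I expect no significant obstacle: this is the same two-step pattern (apply the type operation, then absorb the error modulo $\th$ using condition (b)) that already appears in Lemmas~\ref{lem:thin-majority-triple} and~\ref{lem:thin-affine-pair}. The only point that needs care is checking that $g$ indeed acts as the first projection modulo $\th_{ab}$ on the affine edge and as a genuine majority on the thin majority edge; both facts follow directly from Theorem~\ref{the:uniform} and condition (c) of the thin majority definition. Idempotency of $s$ on the second coordinate is automatic since all algebras in the paper are idempotent.
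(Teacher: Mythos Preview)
Your proposal is correct and is essentially identical to the paper's own proof: form $\rel=\Sg{(b,c),(a,d)}$, apply $g$ to $(b,c),(a,d),(a,d)$ to obtain $(b',d)\in\rel$ with $b'\in b^{\th}$ (using that $g(c,d,d)=d$ by condition~(c) of thin majority and that $g$ is the first projection modulo $\th$ on the affine side), and then use condition~(b) of the thin affine edge to get $b\in\Sg{a,b'}$ and hence $(b,d)\in\rel$. The only cosmetic difference is that you spell out the last step via an explicit term $s$ with $s(a,b')=b$ and idempotency $s(d,d)=d$, which the paper compresses into the phrase ``as $b\in\Sg{a,b'}$, we get $(b,d)\in\rel$''.
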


\begin{proof}
Let $\zA_1,\zA_2$ be similar idempotent algebras all omitting type \one.
Let $\rel$ be the subalgebra of $\zA_1\tm\zA_2$ generated by $(b,c),(a,d)$. 
By condition (c) of the definition of thin majority edges,
$$
\cl {b'}d=g\left(\cl bc,\cl ad,\cl ad\right)\in\rel,
$$
where $b'\in b^{\th_1}$, as $g$ is the first projection on $\Sg{a,b}\fac{\th_1}$. 
Then as $b\in\Sg{a,b'}$, we get $(b,d)\in\rel$. The result follows.
\end{proof}

\section{Connectivity}\label{sec:connectivity}

Let $\zA$ be an algebra omitting type \one. A \emph{path} in $\zA$ is a sequence 
$a_0,a_1\zd a_k$ such that $a_{i-1}a_i$ is a thin edge for all $i\in[k]$ (note that thin 
edges are always assumed to be directed). We will 
distinguish paths of several types depending on what types of edges are allowed. 
If $a_{i-1}\le a_i$ for $i\in[k]$ then the path is called a \emph{semilattice} or 
\emph{s-path}. If for every $i\in[k]$ either $a_{i-1}\le a_i$ or $a_{i-1}a_i$ is a thin
affine edge then the path is called \emph{affine-semilattice} or \emph{as-path}. 
Similarly, if only semilattice and thin majority edges are allowed we have a 
\emph{semilattice-majority} or \emph{sm-path}. 
We say that $a$ is \emph{connected} to $b$, $a,b\in\zA$, if there is a path 
$a=a_0,a_1\zd a_k=b$. If this path is semilattice (aftine-semilattice, 
semilattice-majority) then $a$ is said to be \emph{s-connected} (or \emph{as-connected},
or \emph{sm-connected}) to $b$. We denote this by $a\sqq b$ (for s-connectivity),
$a\sqq_{as}b$ and $a\sqq_{sm}b$ for as- and sm-connectivity, respectively.

Let $\cG''_s(\zA)$ denote the digraph whose nodes are the elements of $\zA$, and
the arcs are the thin semilattice edges. The strongly connected component of $\cG_s(\zA)$
containing $a\in\zA$ will be denoted by $\wh a$. The set of strongly connected 
components of $\cG_s(\zA)$ are ordered in the natural way (if $a\le b$ then $\wh a\le \wh b$), 
the elements belonging to maximal ones will be called \emph{maximal}, and
the set of all maximal elements from $\zA$ by $\max(\zA)$. In a similar way
we construct the graph $\cG_{as}(\zA)$ by including all the thin semilattice and 
affine edges. The strongly connected component of $\cG_{as}(\zA)$ containing
$a\in\zA$ will be denoted by $\as(a)$. A maximal strongly connected component 
of this graph is called an \emph{as-component}, an element from an as-component 
is called \emph{as-maximal}, and the set of all as-maximal elements is denoted by 
$\amax(\zA)$.

In this section we show that all maximal elements are connected to each other.
The undirected connectivity easily follows from the definitions, so the challenge 
is to prove directed connectivity, as defined above. We start with an auxiliary lemma.

Let $\rel\le\zA_1\tm\dots\tm\zA_k$ be a relation. Recall that $\tol_i$, $i\in[k]$,  denotes
the \emph{link} tolerance 
\begin{eqnarray*}
&& \{(a_i,a'_i)\in\zA_i^2\mid (a_1\zd a_{i-1},a_i,a_{i+1}\zd a_k),\\
&& \quad 
(a_1\zd a_{i-1},a'_i,a_{i+1}\zd a_k)\in\rel, \text{ for some 
$(a_1\zd a_{i-1},a_{i+1}\zd a_k)$}\}.
\end{eqnarray*}

\begin{lemma}\label{lem:going-maximal}
Let $\zA=\Sg{a,b}$ be simple, $a,b\in\max(\zA)$, and $\rel$ a
subdirect square of $\zA$. Let also $\rela=\tol_2\rel$ be the tolerance defined
by $\{(c,d)\in\zA^2\mid (e,c),(e,d)\in\rel\ \text{ for some } e\}$. If
$\rela$ is a connected tolerance then there is a sequence $a=d_1\zd
d_k=b'$ such that $(d_i,d_{i+1})\in\rela$, $d_i$ is maximal, $b'\in\wh
b$, and if $a_i$ is such that $(a_i,d_i),(a_i,d_{i+1})\in\rel$ then
$a_i$ can also be chosen maximal. 
\end{lemma}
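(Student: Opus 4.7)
The plan is to use connectedness of $\rela$ to first produce a rough $\rela$-path from $a$ to $b$, and then to upgrade both the intermediate vertices and their $\rel$-witnesses to maximal elements by performing controlled multiplications inside the subalgebra $\rel\le\zA^2$. The main tools are subdirectness of $\rel$, Proposition~\ref{pro:good-operation} (any application of $\cdot$ either fixes its first argument or moves it one thin semilattice edge upward), and the idempotency identity $f(x,f(x,y))=f(x,y)$ from Lemma~\ref{lem:fgh-identities}(1), which ensures that iterated multiplications stabilize in finitely many steps. Since $\rela$ is a connected tolerance on the simple algebra $\zA$, its transitive closure is the total relation, so I can pick an arbitrary $\rela$-path $a=c_0,c_1\zd c_m=b$ together with witnesses $(a_i,c_i),(a_i,c_{i+1})\in\rel$; note $c_0$ and $c_m$ are already maximal.

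For each $i$ I fix $\bar a_i\in\max(\zA)$ with $a_i\sqq\bar a_i$ and, by subdirectness of $\rel$, some $v_i$ with $(\bar a_i,v_i)\in\rel$. Since $\rel$ is a subalgebra of $\zA^2$ and $a_i\cdot\bar a_i=\bar a_i$,
\[
(a_i,c_i)\cdot(\bar a_i,v_i)=(\bar a_i,c_iv_i)\in\rel,\qquad (a_i,c_{i+1})\cdot(\bar a_i,v_i)=(\bar a_i,c_{i+1}v_i)\in\rel,
\]
upgrading the witness to maximal; Proposition~\ref{pro:good-operation} gives $c_i\sqq c_iv_i$ and $c_{i+1}\sqq c_{i+1}v_i$. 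Next I upgrade the second coordinate. Fix in advance, uniformly for each $i$, a single $\bar c_i\in\max(\zA)$ with $c_i\sqq\bar c_i$, with $\bar c_0=a$ and $\bar c_m=b$. By subdirectness pick $(w_i,\bar c_i)\in\rel$ and multiply $(\bar a_i,c_iv_i)\cdot(w_i,\bar c_i)$: the first coordinate lies in $\wh{\bar a_i}$ (because $\bar a_i\in\max(\zA)$) and the second lies $\sqq$-above $c_iv_i$. Iterating this multiplication, Lemma~\ref{lem:fgh-identities}(1) together with finiteness of $\zA$ forces the first coordinate to stabilize at $\bar a_i$ and the second at $\bar c_i$. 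Applying the same procedure to $(\bar a_i,c_{i+1}v_i)$ produces $(\bar a_i,\bar c_{i+1})\in\rel$, so $(\bar c_i,\bar c_{i+1})\in\rela$ with maximal witness. Concatenating gives the required sequence $a=\bar c_0,\bar c_1\zd \bar c_{m-1},b'$ with $b'=\bar c_m\in\wh b$.

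The main obstacle is the global consistency of the upgrading step: one must ensure that the element $\bar c_i$ used as the right endpoint of the upgrade of the edge $(c_{i-1},c_i)$ coincides with the one used as the left endpoint of the upgrade of $(c_i,c_{i+1})$, and moreover that the starting element is exactly $a$ rather than just some element of $\wh a$. Both issues are handled by the device of pre-selecting a single lift $\bar c_i$ per vertex and then using subdirectness of $\rel$ together with $f(x,f(x,y))=f(x,y)$ to argue that the iterated multiplications on the second coordinate actually converge to the chosen $\bar c_i$ (not to a different maximal lift). For the endpoint $a$, maximality means every multiplication leaves the second coordinate inside $\wh a$; a final sequence of multiplications routes it back to $a$ itself, so that $d_1=a$ exactly, and the symmetric treatment at the other end yields $d_k=b'\in\wh b$.
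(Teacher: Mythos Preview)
The central gap is the claim that $a_i\cdot\bar a_i=\bar a_i$ and, more generally, that iterating $z\mapsto z\cdot\bar c_i$ converges to the pre-selected $\bar c_i$. The relation $a_i\sqq\bar a_i$ only gives a semilattice \emph{path} $a_i=p_0\le p_1\le\dots\le p_r=\bar a_i$; it does not give $a_i\cdot\bar a_i=\bar a_i$ unless $r=1$. The identity $f(x,f(x,y))=f(x,y)$ says $x\cdot(xy)=xy$, i.e.\ the map $z\mapsto x\cdot z$ is idempotent; it says nothing about the orbit of $z\mapsto z\cdot y$. By Proposition~\ref{pro:good-operation}, $z\cdot y$ is either $z$ or a thin-semilattice successor of $z$, so iterating from $a_i$ climbs in $\cG'_f(\zA)$ but need not reach the particular element $\bar a_i$ (it may stabilise in a different maximal component, or at a different element of $\wh{\bar a_i}$). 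The same objection applies to the second coordinate: once $c_i$ has been perturbed to $c_iv_i$ there is no reason $c_iv_i\sqq\bar c_i$, let alone that multiplying by $\bar c_i$ lands on it. A further symptom: your two separate iterations (one aimed at $\bar c_i$, one at $\bar c_{i+1}$) will in general move the witness $\bar a_i$ to two different elements of $\wh{\bar a_i}$, so you lose a common witness.

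The paper's proof avoids this by never aiming at a pre-selected target. It takes an actual semilattice path $d_{i+1}=e_1\le\dots\le e_s$ to a maximal element and multiplies the \emph{entire tail} of the $\rela$-sequence by $(b_j,e_j)\in\rel$ one edge at a time: because $e_{j-1}\le e_j$ is a single thin edge, $e_{j-1}\cdot e_j=e_j$, so the $(i{+}1)$st vertex is forced to equal $e_j$ at step $j$ and hence $e_s$ at the end. The already-maximal $d_i$ drifts within $\wh{d_i}$; a second pass along a path $d_i^s\le\dots\le d_i$ restores it exactly, while keeping the newly maximal $d_{i+1}$ inside its maximal component. The essential point is that only single-edge products $e_{j-1}\cdot e_j=e_j$ are controllable; multiplications across a multi-step $\sqq$ relation are not.
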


\begin{proof}
We start with any sequence $a=d_1\zd d_k=b$, $(d_i,d_{i+1})\in\rela$
connecting $a$ and $b$. Such a sequence exists because $\rela$ is a
connected tolerance. We prove by induction on $k$.  
The base case of induction is obvious by the choice
of $a$. Suppose $d_i$ is maximal. Let $d_{i+1}=e_1\le\ldots\le e_{s}$ be
a semilattice path and $e_s$ a maximal element. Let also
$(b_j,e_j)\in\rel$ be extensions of the $e_j$ and
$(a_q,d_q),(a_q,d_{q+1})\in\rel$ for $q\in[k-1]$. Then for each $q$, $i\le
q\le k-1$, we construct the sequence $a_q=a_q^1\le\ldots\le a_q^s$ and for
each $q$, $i\le q\le k$, the sequence $d_q=d_q^1\le\ldots\le d_q^s$, where 
$$
a_q^j=a_q^{j-1}\cdot b_j\qquad\mbox{and}\qquad d_q^j=d_q^{j-1}\cdot e_j.
$$
Then observing that
\begin{eqnarray*}
&& \cl{a_q^1}{d_q^1}=\cl{a_q}{d_q}\qquad\mbox{and}\qquad 
\cl{a_q^{j+1}}{d_q^{j+1}}=\cl{a_q^j}{d_q^j}\cdot\cl{b_{j+1}}{e_{j+1}},
\qquad\mbox{and}\\ 
&& \cl{a_q^1}{d_{q+1}^1}=\cl{a_q}{d_{q+1}}\qquad\mbox{and}\qquad 
\cl{a_q^{j+1}}{d_{q+1}^{j+1}}=\cl{a_q^j}{d_{q+1}^j}\cdot
\cl{b_{j+1}}{e_{j+1}}
\end{eqnarray*}
we get that $\cl{a_q^{s}}{d_q^s},\cl{a_q^{s}}{d_{q+1}^s}\in\rel$ for
any $i\le q\le n-1$. Note also that $d_{i+1}^s$ is a maximal element.
Continuing in a similar way we also can guarantee that $a_q^s$ is a
maximal element. 

This process replaces $d_{i+1}$ with a maximal element. However, $d_i$ is also
replaced with another element, and we need to restore the connection of $d_i$ with 
the preceding elements. Since $d_i$ is maximal and $d_i\sqq d_i^s$, these two 
elements belong to the same maximal component. Therefore, there is a semilattice path
$d_i^s=e'_1\le\ldots\le e'_t=d_i$. We now proceed as before. Let $(b'_j,e'_j)\in\rel$ be
extensions of the $e'_j$. Then for each $q$, $i\le q\le k-1$, we construct
sequence $a_q^s=p_q^1\le\ldots\le p_q^t$ and for each $q$, $i\le q\le k$,
sequence $d_q^s=r_q^1\le\ldots\le r_q^t$, where 
$$
p_q^j=p_q^{j-1}\cdot b'_j\qquad\mbox{and}\qquad r_q^j=r_q^{j-1}\cdot e'_j.
$$
Then observing that
\begin{eqnarray*}
&& \cl{p_q^1}{r_q^1}=\cl{a_q^s}{d_q^s}\qquad\mbox{and}\qquad 
\cl{p_q^{j+1}}{r_q^{j+1}}=\cl{p_q^j}{r_q^j}\cdot
\cl{b'_{j-1}}{e'_{j-1}}, \qquad\mbox{and}\\ 
&& \cl{p_q^1}{r_{q+1}^1}=\cl{p_q}{r_{q+1}}\qquad\mbox{and}\qquad 
\cl{p_q^{j+1}}{r_{q+1}^{j+1}}=\cl{p_q^j}{r_{q+1}^j}\cdot\cl{b'_{j-1}}{e'_{j-1}} 
\end{eqnarray*}
we get that $\cl{p_q^t}{r_q^t},\cl{p_q^t}{r_{q+1}^t}\in\rel$ for any
$i\le q\le n-1$. Note also that $r_i^t=d_i$, $r_{i+1}^t$ is a maximal
element, and $r_n^t$ belongs to the same maximal component as $b$.  
\end{proof}

\begin{prop}\label{pro:as-connectivity}
Let $a,b\in\max(\zA)$. Then $a$ is connected to $b$.
\end{prop}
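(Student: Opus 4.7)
The plan is an induction on $|A|$. By Theorem~\ref{the:connectedness}, $\cG(\zA)$ is connected as an undirected graph, so the task is to upgrade undirected connectivity to a directed thin-edge path through maximal elements. First I would reduce to $\zA=\Sg{a,b}$: if $\Sg{a,b}$ is proper then any semilattice edge in $\Sg{a,b}$ is also one in $\zA$, so $a,b\in\max(\zA)$ implies $a,b\in\max(\Sg{a,b})$, and induction applies. Assuming $\zA=\Sg{a,b}$, I would next choose a maximal congruence $\th$ of $\zA$. Since semilattice edges project to semilattice edges, $a^\th,b^\th\in\max(\zA\fac\th)$, and applying induction to $\zA\fac\th$ (or treating this as the base case when $\zA$ is itself simple) yields a thin-edge path $a^\th=p_0,p_1,\dots,p_m=b^\th$ in $\zA\fac\th$.

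The heart of the argument is lifting this path back to $\zA$ with all intermediate vertices in $\max(\zA)$. For each factor edge $p_{i-1}p_i$, the constructions of Section~\ref{sec:thin} provide a compatible relation in $\zA$ linking representatives of $p_{i-1}$ and $p_i$ by a thin edge of the appropriate type (semilattice, majority, or affine). The difficulty is that the representative of $p_i$ chosen by one lift may disagree with the one chosen by the next, and we must ensure that every intermediate vertex of the final path is maximal. To handle this I would assemble a subdirect square $\rel\le\zA\tm\zA$ whose generators encode the endpoints of the lifted edges along the factor path, and then invoke Lemma~\ref{lem:going-maximal}: the link tolerance $\rela=\tol_2\rel$ inherits connectivity from the factor path, so the lemma produces a sequence $a=d_1,\dots,d_k=b'$ of maximal elements with $b'\in\wh b$ and $(d_i,d_{i+1})\in\rela$ for every $i$. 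By the construction of $\rel$ together with the interaction lemmas (Lemma~\ref{lem:sl-thick-thin}, Corollaries~\ref{cor:thin-majority} and~\ref{cor:thin-affine}, and Lemmas~\ref{lem:thin-majority-triple}, \ref{lem:thin-affine-pair}, \ref{lem:majority-sl}, \ref{lem:affine-sl}, \ref{lem:affine-maj}), each step $(d_i,d_{i+1})$ between two maximal elements is realized by a thin edge. A final semilattice path inside $\wh b$ then takes $b'$ to $b$.

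The main obstacle will be arranging the subdirect square $\rel$ so that its link tolerance $\rela$ is genuinely connected and so that each $\rela$-step between maximal elements corresponds to a single correctly oriented thin edge rather than a zig-zag. This is where the different interaction lemmas enter: each of them handles a specific combination of edge types that may arise when two consecutive segments of the lifted path have different flavours (semilattice, majority, or affine), and the right lemma must be selected at each junction. The base case, when $\zA$ is simple, is comparatively direct: Proposition~\ref{pro:simple} together with Theorem~\ref{the:adding} produces an edge $ab$ in $\cG(\zA)$, and the constructions of Section~\ref{sec:thin} turn this into a thin edge from $a$ to some $b'\in b^{\th_{ab}}\cap\wh b$, after which a semilattice path inside $\wh b$ from $b'$ to $b$ completes the argument.
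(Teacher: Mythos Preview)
Your reduction to $\zA=\Sg{a,b}$ has a slip: maximality in $\zA$ need not descend to $\Sg{a,b}$, since the s-path back from some $c\ge a$ guaranteed by $a\in\max(\zA)$ may leave $\Sg{a,b}$. The paper handles this by passing to $c,d\in\max(\Sg{a,b})$ above $a,b$ and afterward using $d\in\wh b$ in $\zA$. A second mismatch: Lemma~\ref{lem:going-maximal} is stated only for simple $\zA$, so it is not available for your lifting from $\zA\fac\th$; in the paper that lift (Claim~2) is done elementwise via Lemmas~\ref{lem:sl-thick-thin}, \ref{lem:thin-majority}, \ref{lem:thin-affine}, and Lemma~\ref{lem:going-maximal} is reserved for the simple case.

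The real gap, however, is the simple case itself, which you call ``comparatively direct''. It is the bulk of the paper's argument. Proposition~\ref{pro:simple} does not hand you an edge $ab$: most of its clauses only say that $a,b$ are connected through proper subalgebras of $\zA$, and upgrading such a chain to a directed thin-edge path through maximal elements already requires the full induction. Worse, when $\zA$ admits an automorphism swapping $a$ and $b$, the relation $\rel$ generated by $(a,b),(b,a)$ is the graph of a bijection, its link tolerance is the equality relation, and Lemma~\ref{lem:going-maximal} is vacuous. The paper devotes Case~2 and Claim~4 to exactly this situation: it introduces v-connectivity (a common element $c$ with $c\sqq a$ and $c\sqq b$), runs a secondary induction on the depth of $c$, and in the base step explicitly constructs a majority operation on $\{a,b\}$ from the automorphism. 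Your plan has no mechanism for this case, and none of the interaction lemmas you cite supplies one.
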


\begin{proof}
We prove the proposition by induction on the size of $\zA$ through a sequence of claims.

\smallskip

{\sc Claim 1.}
$\zA$ can be assumed to be $\Sg{a,b}$.

\smallskip

If $a,b\in\max(\zB)$, $\zB=\Sg{a,b}$, then we are done by the induction hypothesis. 
Suppose they are not and let $c,d\in\max(\zB)$ be such that $a\sqq c$ and $b\sqq d$. 
By the induction hypothesis $c$ is connected to $d$. 
As $a\sqq c$, $a$ is connected to $c$. It remains to show that 
$d$ is connected to $b$. This, however, follows straightforwardly from the assumption 
that $b$ is maximal, and therefore $d\in\wh b$, and so $d\sqq b$ in $\zA$.

\smallskip

{\sc Claim 2.}
$\zA$ can be assumed simple.

\smallskip

Suppose $\zA$  is not simple and $\al$ is its maximal congruence. Let $\zB=\zA\fac\al$. By the 
induction hypothesis $a^\al$ is connected to $b^\al$, that is, there is a sequence $a^\al=
\oa_0,\oa_1\zd \oa_k=b^\al$ such that $\oa_i\le\oa_{i+1}$ or $\oa_i\oa_{i+1}$ is a 
thin affine or majority edge in $\zB$. We will choose some $a_i\in\max(\oa_i)$, where $\oa_i$ is 
viewed as a subalgebra of $\zA$, such that $a_i$ is connected to $a_{i+1}$ in $\zA$. 
Set $a_1=a$.

Depending on whether $\oa_i\oa_{i+1}$ is a semilattice, affine, or majority edge, use 
Lemma~\ref{lem:sl-thick-thin}, \ref{lem:thin-affine}, or~\ref{lem:thin-majority} to
choose $a_{i+1}\in\oa_{i+1}$ such that $a_ia_{i+1}$ is a thin edge.

It remains to show that $a_k$ is connected to $b$. Since $b$ is maximal, it suffices 
to take elements $a',b'$ maximal in $b^\al$ and such that $a_k\sqq a'$ and 
$b\sqq b'$. Then $a_k$ is connected to $b'$ by the induction hypothesis, and $b'$ 
is connected to $b$ in $\zA$, as $b'\in\wh b$.

\smallskip

{\sc Claim 3.}
$\Sg{a,b}$ can be assumed equal to $\Sg{a',b'}$ for any $a'\in\wh a$, $b'\in\wh b$.

\smallskip

If $\Sg{a',b'}\subset \Sg{a,b}$ for some $a'\in\wh a$, $b'\in\wh b$, then by the 
induction hypothesis $a''$ is connected to $b''$ for some $a''\in\wh a$, $b''\in\wh b$. 
Therefore $a$ is also connected to $b$.

\smallskip

Let $\rel$ be the binary relation generated by $(a,b)$ and $(b,a)$. We consider two cases.

\smallskip

{\sc Case 1.}
$\rel$ is not the graph of a mapping, or, in other words, there is no automorphism of $\zA$ that 
maps $a$ to $b$ and $b$ to $a$.

\smallskip

Consider the tolerance $\relo=\tol_1\rel=\{(c,d)\in\zA^2\mid \text{there is $e$ with } (c,e),(d,e)\in\rel\}
$ induced by $\rel$ on $\zA$. Since $\zA$ is simple and $\rel$ is not the graph of a mapping, $\relo
$ is a connected  tolerance. There are again two options.

\smallskip

{\sc Subcase 1a.}
For every $e\in\zA$ the set $B_e=\{d\mid (d,e)\in\rel\}\ne\zA$.

\smallskip

There are $\vc ek\in\zA$ such that $a\in B_{e_1}$, $b\in B_{e_k}$, and 
$B_{e_i}\cap B_{e_{i+1}}\ne\eps$ for every $i\in[k-1]$. By 
Lemma~\ref{lem:going-maximal} the $e_i$'s can be chosen maximal, 
and therefore for every $i\in[k-1]$ we can choose $d_i\in B_i\cap B_{i+1}$ which is 
maximal in $\zA$. For each $i\in[k-1]$ choose $c,d\in\max(B_i)$ with $d_{i-1}\sqq c$ 
and $d_i\sqq d$. By the induction hypothesis $c$ is connected to $d$. Then clearly 
$d_{i-1}$ is connected to $c$, and, as $d_i$ is maximal and $d\in\wh{d_i}$, $d$ is 
connected to $d_i$.

\smallskip

{\sc Subcase 1b.}
There is $e\in\zA$ such that $\zA\tm\{e\}\sse\rel$.

\smallskip

By Lemma~\ref{lem:going-maximal} there are $a'\in\wh a$, $b'\in\wh b$, and a maximal 
element $e'$ such that $(a',e'),(b',e')\in\rel$. Since $\zA=\Sg{a',b'}$, we have 
$\zA\tm\{e'\}\sse\rel$. Thus, $e$ can be assumed maximal. We have therefore 
$(a,b),(a,e),(b,e),(b,a)\in\rel$. If both $\Sg{b,e}$ and $\Sg{e,a}$ are proper 
subalgebras of $\zA$, then proceed as in Subcase 1a. Otherwise suppose 
$\Sg{b,e}=\zA$. This means $\{a\}\tm\zA\sse\rel$, and, in particular, $(a,a)\in\rel$. 
Therefore there is a binary term operation $f$ such that $f(a,b)=f(b,a)=a$, that 
is $b\le a$. Since both elements are maximal, $b\in\wh a$, implying they are connected.

\smallskip

{\sc Case 2.}
$\rel$ is the graph of a mapping, or, in other words, there is an automorphism of 
$\zA$ that maps $a$ to $b$ and $b$ to $a$.

\smallskip

There are two cases to consider.

\smallskip

{\sc Subcase 2a.}
There is no nonmaximal element $c\le a'$ or $c\le b'$ for any $a'\in\wh a$, $b'\in \wh b$. 

\smallskip

If there is a maximal $d$ such that $d'\le d$ for some nonmaximal $d'\in\zA$ 
(that is, $\zA\ne\max(\zA)$), then 
by Case~1 and Subcase~2b $a$ is connected to $d$ and $d$ is connected to $b$.
Suppose all elements in $\zA$ are maximal. 
By Theorem~\ref{the:connectedness} and Proposition~\ref{pro:thin-thick} there are 
$a=a_1,a_2\zd a_k=b$ such that for any $i\in[k-1]$ either $a_ia_{i+1}$ is an 
affine or majority edge (not a thin edge), or $a_i\le a_{i+1}$, or $a_{i+1}\le a_i$. 
In the latter two cases $a_{i+1}\in\wh{a_i}$, and therefore there is a semilattice 
path from $a_i$ to $a_{i+1}$. We need to show that if $a_ia_{i+1}$ is an affine 
or majority edge then $a_i$ is connected 
to $a_{i+1}$. Let $\th$ be a congruence of $\zB=\Sg{a_i,a_{i+1}}$ witnessing that 
$a_ia_{i+1}$ is an affine or majority edge. By Lemmas~\ref{lem:thin-majority} 
and~\ref{lem:thin-affine} there is $b\in a_{i+1}^\th$ such that $a_ib$ is a
thin edge. Then take $c,d\in\max(\zB)$ such that $b\sqq c$ 
and $a_{i+1}\sqq d$. By the induction hypothesis $c$ is connected 
to $d$. Finally, as all elements in $\zA$ are maximal, $d$ is connected with 
$a_{i+1}$ in $\zA$ with a semilattice path.

\smallskip

{\sc Subcase 2b.}
There is a nonmaximal element $c$ with $c\le a'$ or $c\le b'$ for some $a'\in \wh a$ 
or $b'\in\wh b$. In particular, this happens whenever there is a nonmaximal element 
$c$ with $c\sqq a$ or $c\sqq b$.

\smallskip

Note first that we may assume that, for any $b'\in\wh b$, there is an 
automorphism that sends $b'$ to $a$ and $a$ to $b'$, as otherwise we are in the
conditions of Case~1. Recall that we also assume  
$\Sg{a,b'}=\zA$. Because of this and the automorphism swapping $a$ and $b$, 
without loss of generality we may assume that there is nonmaximal $c\le b$. 
Consider $\Sg{a,c}$.

If $\Sg{a,c}=\zA$, consider the relation $\relo$ generated by $(a,c),(c,a)$. Since $c$ 
is not maximal, $\relo$ cannot be the graph of an automorphism. Therefore $\relo$ 
induces a nontrivial tolerance on $\zA$ that, in particular, connects $a$ and $b$, and 
we either complete as in Case~1, or show that $c\le a$, which is a contradiction, as 
$\Sg{a,c}=\{a,c\}$ in this case.

If $\zB=\Sg{a,c}\ne\zA$, take $d\in\max(\zB)$ and such that $c\sqq d$. By the 
induction hypothesis $a$ is connected to $d$. Now let $d\sqq d'$ such that 
$d'\in\max(\zA)$. It remains to show that $d'$ is connected to $b$. If 
$\Sg{d',b}\ne\zA$, the result follows by the induction hypothesis. If there 
is no automorphism that swaps $d'$ and $b$, we argue as in Case~1. So, let 
$\Sg{d',b}=\zA$ and there is an automorphism swapping $d'$ and $b$.

Elements $a,b$ are said to be \emph{v-connected} if there is $c\in\Sg{a,b}$ such 
that $c\sqq a$ and $c\sqq b$. The result follows from the next statement.

\smallskip

{\sc Claim 4.}
If $a,b$ are v-connected and there is an automorphism of $\zA$ that swaps $a$ 
and $b$, then they are connected.

\smallskip

Let $c\sqq d$. The \emph{s-distance} from $c$ to $d$ is the length of the 
shortest semilattice path from $c$ to $d$. The s-distance from $c$ to $\wh d$ is 
the shortest s-distance from $c$ to an element from $\wh d$. The \emph{depth} 
of an element $c$ is the greatest s-distance to a maximal  
component, denoted $\dep(c)$. We prove the Claim by induction on the size of 
$\Sg{a,b}$ and $\dep(c)$, provided $c\sqq a$, $c\sqq b$.

If $c\le a$, $c\le b$, in particular, if $\dep(c)=1$, then there is a binary term 
operation $f$ such that $f(a,b)=c$. Let $d=f(b,a)$. Since there is an automorphism 
swapping $a$ and $b$, $d\le a$ and $d\le b$. 
Set $g(x,y,z)=(f(y,x)\cdot f(y,z))\cdot f(x,z)$. We have
\begin{eqnarray*}
g(a,a,b) &=& (ac)c=a,\\
g(a,b,a) &=& (dd)a=a\\
g(b,a,a) &=& (ca)d=a.
\end{eqnarray*}
Since $a$ and $b$ are automorphic, $g$ is a majority operation on $\{a,b\}$. 
Therefore, $a$ and $b$ are connected.

Suppose the Claim is proved for all algebras and pairs of elements v-connected 
through an element of depth less than $\dep(c)$. Let $c=a_1\le a_2\le a_k=a$ 
and $c=b_1\le b_2\le\ldots\le b_m=b$, and $k>2$ or $m>2$. We may assume 
$b_{m-1}$ is a nonmaximal element and consider $B=\Sg{a,b_{m-1}}$. As in 
Case~1a if $B=\zA$ then there is no automorphism swapping $a$ 
and $b_{m-1}$. Then we consider relation $\relo$ generated by 
$(a,b_{m-1}),(b_{m-1},a)$. We can show that $a$ and $b$ are connected in this case.  

Suppose $B\ne\zA$. Let $d\in\max(B)$ be such that $b_{m-1}\prec d$, let also 
$e\in\max(\zA)$ be such that $d\prec e$. By the induction hypothesis $a$ is connected 
to $d$, and therefore to $e$. Also, $e$ and $b$ are v-connected through $b_{m-1}$, 
and $\dep(b_{m-1})<\dep(c)$. If $\Sg{e,b}\ne\zA$, we conclude by the induction 
hypothesis of the proposition. If $\Sg{e,b}=\zA$, by the induction hypothesis of 
Claim~4, $e$ is connected to $b$.
\end{proof}

\bibliographystyle{plain}

\end{document}